\newcommand{\Og}{\Omega}
\newcommand{\nn}{\nonumber}
\newcommand{\gm}{\gamma}
\newcommand{\be}{\begin{equation}}
\newcommand{\ee}{\end{equation}}
\newcommand{\ba}{\begin{array}}
\newcommand{\ea}{\end{array}}
\newcommand{\bea}{\begin{eqnarray}}
\newcommand{\eea}{\end{eqnarray}}
\newcommand{\beas}{\begin{eqnarray*}}
\newcommand{\eeas}{\end{eqnarray*}}
\newtheorem{remark}{Remark}[section]
 \newcommand{\bx}{{\bf x} }
 \newcommand{\by}{{\bf y} }
 \DeclareMathOperator*{\argmin}{arg\,min}
\renewcommand{\ldots}{\dotsc}
\title{Fundamental gaps of the Gross-Pitaevskii equation with repulsive interaction\thanks{This work was supported by the Ministry of Education of Singapore grant R-146-000-223-112.}}
\author{Weizhu Bao\thanks{Department of Mathematics, National
University of Singapore, Singapore 119076 ({\it
matbaowz@nus.edu.sg}, URL: http://www.math.nus.edu.sg/\~{}bao/).
}
\and Xinran Ruan\thanks{Department of Mathematics, National
University of Singapore, Singapore 119076 ({\it a0103426@u.nus.edu}).} }
\date{}
\begin{document}

\maketitle

\begin{abstract}
We study asymptotically and numerically the fundamental gaps (i.e. the difference between
the first excited state and the ground state) in energy and chemical
potential of the  Gross-Pitaevskii equation (GPE) -- nonlinear Schr\"{o}dinger
equation  with cubic nonlinearity -- with repulsive interaction under different
trapping potentials including box potential and harmonic potential.
Based on our asymptotic and numerical results, we formulate a gap conjecture
on the fundamental gaps in energy and chemical potential of the GPE on bounded
domains with the homogeneous Dirichlet boundary condition, and in the whole space
with a convex trapping potential growing at least quadratically in the far field.
We then extend these results to the GPE on bounded
domains with either the homogeneous Neumann boundary condition
or periodic boundary condition.
\end{abstract}

\begin{keywords} Gross-Pitaevskii equation, fundamental gap,
 ground state, first excited state, energy asymptotics, repulsive interaction
\end{keywords}

\begin{AMS} 35B40, 35P30, 35Q55, 65N25.
\end{AMS}

\pagestyle{myheadings} \markboth{Weizhu Bao and Xinran Ruan}
{Fundamental gaps of the GPE}

\section{Introduction}
\label{s1} \setcounter{equation}{0}
The time-independent Schr\"{o}dinger equation (in dimensionless form by taking $\hbar=m=1$ with
$m$ the mass of the particle) \cite{Sch,Atk,Hook,Quan_chem}
\begin{equation}\label{eq:N_body}
H\Phi: =\biggl[\sum_{j=1}^N\left(-\frac{1}{2}\Delta_{j}+V({\bf r}_j)\right)
+\sum_{1\le j<k\le N}V_{\rm{int}}({\bf r}_j-{\bf r}_k)\biggr]\Phi=E\,\Phi,
\end{equation}
has been widely used in quantum physics and chemistry to
mathematically predict the property of a quantum system with $N$ particles (usually atoms,
molecules, and subatomic particles whether free, bound or localized). 
Here ${\bf r}_j\in{\mathbb R}^3$  is the
spatial coordinate of the $j$-th particle, $\Delta_{j}$ is the Laplacian operator with
respect to the spatial coordinate ${\bf r}_j$ for
$j=1,2,\ldots,N$, $\Phi:=\Phi({\bf r}_1,\ldots,{\bf r}_N)$
is the complex-valued wave function of the quantum system,
$V({\bf r})$ (for ${\bf r}\in{\mathbb R}^3$) is a given real-valued potential,
$V_{\rm{int}}({\bf r})$ is a given real-valued interaction kernel for two-body interaction satisfying
$V_{\rm{int}}({\bf r})=V_{\rm{int}}(-{\bf r})$
and $H$ is the Hamiltonian operator. When the wave function
is normalized as $\int_{{\mathbb R}^{3N}} |\Phi|^2d{\bf r}_1\ldots d{\bf r}_N=1$,
$E$ is the total energy of the quantum system with respect to the wave function
$\Phi$. The time-independent Schr\"{o}dinger equation
(\ref{eq:N_body}), also an eigenvalue problem in mathematics, predicts that wave function can form
stationary states including ground and excited states
\cite{Sch,Atk,Hook,Quan_chem}.
Finding the ground state and its energy, as well as
the energy gap (or band gap)
between the ground and first excited states via Eq. (\ref{eq:N_body})
has become a fundamental and  highly challenging problem in computational
quantum physics and chemistry, as well as material simulation and design.

By setting $N=1$ in Eq. (\ref{eq:N_body}) and performing
a dimension reduction from three dimensions (3D)
to two dimensions (2D) and one dimension (1D) under proper assumptions on the
potential $V({\bf r})$ such that
separation of (the spatial) variables for the wave function is valid
\cite{Bao,Dalfovo1,Pitaevskii,BJP,Bao2013},
one can get the $d$-dimensional ($d=3,2,1$) time-independent Schr\"{o}dinger equation with complex-valued
wave function $\phi:=\phi(\bx)$, which has been widely used in
the physics literature \cite{Sch,Atk,Hook,Quan_chem}
\begin{equation}\label{eq:1_body}
H\phi: =\left(-\frac{1}{2}\Delta_\bx+V({\bf x})\right)\phi({\bf x})=E\,\phi({\bf x}), \qquad {\bf x}=(x_1,\ldots,x_d)^T\in\Omega\subseteq {\mathbb R}^d.
\end{equation}
Using the rescaling formulas ${\bf y}=\sqrt{2}\,{\bf x}\in {\mathbb R}^d$ 
and $\varphi({\bf y})
=2^{-d/4}\phi({\bf x})$,
one can derive the following
time-independent Schr\"{o}dinger equation 
\cite{gap,intro_gap,gap2,gap0}
\be\label{eigen}
L\varphi:=\left[-\Delta_\by+W(\mathbf{y})\right]\varphi(\by)=E\,\varphi(\by), \qquad \by\in U\subseteq {\mathbb R}^d,
\ee
where  $W(\by)=V({\bf x})=V(\by/\sqrt{2})$, $U=\{\mathbf{y}\,|\,\mathbf{y}/\sqrt{2}\in \Og\}$
and the operator $L:=-\Delta_\by+W(\mathbf{y})$ is called
the Schr\"{o}dinger operator \cite{intro_gap}. If $U$ is bounded, then we require the homogeneous
Dirichlet boundary condition (BC) $\varphi(\by)|_{\partial U}=0$ to be imposed.
In this case, we can also simply define $W(\by)=+\infty$
for $\by$ outside $U$ and Eq. (\ref{eigen}) can be defined in the whole
space without BC.
If $W(\by)$ is bounded below in $U$, i.e. $\inf_{\by\in U} W(\by)>-\infty$,
without loss of generality, we can always assume that $W(\by)\ge0$ for $\by\in U$
when we are interested in the ground and excited states and the energy gap.
Under proper assumptions on the potential $W(\by)$, the eigenvalues $E_g,\, E_1,\, E_2,\,\cdots$ of
the Sturm-Liouville eigenvalue problem  (\ref{eigen}) under the normalization condition \cite{intro_gap}
\be
\label{norm}
\|\varphi\|_2^2:=\int_{U} |\varphi(\by)|^2d\by=1 \quad \Longleftrightarrow \quad
\|\phi\|_2^2:=\int_{\Og} |\phi(\bx)|^2d\bx=1,
\ee
are real and can be ordered such that $0<E_g<E_1\le E_2\le \cdots$ with corresponding
eigenfunctions (or stationary states) $\varphi_g(\by),\, \varphi_1(\by),\, \varphi_2(\by), \, \cdots$.
Then $\varphi_g(\by)$ and $\varphi_1(\by)$ are called the ground state and the first excited state,
respectively. $\delta_0:=E_1-E_g>0$ is usually called the {\bf fundamental gap}
in the literature \cite{intro_gap,gap2,gap0}.
Assuming that $U$ is  a bounded convex  domain and the potential $W(\by)\in C(U)$,  based on
results for special cases, the {\bf gap conjecture} was
formulated in the literature \cite{intro_gap,gap2,gap0} as:
\be\label{gap679}
\delta_0=E_1-E_g \ge \frac{3\pi^2}{D_{_U}^2}, \qquad \hbox{with}\quad D_{U}:=\sup_{\mathbf{y},\mathbf{z}\in U}|\mathbf{y}-\mathbf{z}|.
\ee
The gap conjecture is sharp when $d=1$, $U=(0,L)$ with $0<L\in {\mathbb R}$
and $W(y)\equiv 0$ for $y\in U$ \cite{intro_gap}. Recently, by the use of the gradient flow and geometric
analysis and assuming that $W(\mathbf{y})\in C(U)$ is convex,
Andrews and Clutterbuck proved the gap conjecture \cite{gap}.
In addition, they showed that if $U={\mathbb R}^d$ and the potential $W(\by)$ satisfies
$D^2W(\by)\ge\gm_w^2I_d$ for $\by\in {\mathbb R}^d$ with $\gm_w>0$, where $I_d$ is the identity matrix in
$d$-dimensions, the  fundamental gap described by Eq. (\ref{eigen}) under the condition (\ref{norm}) satisfies
$\delta_0:=E_1-E_g\ge\sqrt{2}\gm_w$ \cite{gap}.

In this paper, we will consider the dimensionless
time-independent Gross-Pitaevskii equation (GPE)
in $d$-dimensions ($d=3,2,1$) \cite{Bao,Dalfovo1,Pitaevskii,BJP,Bao2013}
\begin{equation}\label{eq:eig}
\left[-\frac{1}{2}\Delta+V(\mathbf{x})+\beta|\phi(\bx)|^{2}\right]\phi(\bx)=
\mu\phi(\mathbf{x}),\qquad \mathbf{x}\in\Omega\subseteq \mathbb{R}^d,
\end{equation}
where $\phi:=\phi(\bx)$ is the complex-valued wave function (or eigenfunction) normalized via (\ref{norm}),
$V:=V(\bx)$ is a given real-valued potential,
$\beta\ge0$ is a dimensionless constant describing
the repulsive (defocussing) interaction strength,
and the eigenvalue (or chemical potential in the physics literature)
$\mu:=\mu(\phi)$ is defined as \cite{Bao,Dalfovo1,Pitaevskii,Bao2013}
\begin{equation}\label{def:mu}
\mu(\phi)=E(\phi)+\frac{\beta}{2}\int_{\Omega}|\phi(\bx)|^{4}d\mathbf{x},
\end{equation}
with the energy $E:=E(\phi)$ defined as \cite{Bao2013}
\begin{equation}\label{def:E}
E(\phi)=\int_{\Omega}\left[\frac{1}{2}|\nabla\phi(\bx)|^2+V(\mathbf{x})|\phi(\bx)|^2+
\frac{\beta}{2}|\phi(\bx)|^{4}\right]d\mathbf{x}.
\end{equation}
Again, if $\Omega$ is bounded, the homogeneous
Dirichlet BC, i.e. $\phi(\bx)|_{\partial \Omega}=0$, needs to be imposed.
Thus, the time-independent GPE (\ref{eq:eig}) is a nonlinear eigenvalue problem under
the constraint (\ref{norm}).
It is a mean field model arising from  Bose-Einstein condensates
(BECs) \cite{Anderson,Dalfovo1,GPE_BEC1,Bao} that can be obtained from
the Schr\"{o}dinger equation (\ref{eq:N_body}) via the Hartree ansatz and mean field approximation \cite{Bao2013,MFT,TGmath,Pitaevskii}.
When $\beta=0$, it collapses to the time-independent Schr\"{o}dinger equation
(\ref{eq:1_body}).
It is worth mentioning that if the domain $U$ in (\ref{eigen}) is bounded,
the domain $\Og$ in (\ref{eq:1_body}) (or (\ref{eq:eig})) can be
defined as $\Og=\{\mathbf{x}\,|\,\sqrt{2}\mathbf{x}\in U\}$ via the re-scaling ${\bf y}=\sqrt{2}\,{\bf x}$.
Thus the diameter for the domain $\Og$ becomes $D:=D_{_\Og}=D_{_U}/\sqrt{2}$,
and the lower bound in the fundamental gap (\ref{gap679}) for the Schr\"{o}dinger equation (\ref{eq:1_body})
becomes $\frac{3\pi^2}{D_{_U}^2}=\frac{3\pi^2}{2D^2}$.

The ground state of the GPE (\ref{eq:eig}) is usually defined as the minimizer
of the non-convex minimization problem (or constrained minimization problem) \cite{Bao,Dalfovo1,GPE_BEC1,Bao2013}
\begin{equation}\label{def:ground}
\phi_g=\argmin_{\phi\in{S}}E(\phi),
\end{equation}
where $S=\{\phi\, |\, \|\phi\|_2^2:=\int_{\Omega} |\phi(\bx)|^2d\bx =1,\ E(\phi)<\infty,
\ \phi|_{\partial\Omega}=0 \ \hbox{if} \ \Omega
\ \hbox{is bounded}\}$.
The ground state can be chosen as nonnegative
$|\phi_g|$, i.e. $\phi_g=|\phi_g| e^{i\theta}$ for some constant $\theta\in {\mathbb R}$
and $i=\sqrt{-1}\,$.
Moreover, the nonnegative ground state $|\phi_g|$ is unique \cite{Lieb,Bao2013}.
Thus, from now on, we refer to the ground state as the nonnegative one.
It is easy to see that the ground state $\phi_g$ satisfies the
time-independent GPE (\ref{eq:eig}) and the constraint (\ref{norm}). Hence
it is an eigenfunction (or stationary state) of (\ref{eq:eig}) with the least energy.
Any other eigenfunctions of the GPE (\ref{eq:eig}) under the constraint (\ref{norm})
whose energies are larger than that of the ground state are usually called the excited states
in the physics literature \cite{Dalfovo1,Pitaevskii,Bao2013}. Specifically,
the excited state with the least energy among all excited states is usually
called the first excited state, which is denoted as $\phi_1$.

For the GPE (\ref{eq:eig}),
the ground state has been obtained asymptotically in weakly
and strongly repulsive interaction regimes, i.e. $0\le \beta\ll1$ and  $\beta\gg1$, respectively,
for several different trapping potentials \cite{LZBao}. 
In fact, by ordering all the eigenfunctions
of the GPE with a repulsive interaction and a confinement potential,
i.e. $\beta\ge0$ and $\lim_{|\bx|\to+\infty}V(\mathbf{x})=+\infty$,
under the constraint (\ref{norm}) according to their energies with
$\phi_g^{\beta},\, \phi_1^{\beta},\,\phi_2^{\beta},\ldots$ satisfying $E_g(\beta):=E(\phi_g^{\beta})<E_1(\beta):=E(\phi_1^{\beta})\le{E}(\phi_2^{\beta})\le\ldots$,
it can be shown that $\mu_g(\beta):=\mu(\phi_g^{\beta})<\mu_1(\beta):=\mu(\phi_1^{\beta})$
\cite{Can1}, and thus
$\phi_1^\beta$ is usually called the first excited state.
We define the {\bf fundamental gaps} in energy and chemical potential of the
time-independent GPE
under the constraint (\ref{norm}) as 
\be \label{gapp21}
\begin{split}
&\delta_E(\beta):={E}(\phi_1^{\beta})-E(\phi_g^{\beta})>0,
\quad\delta_\mu(\beta):=\mu(\phi_1^{\beta})-\mu(\phi_g^{\beta})>0, \quad \beta\ge0,\qquad \\
&\delta_E^\infty:=\inf_{\beta\ge0} \delta_E(\beta), \qquad
\delta_\mu^\infty:=\inf_{\beta\ge0} \delta_\mu(\beta).
\end{split}
\ee
In general, the first excited state $\phi_1^\beta$ is not unique.
Since we are mainly interested in its energy and chemical potential as well
as the fundamental gaps, it does not matter which first excited state is taken in our analysis and simulation below.
The main purpose of this paper is to study asymptotically and numerically
the fundamental gaps of the GPE with different trapping potentials and
to formulate a gap conjecture for the GPE.




The rest of this paper is organized as follows. In Section \ref{box}, we  study asymptotically and numerically the fundamental gaps of GPE on bounded domains
with homogeneous Dirichlet BC. In Section \ref{har}, we obtain
results for GPE in the whole space with a
confinement potential. Extension to GPE on bounded domains with either
periodic or homogeneous Neumann BC are presented
in Section \ref{asym:neumann_periodic}.  Finally, some conclusions are drawn in Section \ref{conclusion}. In  order to distinguish two different cases, i.e. nondegenerate and degenerate cases, we define the eigenspace of
(\ref{eq:1_body}) corresponding to the eigenvalue $E_1$ (the second smallest eigenvalue) as
\be\label{def:eigspace}
W_1=\{ \phi(\bx): \ \Omega \to {\mathbb C} \,|\, H\phi=E_1\phi, \ \phi|_{\partial\Omega}=0 \ \hbox{if} \ \Omega
\ \hbox{is bounded}\}.
\ee
Then the dimension of $W_1$, i.e.  $\dim(W_1)=1$,
is referred to the nondegenerate case, and resp.,
$\dim(W_1)\ge2$ is referred to the degenerate case.
Denote $\Omega_0=\prod_{j=1}^d(0,L_j)$ satisfying $L_1\ge L_2 \ge \dots \ge L_d>0$ and $A_0=1/\sqrt{\prod_{j=1}^dL_j}$\;.

\section{On bounded domains with homogeneous Dirichlet BC}\label{box}
\setcounter{equation}{0}
\setcounter{figure}{0}
In this section, we obtain fundamental gaps of
the GPE \eqref{eq:eig} on a bounded domain $\Omega$ with homogeneous Dirichlet BC
asymptotically under a box potential and numerically under a general potential. Based on the results, we formulate a novel gap conjecture.

\subsection{Nondegenerate case, i.e. $\dim(W_1)=1$}\label{sec:box_asym}
We first consider a special case by taking  $\Omega=\Omega_0$ satisfying $d=1$ or $L_1> L_2$ when $d\ge2$ and $V(\bx)\equiv0$ for
$\bx\in \Omega$ in \eqref{eq:eig}.
For simplicity, we define
\be\label{boxct11} A_1=\frac{2}{L_1}\left(\frac{25}{9L_1}+\frac{2}{9}\sum_{j=1}^d\frac{1}{L_j}\right),\quad A_2=\frac{\pi^2}{2}\sum_{j=1}^d\frac{1}{L_j^2}.
\ee

In this scenario, when $\beta=0$, all the eigenfunctions can be
obtained via the sine series \cite{BaoL,LZBao}. Thus the ground state $\phi_g^0(\bx)$ and the first excited state
$\phi_1^0(\bx)$ can be given explicitly as \cite{BaoL,LZBao} for $\bx\in\bar{\Omega}$
\be\label{box00}
\phi_g^0(\mathbf{x})=2^{\frac{d}{2}}A_0\prod_{j=1}^d\sin\left(\frac{\pi x_j}{L_j}\right),
\quad  \phi_1^{0}(\mathbf{x})=2^{\frac{d}{2}}A_0\sin\left(\frac{2\pi x_1}{L_1}\right)
\prod_{j=2}^d\sin\left(\frac{\pi x_j}{L_j}\right).
\ee

\begin{lemma}\label{box:ground_weak}
In the weakly repulsive interaction regime, i.e. $0<\beta\ll1$, we have
\begin{align} \label{boxsmall}
E_g(\beta)&=A_2+\frac{3^dA_0^2}{2^{d+1}}\beta+o(\beta),\quad
\mu_g(\beta)=A_2+\frac{3^dA_0^2}{2^{d}}\beta+o(\beta),\quad 0\le \beta\ll1,\\
\label{boxsmall1}
E_1(\beta)&=\frac{3\pi^2}{2L_1^2}+A_2+\frac{3^dA_0^2}{2^{d+1}}\beta+o(\beta), \quad
\mu_1(\beta)=\frac{3\pi^2}{2L_1^2}+A_2+\frac{3^dA_0^2}{2^{d}}\beta+o(\beta).
\end{align}
\end{lemma}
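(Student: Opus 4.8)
The plan is to treat the quartic term $\frac{\beta}{2}|\phi|^4$ as a small perturbation of the free (kinetic) problem and to extract the leading coefficients by first-order perturbation theory. At $\beta=0$ the problem reduces to the linear eigenvalue problem on the box with Dirichlet BC, whose ground state $\phi_g^0$ and first excited state $\phi_1^0$ are given explicitly in \eqref{box00}; the corresponding eigenvalues are $E_g(0)=\frac{\pi^2}{2}\sum_{j=1}^d 1/L_j^2=A_2$ and $E_1(0)=A_2+\frac{3\pi^2}{2L_1^2}$, the gap arising from raising the first mode index from $1$ to $2$. These reproduce the $\beta=0$ values in \eqref{boxsmall}--\eqref{boxsmall1}, so it remains only to compute the $O(\beta)$ coefficients.

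First I would write the energy as $E(\phi)=E_0(\phi)+\frac{\beta}{2}\int_\Omega|\phi|^4\,d\bx$ with $E_0(\phi)=\frac12\int_\Omega|\nabla\phi|^2\,d\bx$ the kinetic part (recall $V\equiv0$ here). For the ground state, which minimizes $E$ over the constraint set $S$, the envelope (Hellmann--Feynman) argument gives $\frac{d}{d\beta}E_g(\beta)\big|_{\beta=0}=\frac12\int_\Omega|\phi_g^0|^4\,d\bx$, since the contribution from the $\beta$-derivative of the minimizer drops out at a critical point of $E_0$ on the constraint manifold. The same argument applies to the first excited state: along the branch $\phi_1^\beta$ of constrained critical points the variation of the state is tangent to the normalization sphere while the gradient of $E$ is normal to it, so $\frac{d}{d\beta}E_1(\beta)\big|_{\beta=0}=\frac12\int_\Omega|\phi_1^0|^4\,d\bx$.

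The coefficients then reduce to two integrals, which I would evaluate using the elementary identity $\int_0^{L}\sin^4(k\pi x/L)\,dx=\tfrac{3L}{8}$, valid for every positive integer $k$, so that the mode index $k=2$ in the first factor of $\phi_1^0$ makes no difference. Multiplying over the $d$ directions and using $A_0^2=1/\prod_{j=1}^d L_j$ gives $\int_\Omega|\phi_g^0|^4\,d\bx=\int_\Omega|\phi_1^0|^4\,d\bx=\frac{3^dA_0^2}{2^d}$; this common value is exactly why the $O(\beta)$ coefficients agree for the two states. Substituting yields the energy expansions in \eqref{boxsmall}--\eqref{boxsmall1}. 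The chemical-potential expansions then follow immediately from $\mu(\phi)=E(\phi)+\frac{\beta}{2}\int_\Omega|\phi|^4\,d\bx$ in \eqref{def:mu}: the extra term contributes $\frac{\beta}{2}\cdot\frac{3^dA_0^2}{2^d}+o(\beta)$ when evaluated on $\phi_g^\beta$ (resp.\ $\phi_1^\beta$), doubling the $O(\beta)$ coefficient from $\frac{3^dA_0^2}{2^{d+1}}$ to $\frac{3^dA_0^2}{2^{d}}$.

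The main obstacle is not the computation but the rigorous justification of the first-order expansion with an $o(\beta)$ remainder, i.e.\ establishing that $\beta\mapsto(\phi^\beta,\mu^\beta)$ is differentiable at $\beta=0$ with $\phi^\beta\to\phi^0$. For the ground state this is secure because the nonnegative minimizer is unique \cite{Lieb,Bao2013} and the envelope theorem requires only one-sided differentiability of the value function. For the first excited state the delicate point is precisely the nondegeneracy hypothesis $\dim(W_1)=1$: one applies the implicit function theorem to the nonlinear map $F(\phi,\mu,\beta)=[-\frac12\Delta+\beta|\phi|^2-\mu]\phi$ together with the constraint \eqref{norm}, and invertibility of the linearization at $(\phi_1^0,\mu_1^0,0)$ holds exactly when $E_1$ is a simple eigenvalue. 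This produces a smooth branch $\phi_1^\beta$ emanating from $\phi_1^0$ and legitimizes the Hellmann--Feynman computation; in the degenerate case the branch could split and one would first have to select the eigenstate that diagonalizes the quartic perturbation.
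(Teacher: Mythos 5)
Your proposal is correct and follows essentially the same route as the paper: both approximate $\phi_g^\beta,\phi_1^\beta$ by the explicit $\beta=0$ eigenfunctions \eqref{box00} and extract the $O(\beta)$ coefficients from the quartic integral $\int_\Omega|\phi^0|^4\,d\bx=3^dA_0^2/2^d$, which is exactly what the paper's omitted ``detailed computation'' amounts to. Your Hellmann--Feynman/envelope argument and the implicit-function-theorem discussion justify the $o(\beta)$ remainder that the paper takes for granted, but this is a refinement of, not a departure from, the paper's argument.
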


\begin{proof}
When $0<\beta\ll1$, we can approximate the ground
state $\phi_g^\beta(\bx)$ and the first excited state $\phi_1^\beta(\bx)$ by $\phi_g^0(\bx)$ and
$\phi_1^0(\bx)$, respectively. Thus we have
\be\label{box01}
\phi_g^\beta(\mathbf{x})\approx \phi_g^0(\mathbf{x}),\qquad
\phi_1^{\beta}(\mathbf{x})\approx \phi_1^{0}(\mathbf{x}),
\qquad \bx\in\bar{\Omega}.
\ee
Plugging \eqref{box01} into \eqref{def:mu} and \eqref{def:E}, after a detailed computation which is omitted here for brevity, we can obtain \eqref{boxsmall}-\eqref{boxsmall1}.
\end{proof}

Lemma \ref{box:ground_weak} implies that $\delta_E(\beta)=E_1(\beta)-E_g(\beta)\approx \frac{3\pi^2}{2L_1^2}$
and $\delta_\mu(\beta)=\mu_1(\beta)-\mu_g(\beta)\approx \frac{3\pi^2}{2L_1^2}$ for $0\le \beta\ll1$,  which are independent of $\beta$.
In order to get the dependence on $\beta$,
 we need to find more accurate approximations of  $\phi_g^\beta$
and  $\phi_1^\beta$ and can obtain the following asymptotics of the fundamental gaps.

\begin{lemma}\label{asym:boxbt}
When $0\le \beta\ll1$, we have
\be \label{gapbox14}
\delta_E(\beta)=
\frac{3\pi^2}{2L_1^2}+G_d^{(1)}\beta^{2}+o(\beta^2),\qquad
\delta_{\mu}(\beta)=\frac{3\pi^2}{2L_1^2}+G_d^{(2)}\beta^{2}+o(\beta^2),
\ee
where
\[G_d^{(1)}=\left\{\ba{l}
\frac{3}{64\pi^2},\\
\frac{A_0^4}{64\pi^2}\left(\frac{27}{4}L_1^2+\frac{3}{A_6(A_6L_1^2+3)}\right),\\
\frac{1}{256\pi^2}(C_{1,1,1}-C_{2,1,1}),\\
\ea\right.  G_d^{(2)}=\left\{\ba{ll}
\frac{9}{64\pi^2}, &d=1,\\
\frac{3A_0^4}{64\pi^2}\left(\frac{27}{4}L_1^2+\frac{3}{A_6(A_6L_1^2+3)}\right), &d=2,\\
\frac{3}{256\pi^2}(C_{1,1,1}-C_{2,1,1}), &d=3,\\
\ea\right.
\]
with
\be
A_6=\sum_{j=1}^d\frac{1}{L_j^2},
\quad
C_{k_1,k_2,k_3}=A_0^4\left(81\sum_{j=1}^3\frac{L_j^2}{k_j^2}+9\sum_{i<j}
\frac{1}{\frac{k_i^2}{L_i^2}+\frac{k_j^2}{L_j^2}}+\frac{1}{\sum_{j=1}^3\frac{k_j^2}{L_j^2}}\right).
\ee
\end{lemma}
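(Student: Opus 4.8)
The plan is to carry the Rayleigh--Schr\"odinger perturbation expansion of both the ground state and the first excited state of \eqref{eq:eig} one order beyond Lemma \ref{box:ground_weak}. Since Lemma \ref{box:ground_weak} shows that the $O(\beta)$ corrections to $E_g,\mu_g$ and to $E_1,\mu_1$ coincide, the gaps are $\beta$-independent at first order and the leading nontrivial dependence sits at $O(\beta^2)$; hence I must compute the second-order coefficients. Writing $\phi_g^\beta=\phi_g^0+\beta\,\phi_g^{(1)}+o(\beta)$ and $\mu_g(\beta)=\mu_g^{(0)}+\beta\mu_g^{(1)}+\beta^2\mu_g^{(2)}+o(\beta^2)$ (and analogously for the excited state, with the base functions of \eqref{box00}), I substitute into \eqref{eq:eig}. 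At order $\beta^0$ the base functions solve the linear eigenproblem; at order $\beta^1$ the correction satisfies $\mathcal{L}_g\phi_g^{(1)}=\mu_g^{(1)}\phi_g^0-(\phi_g^0)^3$ with $\mathcal{L}_g:=-\tfrac12\Delta-\mu_g^{(0)}$, and imposing the Fredholm solvability condition (orthogonality of the right-hand side to $\ker\mathcal{L}_g=\mathrm{span}\{\phi_g^0\}$) recovers $\mu_g^{(1)}=\int(\phi_g^0)^4$, consistent with \eqref{boxsmall}.

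The key simplification is to avoid computing the second-order wavefunction entirely. By the Hellmann--Feynman (envelope) identity for the constrained functional \eqref{def:E}, namely $\tfrac{d}{d\beta}E_g(\beta)=\tfrac12\int(\phi_g^\beta)^4\,d\mathbf{x}$ (the implicit variation drops out because $\phi_g^\beta$ is stationary and the normalization \eqref{norm} is preserved), integrating in $\beta$ gives the $\beta^2$-coefficient of the energy as $E_g^{(2)}=\int(\phi_g^0)^3\phi_g^{(1)}$, which needs only the first-order correction. The same identity applied through \eqref{def:mu} yields $\mu_g^{(2)}=3E_g^{(2)}$, and identically $\mu_1^{(2)}=3E_1^{(2)}$; this already explains the relation $G_d^{(2)}=3G_d^{(1)}$ visible in the statement, so it suffices to compute $G_d^{(1)}=E_1^{(2)}-E_g^{(2)}$.

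To evaluate $E_g^{(2)}=\int(\phi_g^0)^3\phi_g^{(1)}=-\sum_{\mathbf{k}}c_{\mathbf{k}}^2/(\lambda_{\mathbf{k}}-\mu_g^{(0)})$ I would expand the cubic $(\phi_g^0)^3$ in the Dirichlet sine eigenbasis using $\sin^3\theta=\tfrac34\sin\theta-\tfrac14\sin3\theta$ in each coordinate. This turns $(\phi_g^0)^3$ into a finite combination of $2^d$ product modes $\mathbf{k}$ with $k_j\in\{1,3\}$ (resp. $k_1\in\{2,6\}$, $k_j\in\{1,3\}$ for the first excited state), so that $\phi_g^{(1)}$ is obtained termwise by dividing by the gaps $\lambda_{\mathbf{k}}-\mu_g^{(0)}=\tfrac{\pi^2}{2}\sum_j(k_j^2-1)/L_j^2$, the resonant mode being absorbed into $\mu_g^{(1)}$. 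Carrying out these finite sums and subtracting the two states produces exactly the constants $G_d^{(1)}$ (and hence $G_d^{(2)}=3G_d^{(1)}$), with the $d=3$ sums repackaged into the quantities $C_{k_1,k_2,k_3}$ indexed by the base modes $(1,1,1)$ and $(2,1,1)$.

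The main obstacle is twofold. Analytically, I must justify that the expansion is legitimate to second order: in the nondegenerate case $\dim(W_1)=1$ both relevant linearized operators have one-dimensional kernels, so the Fredholm alternative applies and the first-excited branch is a smooth isolated family in $\beta$ even though it is only a saddle, not a minimizer, of \eqref{def:E}; this is precisely where nondegeneracy enters. Computationally, the labor is the bookkeeping of the finite mode sums and the algebra that collapses them into the closed forms $A_6$ and $C_{k_1,k_2,k_3}$, together with verifying the cancellations (for instance $\tfrac{1}{A_6}-\tfrac{L_1^2}{A_6L_1^2+3}=\tfrac{3}{A_6(A_6L_1^2+3)}$ in $d=2$) that yield the stated coefficients.
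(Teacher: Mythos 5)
Your proposal is correct, and it reproduces the lemma by a route that coincides with the paper's up to the first-order correction but then genuinely diverges. Your first-order equation $\mathcal{L}_g\phi_g^{(1)}=\mu_g^{(1)}\phi_g^0-(\phi_g^0)^3$ is exactly the paper's (\ref{vph85}) multiplied by $-2$ (with $\mu_g^{(0)}=A_2$ and $\mu_g^{(1)}=3^dA_0^2/2^d$), and both you and the paper solve it through the sine expansion $\sin^3\theta=\frac34\sin\theta-\frac14\sin 3\theta$. The difference is in how the $O(\beta^2)$ coefficients are extracted: the paper renormalizes the corrected state (its (\ref{phig678})) and substitutes it directly into (\ref{def:E}) and (\ref{def:mu}), expanding each functional to second order, which produces the full expansions (\ref{ge634})--(\ref{ge637}) at the cost of normalization bookkeeping; you instead use the Hellmann--Feynman identity $\frac{d}{d\beta}E(\phi^\beta)=\frac12\int|\phi^\beta|^4$ (valid for any critical point on the constraint manifold since $\langle\phi^\beta,\partial_\beta\phi^\beta\rangle=0$), giving $E^{(2)}=\int(\phi^0)^3\phi^{(1)}=-\sum_{\mathbf{k}}c_{\mathbf{k}}^2/(\lambda_{\mathbf{k}}-\mu^{(0)})$ and $\mu^{(2)}=3E^{(2)}$ with no second-order wavefunction data and no normalization correction. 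This buys two things the paper leaves implicit: a structural explanation of the relation $G_d^{(2)}=3G_d^{(1)}$ (in the paper it is only an after-the-fact feature of the computed constants), and a single spectral-sum formula that yields the constants in every dimension at once -- indeed it gives $E_g^{(2)}=-\frac{1}{16\pi^2}$, $E_1^{(2)}=-\frac{1}{64\pi^2}$ in 1D, the stated $G_2^{(1)}$ via precisely your cancellation $\frac{1}{A_6}-\frac{L_1^2}{A_6L_1^2+3}=\frac{3}{A_6(A_6L_1^2+3)}$, and $E_g^{(2)}=-\frac{C_{1,1,1}}{256\pi^2}$, $E_1^{(2)}=-\frac{C_{2,1,1}}{256\pi^2}$ in 3D. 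Both arguments rest on the same formal assumption that the ground and first-excited branches are smooth in $\beta$; your appeal to nondegeneracy and the Fredholm alternative is at the same level of rigor as the paper's ansatz (\ref{phig1x}), so nothing is lost there.
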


\begin{proof}
When $0< \beta\ll1$, 
we assume
\be\label{phig1x}
\phi_g^\beta(\bx)\approx \phi_g^0(\bx)+\beta\varphi_g(\bx)+o(\beta),
\quad \bx\in\Omega.
\ee
Plugging (\ref{phig1x}) into (\ref{eq:eig}), noticing (\ref{box00}),
(\ref{boxsmall}) and (\ref{boxsmall1}),
and dropping all terms at $O(\beta^2)$ and above, we obtain
\be \label{vph85}
\Delta \varphi_g(\bx)+2A_2\varphi_g(\bx)=
2(\phi_g^0(\bx))^3-\frac{3^dA_0^2}{2^{d-1}}\phi_g^0(\bx), \quad \bx\in\Omega,\qquad
\varphi_g(\bx)|_{\partial\Omega}=0.
\ee
Substituting (\ref{box00}) into (\ref{vph85}), we can solve it analytically.
For the simplicity of notations, here we only present the case
when $d=1$. Extensions to $d=2$ and $d=3$ are straightforward and the details are
omitted here for brevity \cite{Ruan}.
When $d=1$, we have
\be\label{vpg67}
\varphi_g(x)=\frac{\sqrt{2L_1}}{8\pi^2}\sin\left(\frac{3\pi{x}}{L_1}\right), \qquad 0\le x\le L_1.
\ee
Plugging \eqref{vpg67} into \eqref{phig1x} and using $\|\phi_g^\beta\|_2=1$, we get
\be\label{phig678}
\phi_g^\beta(x)\approx \sqrt{\frac{64\pi^4-\beta^2L_1^2}{32\pi^4L_1}}
\sin\left(\frac{\pi{x}}{L_1}\right)+\frac{\beta\sqrt{2L_1}}{8\pi^2}
\sin\left(\frac{3\pi{x}}{L_1}\right),\quad 0\le x\le L_1.
\ee
Inserting (\ref{phig678}) into (\ref{def:E}) and (\ref{def:mu}) with $V(x)\equiv0$,
we have
\be\label{ge634}
E_g(\beta)=\frac{\pi^2}{2L_1^2}+\frac{3\beta}{4L_1}-\frac{\beta^2}{16\pi^2}+o(\beta^2),\,
\mu_g(\beta)=\frac{\pi^2}{2L_1^2}+\frac{3\beta}{2L_1}-\frac{3\beta^2}{16\pi^2}+o(\beta^2).
\ee
Similarly, we can obtain results for the first excited state
\be\label{ge637}
E_1(\beta)=\frac{2\pi^2}{L_1^2}+\frac{3\beta}{4L_1}-\frac{\beta^2}{64\pi^2}+o(\beta^2),\,
\mu_1(\beta)=\frac{2\pi^2}{L_1^2}+\frac{3\beta}{2L_1}-\frac{3\beta^2}{64\pi^2}+o(\beta^2).
\ee
Subtracting (\ref{ge634}) from (\ref{ge637}), we obtain (\ref{gapbox14}) when $d=1$.
\end{proof}

\begin{lemma}\label{box:ground_strong}
In the strongly repulsive interaction regime, i.e. $\beta\gg1$, we have
\begin{align}
\label{boxtfg1}
E_g(\beta)&=\frac{A_0^2}{2}\beta+\frac{4A_0A_3}{3}\beta^{\frac{1}{2}}+2A_3^2-\frac{8A_4}{9}+o(1),\\
\label{boxtfm1}
\mu_g(\beta)&=A_0^2\beta+2A_0A_3\beta^{\frac{1}{2}}+2A_3^2-A_4+o(1),\qquad \beta\gg1,\\
\label{boxtfg2}
E_1(\beta)&=\frac{A_0^2}{2}\beta+\frac{4A_0(A_3L_1+1)}{3L_1}\beta^{\frac{1}{2}}
+\frac{2(A_3L_1+1)^2}{L_1^2}-\frac{8A_5}{9}+o(1),\\
\label{boxtfm2}
\mu_1(\beta)&=A_0^2\beta+\frac{2A_0(A_3L_1+1)}{L_1}\beta^{\frac{1}{2}}+\frac{2(A_3L_1+1)^2}{L_1^2}
-A_5+o(1),
\end{align}
where
\be
A_3=\sum_{j=1}^d\frac{1}{L_j},\quad
A_4=4\sum_{1\le j<k\le d}\frac{1_{\{d\ge2\}}}{L_jL_k},\quad A_5=A_4+4\sum_{1<j\le d}\frac{1_{\{d\ge2\}}}{L_1L_j},
\ee
with $1_{\{d\ge2\}}$ the standard set function, which takes  $1$ when $d\ge2$ and $0$ otherwise.
\end{lemma}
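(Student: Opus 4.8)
The plan is to analyze the regime $\beta\gg1$ by matched asymptotic expansions anchored on the Thomas--Fermi (TF) approximation. In the interior of $\Omega_0$ the kinetic term in \eqref{eq:eig} is negligible at leading order, so with $V\equiv0$ the bulk density is the constant $|\phi|^2\approx\mu/\beta$. The normalization \eqref{norm} over a box of volume $\prod_{j=1}^dL_j=A_0^{-2}$ then fixes the leading behaviour $\mu\approx A_0^2\beta$, and since $E\approx\frac\beta2\int_{\Omega_0}|\phi|^4\approx\frac\beta2\,A_0^4\,A_0^{-2}=\frac{A_0^2}{2}\beta$ by \eqref{def:E}, this already reproduces the leading terms of \eqref{boxtfg1}--\eqref{boxtfm2}. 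The whole problem is thus to compute the $O(\beta^{1/2})$ and $O(1)$ corrections.

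A constant profile cannot satisfy the homogeneous Dirichlet condition, so next I would resolve a boundary layer at each face of $\partial\Omega_0$. Rescaling the normal coordinate $n$ by the healing length $\mu^{-1/2}$ reduces the GPE to the one-dimensional profile equation $g''=g^3-g$ with $g(0)=0$ and $g(+\infty)=1$, whose solution is the kink $g(n)=\tanh(\sqrt{\mu}\,n)$; thus $\phi\approx\sqrt{\mu/\beta}\,\tanh(\sqrt{\mu}\,\mathrm{dist}(\bx,\partial\Omega_0))$ near each face. For the first excited state the only new feature is an internal layer inherited from the nodal plane $x_1=L_1/2$ of $\phi_1^0$ in \eqref{box00}: there $\phi_1$ passes through zero via a dark-soliton kink of the same width running from $-1$ to $+1$, whose mass deficit per unit area is exactly twice that of a single external face.

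I would then recompute the normalization by inclusion--exclusion over faces, edges and corners. Each face subtracts a mass deficit $\rho_0\int_0^\infty(1-g^2)\,dn=\sqrt{\mu}/\beta$ per unit area, and summing over the total face area $2\sum_{j}\prod_{k\neq j}L_k=2A_3/A_0^2$ generates the $\beta^{1/2}$ correction. Along an edge two layers overlap and the product ansatz over-subtracts; adding back $\rho_0\iint(1-g_i^2)(1-g_j^2)$ produces the $A_4$ term, present only for $d\ge2$ --- hence the indicator $1_{\{d\ge2\}}$. For $\phi_1$ the node contributes an extra face-like deficit of relative weight $1/L_1$, converting $A_3$ into $(A_3L_1+1)/L_1$, while the edges where the nodal layer meets the external faces supply the remaining $4\sum_{1<j}1/(L_1L_j)$ in $A_5$. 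Collecting everything, the constraint \eqref{norm} becomes an algebraic relation $\mu-c_1\sqrt{\mu}+c_0=A_0^2\beta$, with $(c_1,c_0)=(2A_3,\,A_4)$ for the ground state and $(c_1,c_0)=(2(A_3L_1+1)/L_1,\,A_5)$ for the first excited state, and substituting $\mu=A_0^2\beta+a_1\beta^{1/2}+a_0+o(1)$ and matching powers of $\beta$ yields \eqref{boxtfm1} and \eqref{boxtfm2}. The energies \eqref{boxtfg1} and \eqref{boxtfg2} then follow from \eqref{def:mu} after expanding $\int_{\Omega_0}|\phi|^4$ to order $\beta^{-1}$, or equivalently by integrating the kinetic and quartic densities directly over bulk, layers and edges; the difference between the rational constants ($-\tfrac{8A_4}{9}$ in $E$ versus $-A_4$ in $\mu$) reflects that the quartic accounting weights the overlap regions through $\int(1-g^4)$ rather than through $\int(1-g^2)$.

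The main obstacle is the $O(1)$ constant. Obtaining it correctly requires disciplined inclusion--exclusion bookkeeping of the overlap regions --- the edges for the ground state and, in addition, the node/face intersections for the excited state --- because a naive superposition of single-face profiles double counts exactly at this order, whereas corners (triple overlaps) are already $o(1)$ and may be discarded. It is also the order at which one must verify that the next corrections to the $\tanh$ profile and the inner/outer matching do not feed back into the constant. The leading and $\beta^{1/2}$ coefficients are robust and depend only on the single kink profile; making the formal expansion rigorous, i.e. controlling the $o(1)$ remainder uniformly in $\beta$, is the genuinely delicate step.
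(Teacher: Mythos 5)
Your proposal is correct and takes essentially the same route as the paper: matched Thomas--Fermi/$\tanh$ boundary-layer approximations (plus an internal dark-soliton layer at $x_1=L_1/2$ for the first excited state), with $\mu$ fixed by the normalization constraint and $E$ then recovered from \eqref{def:mu}. Your inclusion--exclusion bookkeeping over faces and edges is exactly what one gets by expanding the paper's tensor-product ansatz \eqref{boxg31}--\eqref{boxe31} built from the 1D profiles \eqref{def:TF_basic}, i.e., it is the ``detailed computation omitted for brevity,'' and your coefficients $(c_1,c_0)=(2A_3,A_4)$ and $\bigl(2(A_3L_1+1)/L_1,\,A_5\bigr)$ reproduce \eqref{boxtfg1}--\eqref{boxtfm2} precisely.
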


\begin{proof}
When $\beta\gg1$,
the ground and first excited states can be approximated by
the Thomas-Fermi (TF) approximations and/or uniformly accurate matched
approximations. For $d=1$ and $\Omega=(0,L)$, these approximations have
been given explicitly and verified numerically in the literature \cite{Wz1,Wg1,BaoL,LZBao} as
\begin{equation}\label{def:TF}
\phi_g(x)\approx \sqrt{\frac{\mu_g}{\beta}}\,\phi_{L,\mu_g}^{(1)}(x), \qquad
\phi_1(x)\approx \sqrt{\frac{\mu_1}{\beta}}\,\phi_{L,\mu_1}^{(2)}(x), \qquad 0\le x\le L,
\end{equation}
where
\be\label{def:TF_basic}
\begin{split}
\phi_{L,\mu}^{(1)}(x)&=\tanh\left(\sqrt{\mu}x\right)+\tanh\left(\sqrt{\mu}(L-x)\right)-
\tanh\left(\sqrt{\mu}L\right),\
0\le x\le L, \\
\phi_{L,\mu}^{(2)}(x)&=\tanh\left(\sqrt{\mu}x\right)-\tanh\left(\sqrt{\mu}(L-x)\right)
+\tanh\left(\sqrt{\mu}\left(L/2-x\right)\right),
\end{split}
\ee
with $\mu_g$ and $\mu_1$ determined from the normalization condition (\ref{norm}) and
$\tanh\left(\sqrt{\mu}L\right)\approx 1$.
These results in 1D can be extended to $d$-dimensions ($d=1,2,3$) for the approximations of
the ground and  first excited states as
\bea \label{boxg31}
\phi_g^{\beta}(\mathbf{x})&\approx&\phi_g^{\rm MA}(\mathbf{x})=
\sqrt{\frac{\mu_g(\beta)}{\beta}}\,\prod_{j=1}^d\phi_{L_j,\mu_g}^{(1)}(x_j), \qquad \bx\in\bar{\Omega},\\
\label{boxe31}
\phi_1^{\beta}(\mathbf{x})&\approx&\phi_1^{\rm MA}(\mathbf{x})=
\sqrt{\frac{\mu_1(\beta)}{\beta}}\,\phi_{L_1,\mu_1}^{(2)}(x_1)\prod_{j=2}^d\phi_{L_j,\mu_1}^{(1)}(x_j),
\eea
where $\mu_g(\beta)$ and $\mu_1(\beta)$ are determined from the normalization condition (\ref{norm}).
Inserting \eqref{boxg31} and \eqref{boxe31} into
\eqref{def:mu} and \eqref{def:E}, after a detailed computation which is omitted here for brevity, we can obtain \eqref{boxtfg1}-\eqref{boxtfm2}.
\end{proof}


%

From Lemmas \ref{box:ground_weak}-\ref{box:ground_strong}, we
have asymptotic results for the
 fundamental gaps.

\begin{proposition}[For GPE under a box potential in nondegenerate case]\label{asym:box}
When $\Omega=\Omega_0$ satisfying $d=1$ or $L_1> L_2$ when $d\ge2$
and $V(\mathbf{x})\equiv 0$ for $\bx\in\Omega$ in (\ref{eq:eig}), i.e. GPE with a box potential,
we have
\be \label{gapbox1}
\delta_E(\beta)=\left\{\begin{array}{ll}
\frac{3\pi^2}{2L_1^2}+o(\beta),\\
\frac{4A_0}{3L_1}\beta^{\frac{1}{2}}+A_1+o(1),\\
\end{array}\right.
\delta_{\mu}(\beta)=\left\{\begin{array}{ll}
\frac{3\pi^2}{2L_1^2}+o(\beta),&0\le\beta\ll1,\\
\frac{2A_0}{L_1}\beta^{\frac{1}{2}}+\frac{6}{L_1^2}+o(1),&\beta\gg1.\\
\end{array}\right.
\ee
\end{proposition}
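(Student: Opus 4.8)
The plan is to obtain the proposition directly, without any new analysis of the states, by subtracting the ground-state asymptotics from the first-excited-state asymptotics in each of the two interaction regimes, using the expansions already established in Lemmas \ref{box:ground_weak} and \ref{box:ground_strong}. The entire content of the proof is bookkeeping on those expansions.

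For the weakly repulsive regime $0\le\beta\ll1$, I would invoke Lemma \ref{box:ground_weak} alone. Since $E_g(\beta)$ and $E_1(\beta)$ share the identical term $A_2$ and the identical $O(\beta)$ coefficient $\frac{3^dA_0^2}{2^{d+1}}$, their difference is simply $\frac{3\pi^2}{2L_1^2}+o(\beta)$; likewise the common $A_2$ and $\frac{3^dA_0^2}{2^{d}}$ terms in $\mu_g,\mu_1$ cancel, giving $\delta_\mu(\beta)=\frac{3\pi^2}{2L_1^2}+o(\beta)$. Note that Lemma \ref{asym:boxbt} supplies the sharper $\beta^2$ correction, but this is not needed at the $o(\beta)$ accuracy claimed here, so the weak-regime half of \eqref{gapbox1} follows at once.

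For the strongly repulsive regime $\beta\gg1$, I would subtract the four expansions of Lemma \ref{box:ground_strong}. The leading terms $\frac{A_0^2}{2}\beta$ (resp. $A_0^2\beta$) cancel immediately, and the $\beta^{1/2}$ coefficients collapse cleanly:
\be
\frac{4A_0(A_3L_1+1)}{3L_1}-\frac{4A_0A_3}{3}=\frac{4A_0}{3L_1},\qquad
\frac{2A_0(A_3L_1+1)}{L_1}-2A_0A_3=\frac{2A_0}{L_1},
\ee
which reproduces the $\beta^{1/2}$-coefficients in \eqref{gapbox1}.

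The one step requiring real care is the simplification of the constant ($O(1)$) terms. Expanding $\frac{2(A_3L_1+1)^2}{L_1^2}=2A_3^2+\frac{4A_3}{L_1}+\frac{2}{L_1^2}$ and subtracting the ground-state constant leaves, for the energy gap,
\be
\frac{4A_3}{L_1}+\frac{2}{L_1^2}-\frac{8}{9}(A_5-A_4),
\ee
and for the chemical-potential gap the same expression with the factor $\frac{8}{9}(A_5-A_4)$ replaced by $(A_5-A_4)$. The key is to use the definition $A_5-A_4=4\sum_{1<j\le d}\frac{1_{\{d\ge2\}}}{L_1L_j}$, to split $\frac{4A_3}{L_1}=\frac{4}{L_1^2}+\frac{4}{L_1}\sum_{j=2}^d\frac{1}{L_j}$, and to recombine: the cross-sums reassemble into $A_1$ for the energy gap and cancel completely (leaving $\frac{6}{L_1^2}$) for the chemical-potential gap. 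The delicate and purely algebraic point is tracking the indicator $1_{\{d\ge2\}}$ consistently, so that the $d=1$ case (where $A_4=A_5=0$ and $A_1=6/L_1^2$) is recovered together with the general $d\ge2$ formula. Combining the two regimes then yields \eqref{gapbox1}.
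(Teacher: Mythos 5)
Your proposal is correct and takes essentially the same route as the paper: the paper's proof likewise obtains \eqref{gapbox1} by subtracting \eqref{boxsmall} from \eqref{boxsmall1} in the weak regime and \eqref{boxtfg1}, \eqref{boxtfm1} from \eqref{boxtfg2}, \eqref{boxtfm2} in the strong regime. Your explicit bookkeeping of the $O(1)$ terms (showing they collapse to $A_1$ for $\delta_E$ and to $6/L_1^2$ for $\delta_\mu$, consistently in $d=1$ and $d\ge2$) is exactly the algebra the paper leaves implicit, and it checks out.
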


\begin{proof} When $0\le \beta \ll1$, subtracting (\ref{boxsmall})
from  (\ref{boxsmall1}), noting (\ref{gapp21}), we obtain (\ref{gapbox1})
in this parameter regime. Similarly, when $\beta\gg1$,
subtracting (\ref{boxtfg1}) and (\ref{boxtfm1})
from  (\ref{boxtfg2}) and (\ref{boxtfm2}), respectively, we get (\ref{gapbox1})
in this parameter regime.
\end{proof}

\begin{figure}[htb]
\centerline{\psfig{figure=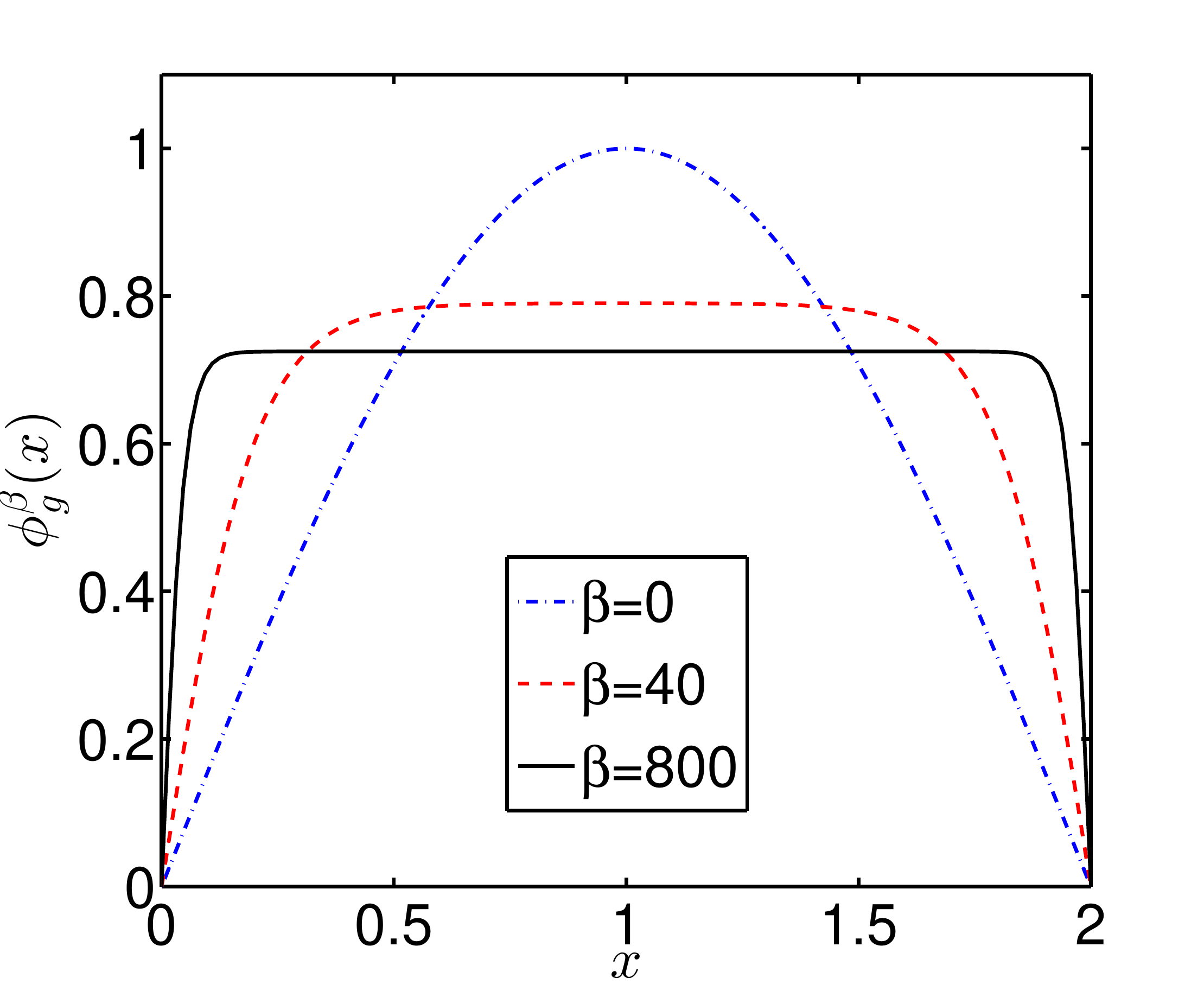,height=4cm,width=6.5cm,angle=0}
\psfig{figure=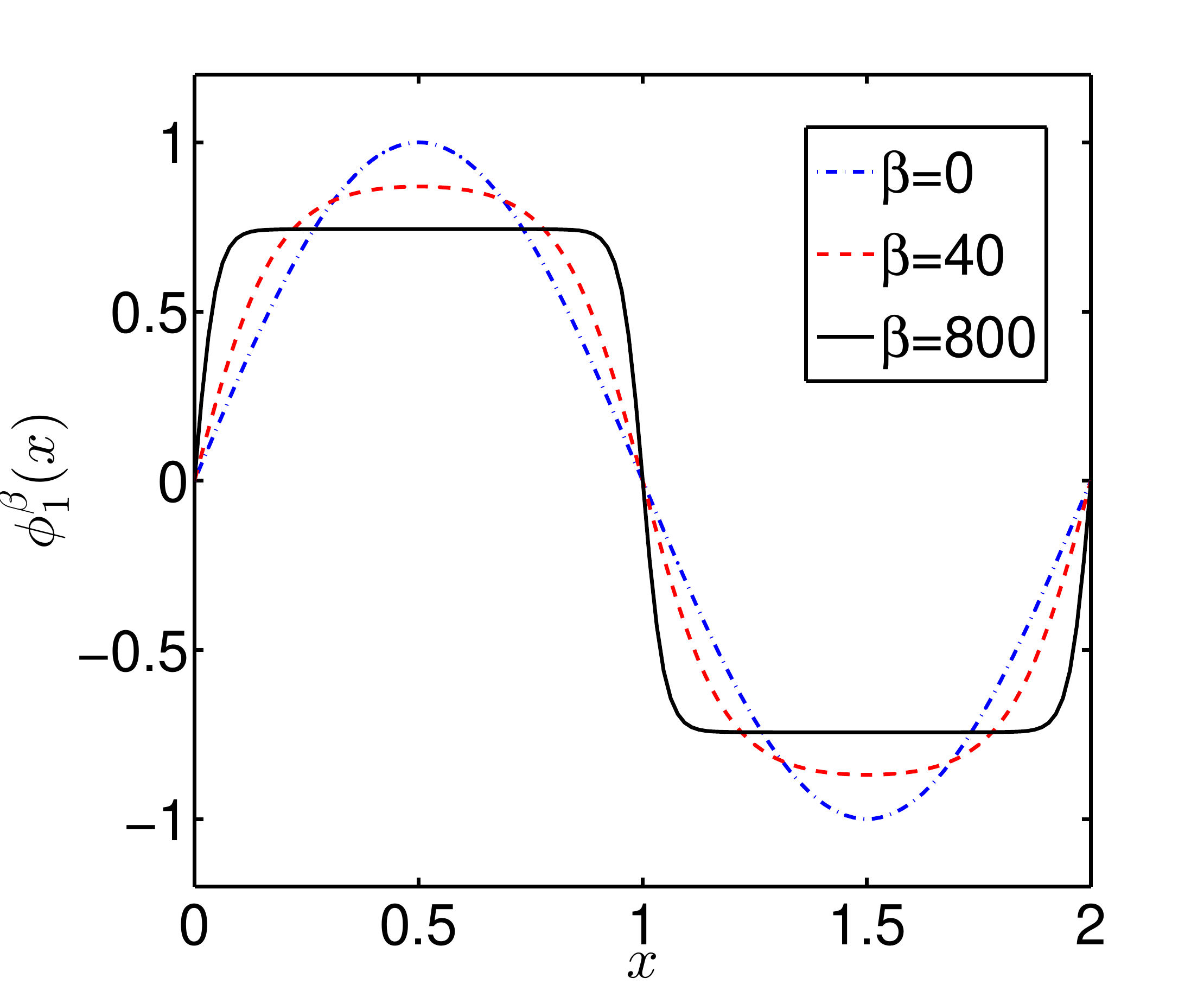,height=4cm,width=6.5cm,angle=0}}
\caption{Ground states $\phi_g^\beta(x)$ (left) and first excited states $\phi_1^\beta(x)$ (right) of GPE in 1D
with $\Omega=(0,2)$ and a box potential for different $\beta\ge0$. }
\label{fig:box1d_sol}
\end{figure}

\begin{figure}[htb]
\centerline{\psfig{figure=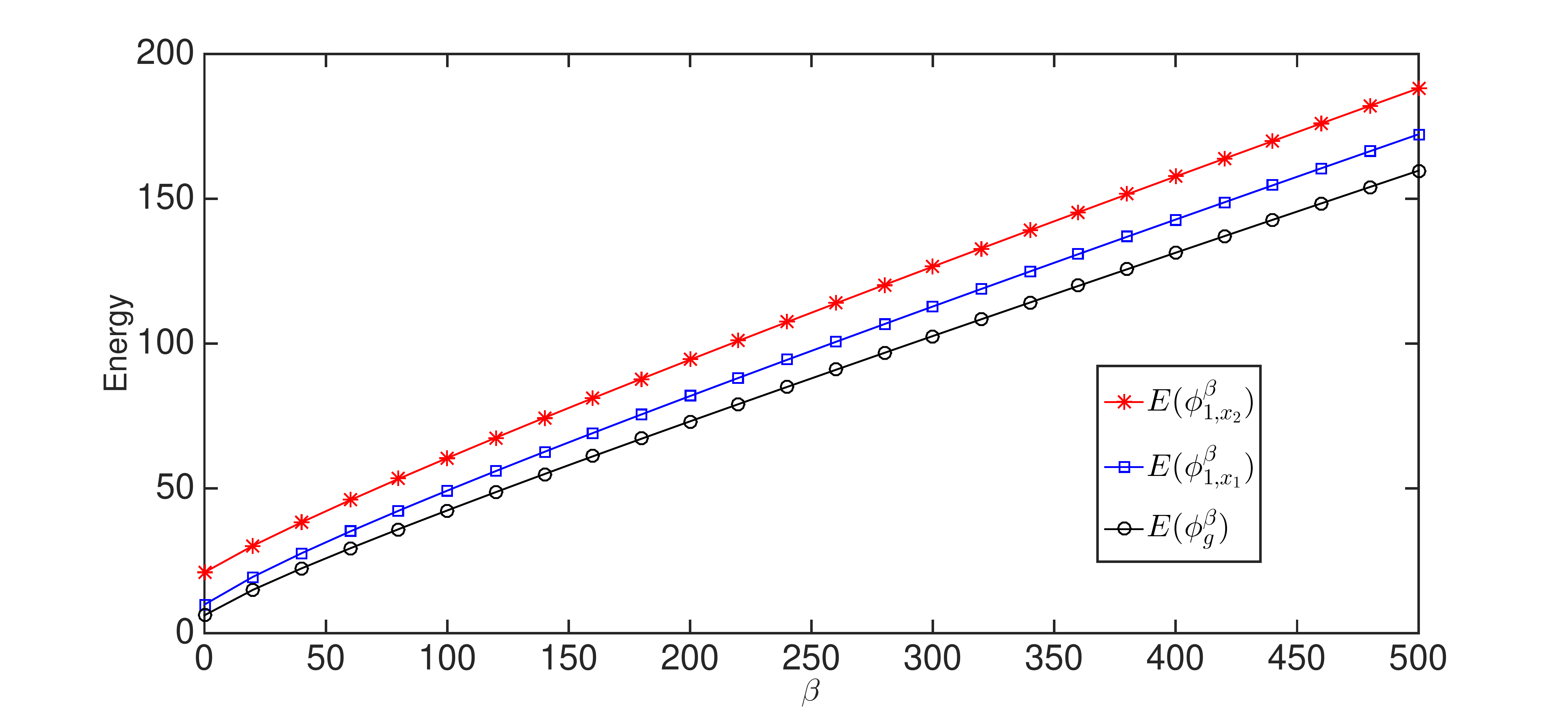,height=4cm,width=13cm,angle=0}}
\caption{Energy $E_g(\beta):=E(\phi_g^{\beta})<E_1(\beta):=E(\phi_1^\beta=\phi_{1,x_1}^{\beta}) <E_2(\beta):=E(\phi_{1,x_2}^{\beta})$ of GPE in 2D with $\Omega=(0,2)\times(0,1)$ and a box potential for different $\beta\ge0$.}
\label{fig:box_dE_general_2d}
\end{figure}

\begin{figure}[htb]
\centerline{\psfig{figure=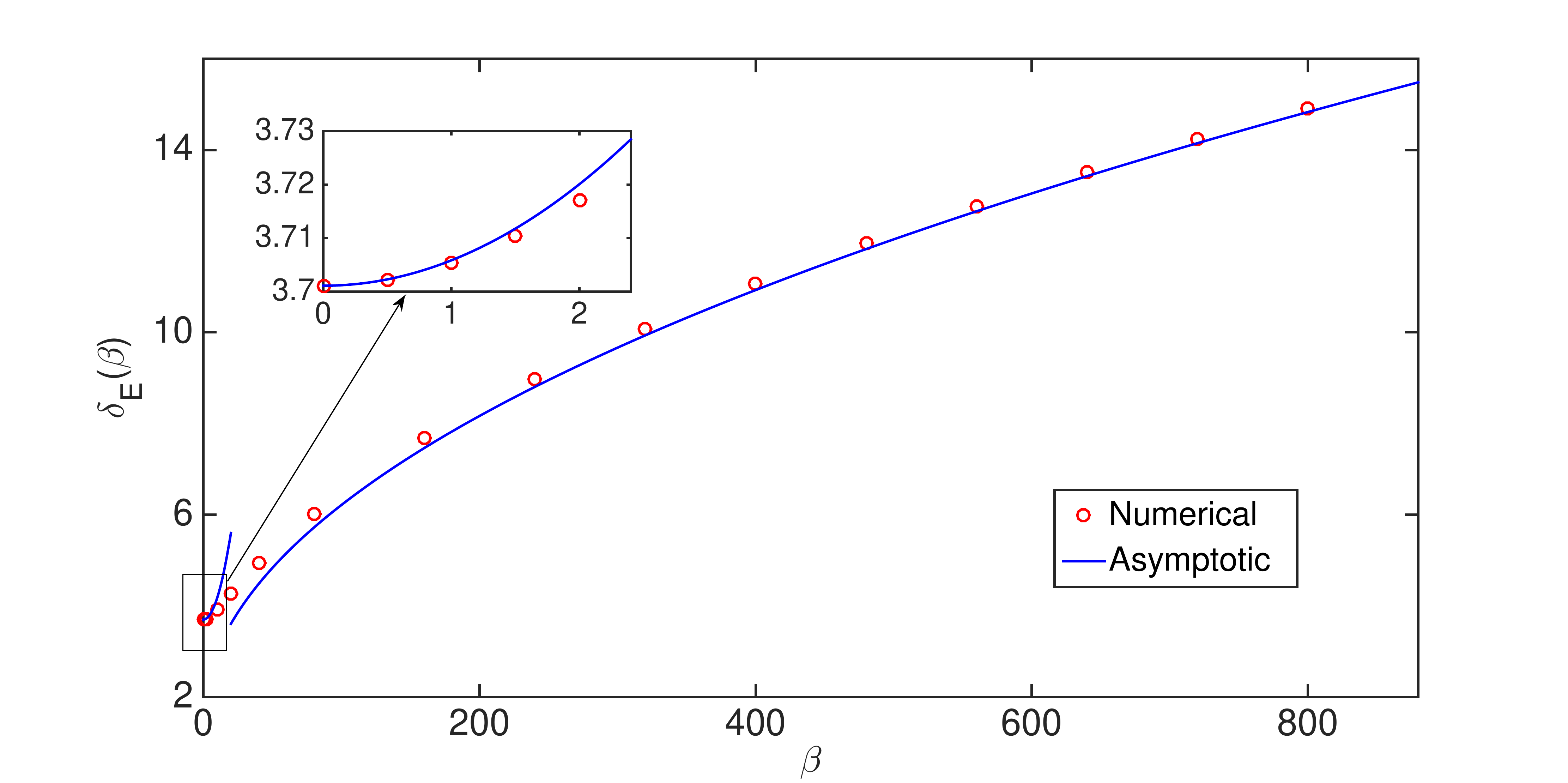,height=3.5cm,width=13cm,angle=0}}
\centerline{\psfig{figure=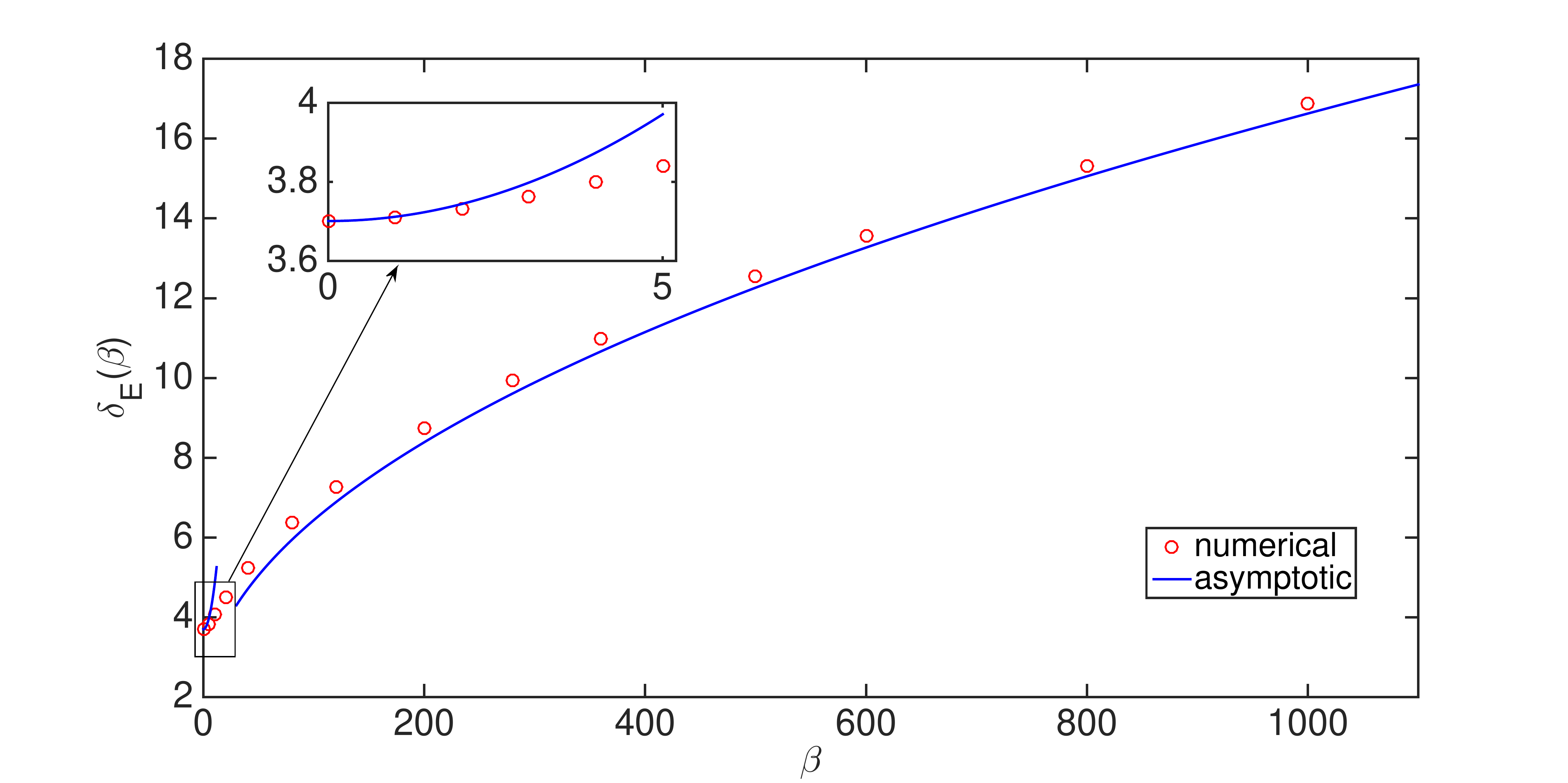,height=3.5cm,width=13cm,angle=0}}
\centerline{\psfig{figure=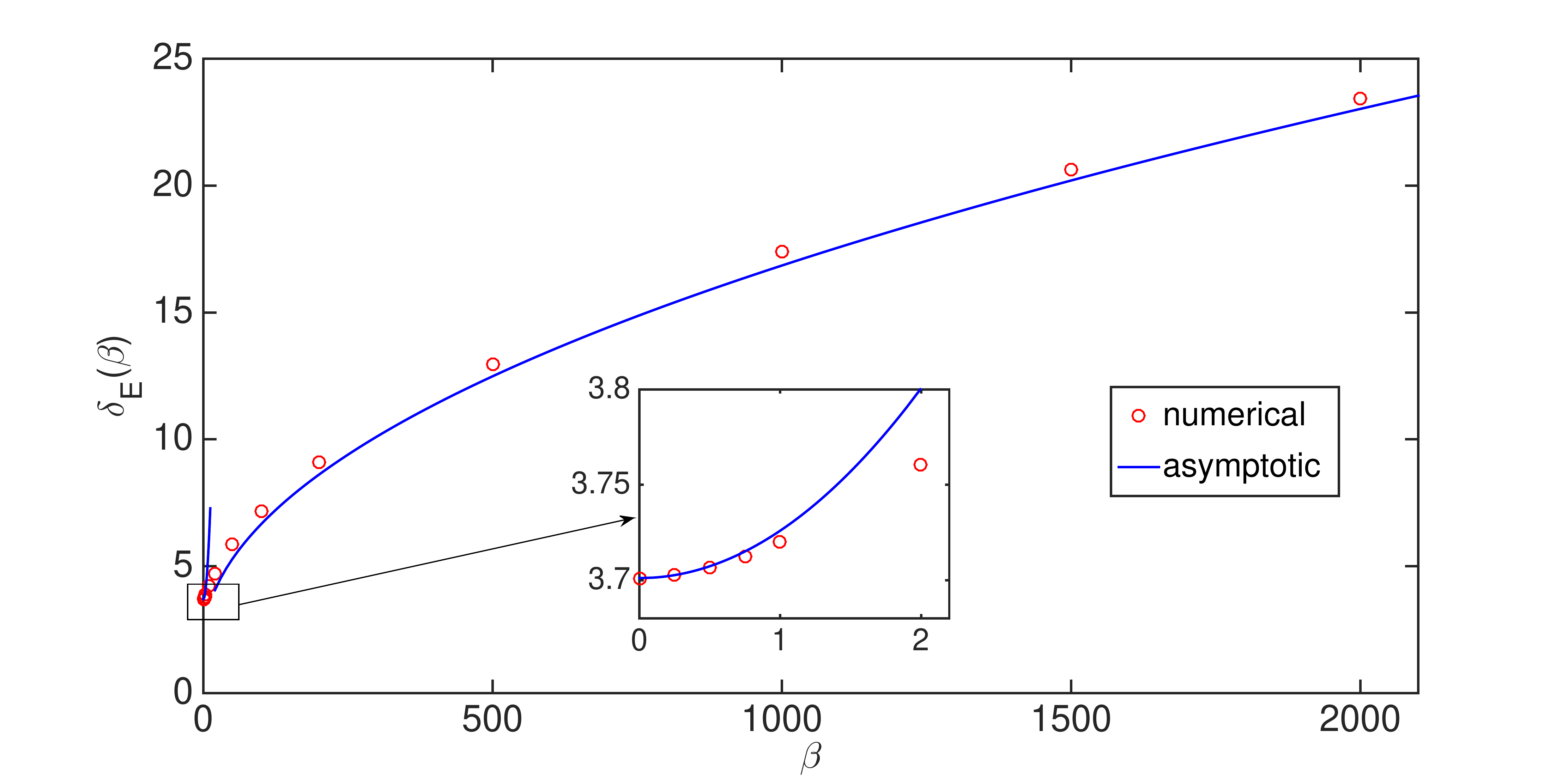,height=3.5cm,width=13cm,angle=0}}
\caption{Fundamental gaps in energy  of GPE with a box potential in
1D with $\Omega=(0,2)$ (top), in 2D with $\Omega=(0,2)\times(0,1)$ (middle), and
in 3D with $\Omega=(0,2)\times(0,1)\times(0,1)$ (bottom).}
\label{fig:box_dE_asym}
\end{figure}

To verify numerically our asymptotic results in Proposition \ref{asym:box},
we solve the time-independent GPE (\ref{eq:eig}) numerically by using the
normalized gradient flow via backward Euler finite difference discretization \cite{Bao_comp1,comp_gf,Bao2013,Wz1}
to find the ground and first excited states and their corresponding energy and chemical potentials.
Fig.~\ref{fig:box1d_sol} shows the ground and first excited states for different $\beta\ge0$ in 1D,
Fig.~\ref{fig:box_dE_general_2d} shows the energy of the ground  and excited states which are excited in $x_1$- or $x_2$-direction, while the
first excited state is taken as the one excited in $x_1$-direction,
and Fig.~\ref{fig:box_dE_asym} depicts fundamental gaps in energy obtained numerically and asymptotically
in 1D, 2D and 3D.
From Fig. \ref{fig:box_dE_asym}, we can see that the asymptotic results in Proposition \ref{asym:box}
are very accurate in both weakly
and strongly repulsive interaction regimes.
In addition, our numerical results suggest that both $\delta_E(\beta)$ and $\delta_\mu(\beta)$
are increasing functions for $\beta\ge0$ (cf. Fig. \ref{fig:box_dE_asym}).

\begin{figure}[htb]
\centerline{\psfig{figure=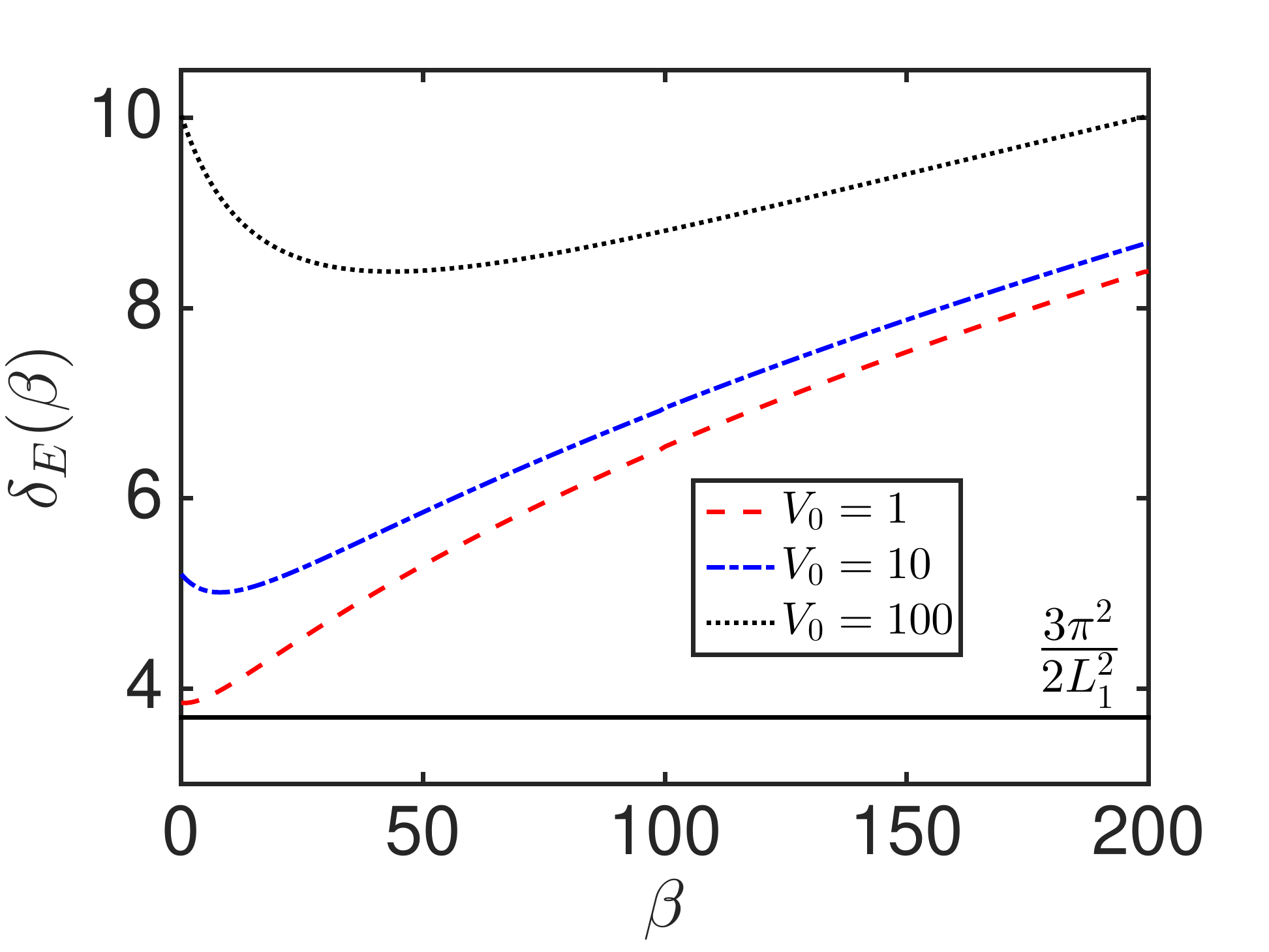,height=4cm,width=6.5cm,angle=0}
\psfig{figure=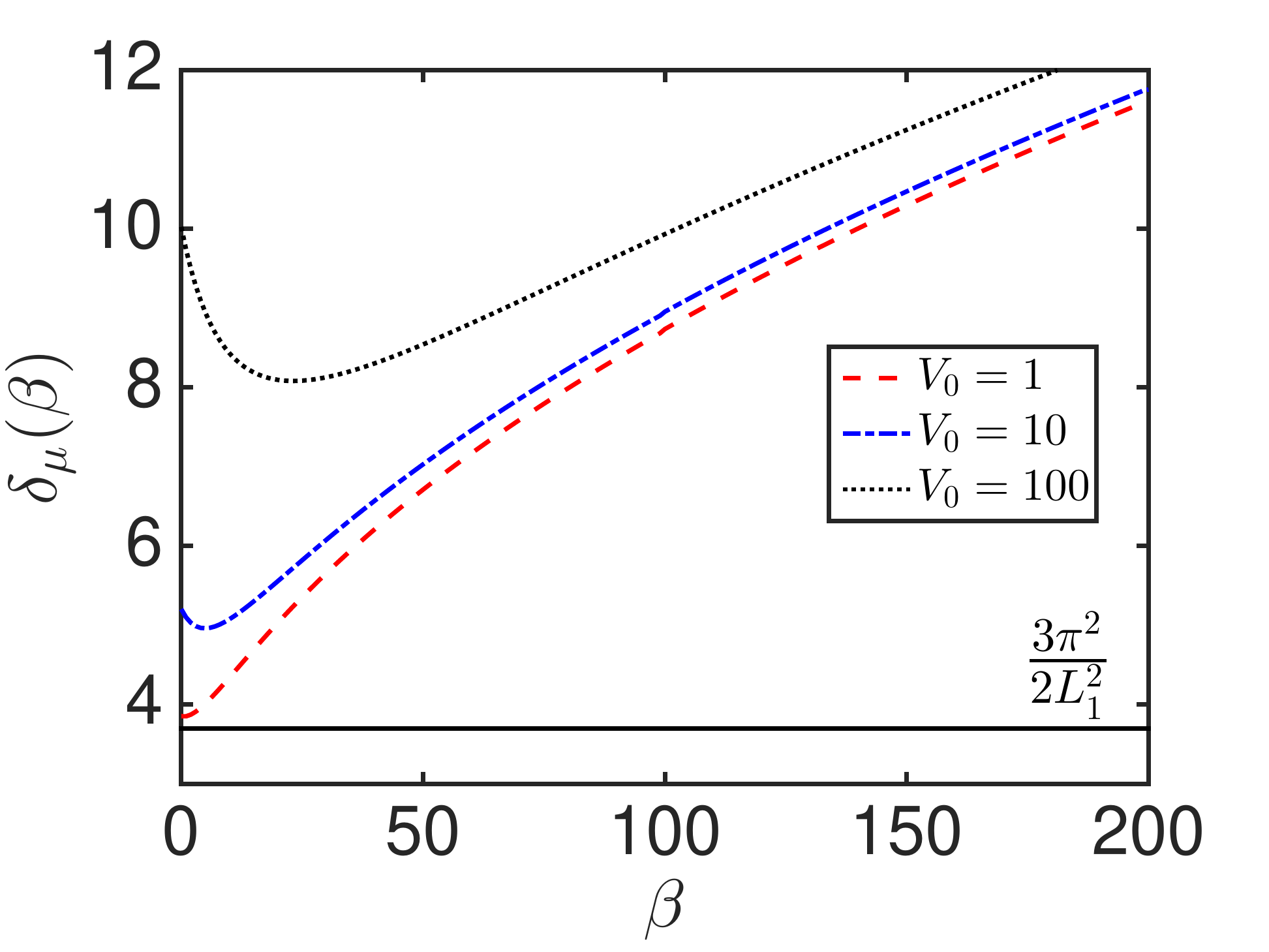,height=4cm,width=6.5cm,angle=0}}
\caption{Fundamental gaps in energy $\delta_E(\beta)$ (left) and chemical potential $\delta_\mu(\beta)$ (right)
of GPE in 1D with $\Omega=(0,2)$ and $V(x)=V_0(x-1)^2$ for different $V_0>0$ and $\beta\ge0$. }
\label{fig:test_box_dE}
\end{figure}

For a general bounded domain $\Omega$ and/or $V(\bx)\ne0$, we cannot get asymptotic
results on the fundamental gaps, but we can study the problem numerically.
If $\Og$ and $V(\bx)$ are symmetric with respect to the axis,
we can compute numerically the ground and first excited states and their
corresponding energy and chemical potential
as well as the fundamental gaps via the
normalized gradient flow method 
\cite{Bao_comp1,comp_gf,Bao2013,Wz1}.
We remark here that for a general bounded domain $\Omega$, discretization
in space can be performed by using the finite element method instead of finite difference or spectral method
for the normalized gradient flow to compute the ground and first excited states \cite{BaoT}.
For arbitrarily chosen external potentials, the first excited state might
not have any symmetric property. In this case, we can obtain numerically the first excited  state by using the numerical method proposed in
 \cite{BaoT}.
Fig. \ref{fig:test_box_dE} depicts fundamental gaps
in energy and chemical potential of the GPE in 1D with $\Omega=(0,2)$ and the potential $V(x)=V_0(x-1)^2$ for different $V_0$ and $\beta$.
Fig. \ref{fig:test_counterexample_dE} plots the fundamental gaps in energy and chemical potential of the GPE in 1D with
 $\Og=(0,2)$  and some nonconvex trapping potentials for different $\beta\ge0$.

\begin{figure}[htb]
\centerline{\psfig{figure=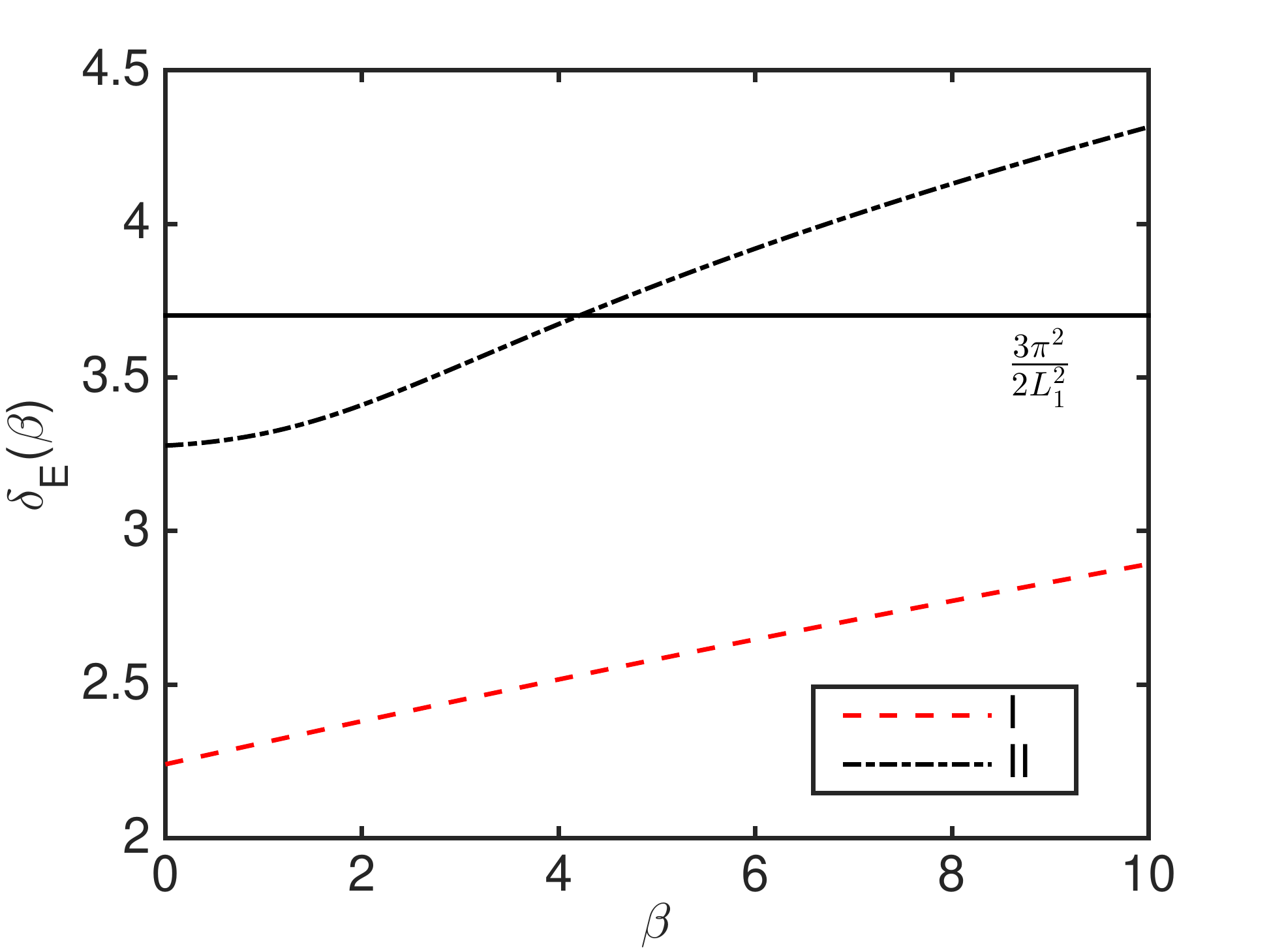,height=4cm,width=6.5cm,angle=0}
\psfig{figure=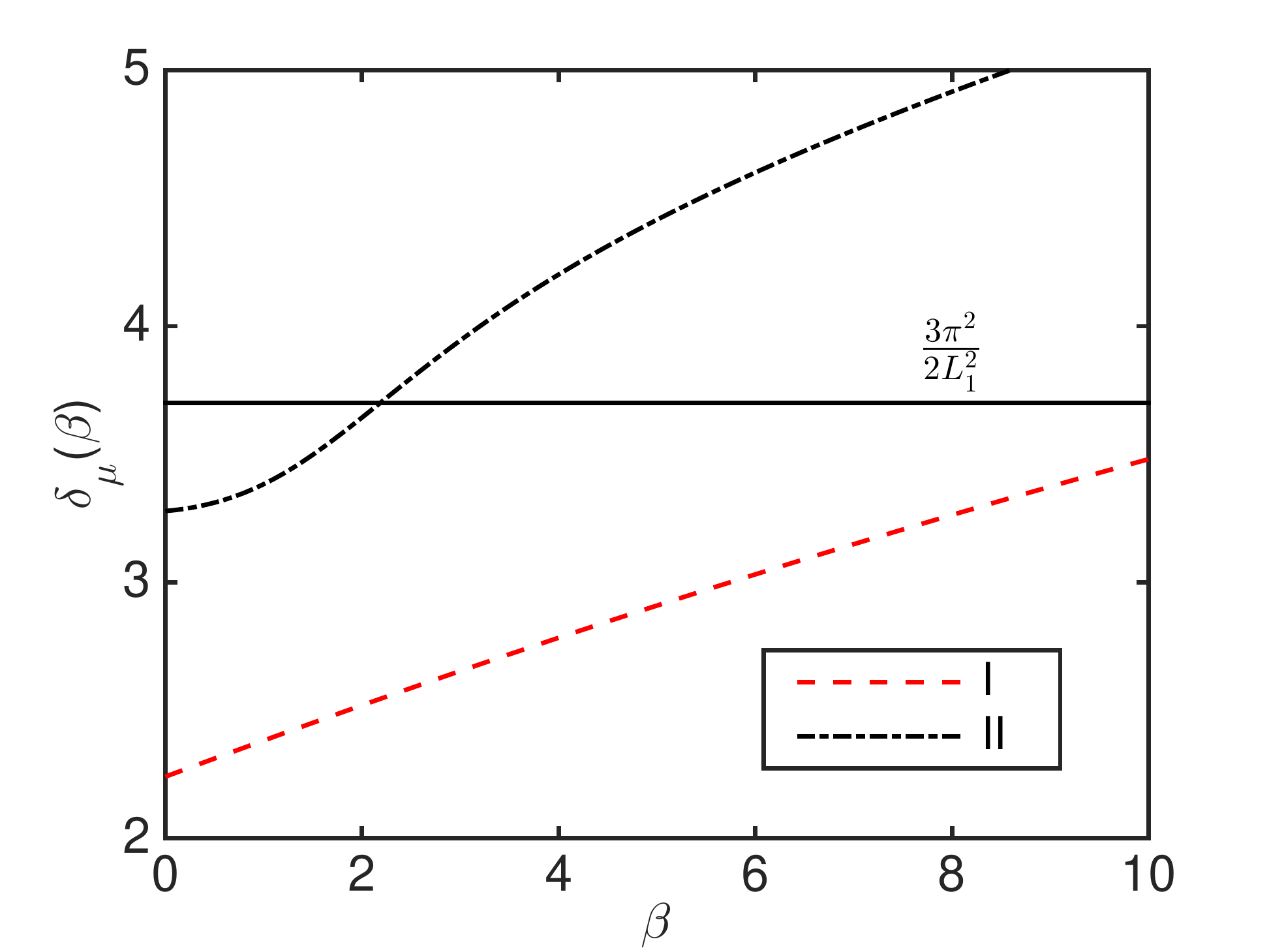,height=4cm,width=6.5cm,angle=0}}
\caption{Fundamental gaps in energy (left) and chemical potential (right)
of GPE in 1D with $\Og=(0,2)$ and non-convex trapping potentials:
(I) $V(x)=-10x^2$, and (II) $V(x)=10\sin(10(x-1))$ for different $\beta\ge0$. }
\label{fig:test_counterexample_dE}
\end{figure}

Based on the asymptotic results in Proposition \ref{asym:box} and the above numerical results as well as additional extensive numerical results not shown here for brevity \cite{Ruan}, we speculate the following gap conjecture.

{\bf Gap conjecture}
(For GPE on a bounded domain with homogeneous Dirichlet BC in nondegenerate case)
Suppose $\Omega$ is a convex bounded domain,
  the external potential $V(\mathbf{x})$ is convex and $\dim(W_1)=1$,  we have
\begin{equation}\label{bgap976}
\delta^{\infty}_E:=\inf_{\beta\ge0} \delta_E(\beta)\ge\frac{3\pi^2}{2D^2},
\qquad\delta^{\infty}_{\mu}:=\inf_{\beta\ge0} \delta_\mu(\beta)\ge\frac{3\pi^2}{2D^2},
\end{equation}
where $D:=\sup_{\mathbf{x},\mathbf{z}\in\Omega}|\mathbf{x}-\mathbf{z}|$ is the diameter of $\Omega$.
In fact, our numerical results
suggest a stronger gap as
\bea
\label{bapb987}
\qquad \quad \delta_{E}(\beta)\ge\left\{\ba{ll}
\frac{3\pi^2}{2D^2}, &0\le \beta\le \frac{81\pi^4|\Omega|}{64D^2},\\
\frac{4\beta^{1/2}}{3D|\Omega|^{1/2}}, &\beta\ge\frac{81\pi^4|\Omega|}{64D^2},\\
\ea\right.
\delta_{\mu}(\beta)\ge\left\{\ba{ll}
\frac{3\pi^2}{2D^2}, &0\le \beta\le\frac{9\pi^4|\Omega|}{16D^2},\\
\frac{2\beta^{1/2}}{D|\Omega|^{1/2}}, &\beta\ge\frac{9\pi^4|\Omega|}{16D^2}, \\
\ea\right.
\eea
where $|\Omega|$ is the volume of $\Omega$.
On the other hand, Fig.~\ref{fig:test_counterexample_dE} suggests that the gap
conjecture (\ref{bgap976}) is not valid for non-convex trapping potentials.


\subsection{Degenerate case, i.e. $\dim(W_1)\ge2$}\label{sec:box_asym_degenerate}
Again, we first consider a special case by taking $\Omega=\Omega_0$ satisfying $L_1=L_2:=L$ and $d\ge2$
and $V(\bx)\equiv0$ for $\bx\in \Omega$ in \eqref{eq:eig}.
In this case, the approximations of the ground states and their energy and chemical potential are the same as those in the previous subsection by letting $L_2\to L_1=L$. On the contrary, the approximations of the first excited states
are completely different with those in the non-degenerate case.

\begin{lemma}\label{box:ground_weak_degenerate2}
For weakly repulsive interaction, i.e. $0<\beta\ll1$, we have for $d\ge2$
\begin{align} \label{boxsmall_degenerate}
E_1(\beta)&=\frac{3\pi^2}{2L^2}+A_2+\frac{13d}{32}A_0^2\beta+o(\beta), \,
\mu_1(\beta)=\frac{3\pi^2}{2L^2}+A_2+\frac{13d}{16}A_0^2\beta+o(\beta).
\end{align}
\end{lemma}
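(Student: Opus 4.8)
The plan is to treat the weak-interaction limit as a degenerate bifurcation from the linear ($\beta=0$) problem. At $\beta=0$ the second eigenvalue $\frac{3\pi^2}{2L^2}+A_2$ is attained on the two-dimensional eigenspace $W_1=\mathrm{span}\{\psi_1,\psi_2\}$, where $\psi_1=\phi_1^0$ is the mode in \eqref{box00} excited in $x_1$ and $\psi_2$ is its mirror excited in $x_2$ (obtained by swapping $x_1\leftrightarrow x_2$, which is legitimate since $L_1=L_2=L$); these are real and $L^2$-orthonormal. Because the kinetic energy is constant on the unit sphere of $W_1$ and equals the linear eigenvalue, all $\beta$-dependence at leading order is produced by the cubic term, so I would expand $\phi_1^\beta=\phi^0+\beta\phi^1+o(\beta)$ and $\mu_1(\beta)=\mu^0+\beta\mu^1+o(\beta)$ with $\phi^0=c_1\psi_1+c_2\psi_2\in W_1$ and $\mu^0=\frac{3\pi^2}{2L^2}+A_2$ to be determined.

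Inserting this ansatz into \eqref{eq:eig} (with $V\equiv0$) and collecting the $O(\beta)$ terms gives $\left(-\tfrac12\Delta-\mu^0\right)\phi^1=\mu^1\phi^0-|\phi^0|^2\phi^0$. Since $-\tfrac12\Delta-\mu^0$ is self-adjoint with kernel $W_1$, the Fredholm alternative forces the right-hand side to be orthogonal to $W_1$, which yields the nonlinear self-consistency (gap) equations $\langle\psi_k,|\phi^0|^2\phi^0\rangle=\mu^1 c_k$, $k=1,2$. Carrying these out requires only the quartic overlaps $P_4:=\int_\Omega\psi_1^4=\int_\Omega\psi_2^4$, $P_{22}:=\int_\Omega\psi_1^2\psi_2^2$, and $\int_\Omega\psi_1^3\psi_2=\int_\Omega\psi_1\psi_2^3=0$; all of them factorize into one-dimensional integrals, reducing everything to $\int_0^L\sin^4=\tfrac{3L}{8}$ and $\int_0^L\sin^2(2\pi x/L)\sin^2(\pi x/L)\,dx=\tfrac{L}{4}$, which give $P_4=\tfrac{3^d}{2^d}A_0^2$ and $P_{22}=\tfrac{4\cdot3^{d-2}}{2^d}A_0^2$. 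Writing $a=|c_1|^2$, $b=|c_2|^2$, $r=\mathrm{Re}(c_1\bar c_2)$, one finds $\int_\Omega|\phi^0|^4=(a^2+b^2)P_4+(2ab+4r^2)P_{22}$.

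The decisive step is selecting the correct $\phi^0$: since $\phi_1^\beta$ is the \emph{least}-energy excited state, $\phi^0$ must minimize $E(\phi^0)=\mu^0+\tfrac{\beta}{2}\int_\Omega|\phi^0|^4$ over the unit sphere $a+b=1$, $r^2\le ab$ of $W_1$, i.e. minimize $\int_\Omega|\phi^0|^4=P_4-2ab(P_4-P_{22})+4r^2P_{22}$. Because $P_4>P_{22}$, the minimizer is $a=b=\tfrac12$, $r=0$, which is \emph{not} a real combination but the complex vortex state $\phi^0=\tfrac{1}{\sqrt2}(\psi_1\pm i\psi_2)$; one checks directly that it solves the gap equations with $\mu^1=\tfrac12(P_4+P_{22})=\int_\Omega|\phi^0|^4$. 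Using $\|\phi_1^\beta\|_2=1$ to show $\mathrm{Re}\langle\phi^0,\phi^1\rangle=0$, the kinetic cross term is $O(\beta^2)$, so $\mu_1(\beta)=\mu^0+\beta\int_\Omega|\phi^0|^4+o(\beta)$ and $E_1(\beta)=\mu^0+\tfrac{\beta}{2}\int_\Omega|\phi^0|^4+o(\beta)$. Substituting $\tfrac12(P_4+P_{22})=\tfrac{13\cdot3^{d-2}}{2^{d+1}}A_0^2$, which for the relevant dimensions $d=2,3$ equals $\tfrac{13d}{16}A_0^2$, reproduces \eqref{boxsmall_degenerate}.

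I expect the main obstacle to be precisely this selection step: the naive guess that the degenerate first excited state is a single real mode $\psi_k$ (as in the nondegenerate case) or the symmetric real combination $\tfrac{1}{\sqrt2}(\psi_1+\psi_2)$ gives $\int_\Omega|\phi^0|^4$ equal to $P_4$ and $\tfrac12 P_4+\tfrac32 P_{22}$ respectively, both strictly larger than the true value $\tfrac12(P_4+P_{22})$; only the complex vortex combination, which annihilates the cross term while keeping equal weights, is energetically optimal. A secondary technical point is justifying the formal expansion rigorously, i.e. that the minimizing branch of \eqref{eq:eig} really converges to $\phi^0$ and is first-order differentiable in $\beta$, which can be handled by a Lyapunov--Schmidt reduction onto $W_1$ together with the implicit function theorem, exploiting the spectral gap above $W_1$.
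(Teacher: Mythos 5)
Your proposal is correct and follows essentially the same route as the paper: both restrict to the degenerate eigenspace $W_1$, minimize the quartic interaction energy over its unit sphere (the kinetic part being constant there), identify the complex vortex combination $\tfrac{1}{\sqrt 2}(\psi_1\pm i\psi_2)$ as the energetically selected first excited state, and then evaluate $E$ and $\mu$ on it — your quartic-overlap values and the resulting coefficients $\tfrac{13d}{32}A_0^2$ and $\tfrac{13d}{16}A_0^2$ agree with the paper's computation. The only difference is presentational: you wrap the leading-order ansatz in a Lyapunov--Schmidt/Fredholm-alternative framework, whereas the paper simply plugs the ansatz $\varphi_{a,b}=a\varphi_1+b\varphi_2$ into the energy and minimizes over $|a|^2+|b|^2=1$.
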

\begin{proof} For simplicity, we only present the 2D case and
extension to 3D is straightforward. Denote
\be
\phi_g^0(x)=\sqrt{\frac{2}{L}}\sin\left(\frac{\pi x}{L}\right), \qquad \phi_1^0(x)=\sqrt{\frac{2}{L}}\sin\left(\frac{2\pi x}{L}\right), \qquad
0\le x\le L.
\ee
When $d=2$ and $\beta=0$, it is easy to see that
$\varphi_1(\bx):=\phi_1^{0}(x_1)\phi_g^{0}(x_2)$
and $\varphi_2(\bx):=\phi_g^{0}(x_1)\phi_1^{0}(x_2)$
are two linearly independent orthonormal first excited states.
In fact, $W_1={\rm span}\{\varphi_1,\varphi_2\}$.
In order to find an appropriate approximation of the first
excited state when $0<\beta \ll1$, we take an ansatz
\be\label{box01_degenerate}
\varphi_{a,b}(\mathbf{x})=a \varphi_1(\bx)+b\varphi_2(\bx),
\qquad \bx=(x_1,x_2)\in\bar{\Omega},
\ee
where $a,b\in\mathbb{C}$ satisfying $|a|^2+|b|^2=1$ implies
$\|\varphi_{a,b}\|_2=1$. Then $a$ and $b$ will be determined by
minimizing $E(\varphi_{a,b})$. Plugging \eqref{box01_degenerate}
into \eqref{def:E}, a simple direct computation implies that
\bea E(\varphi_{a,b})&=&\frac{3\pi^2}{2L^2}+A_2+\frac{8\beta}{L^4}\iint_{[0,L]^2}\left|a\sin(\frac{2\pi x_1}{L})\sin(\frac{\pi x_2}{L})+b\sin(\frac{\pi x_1}{L})\sin(\frac{2\pi x_2}{L})\right|^4\, d\bx\nn\\
&=&\frac{3\pi^2}{2L^2}+A_2+\frac{9\beta}{8L^2}(|a|^4+|b|^4)+
\frac{\beta}{2L^2}(4|a|^2|b|^2+a^2\bar{b}^2+\bar{a}^2b^2)\nn\\
&=&\frac{3\pi^2}{2L^2}+A_2+\frac{9\beta}{8L^2}+\frac{\beta}{4L^2}
(2a^2\bar{b}^2+2\bar{a}^2b^2-|a|^2|b|^2).\nn
\eea
To minimize $E(\varphi_{a,b})$, noting $|a|^2+|b|^2=1$, we take $a=e^{i\xi}\cos(\theta)$ and $b=e^{i\eta}\sin(\theta)$ with $\xi,\eta,\theta\in[-\pi,\pi)$. Then we have
\be\nn
E(\varphi_{a,b})=\frac{3\pi^2}{2L^2}+A_2+\frac{9\beta}{8L^2}-\frac{\beta}{16L^2}
\sin^2(2\theta)\left[1-4\cos\left(2(\xi-\eta)\right)\right],
\ee
which is minimized when $\theta=\pm\pi/4$ and $\xi-\eta=\pm\pi/2$,
i.e. $a=\pm ib$.
By taking $a=1/\sqrt{2}$ and $b=i/\sqrt{2}$, we obtain an approximation of the first excited state $\phi_1^\beta$ when $0<\beta\ll1$ as
\be\label{phi1box11}
\phi_1^{\beta}(\mathbf{x})\approx \frac{1}{\sqrt{2}}\left[\phi_1^{0}(x_1)\phi_g^{0}(x_2)
+i\phi_g^{0}(x_1)\phi_1^{0}(x_2)\right], \qquad \bx\in\bar\Omega.
\ee
Substituting \eqref{phi1box11} into (\ref{def:E}) and (\ref{def:mu}),
we get (\ref{boxsmall_degenerate}) when $d=2$.
\end{proof}


\begin{lemma}\label{box:ground_strong_degenerate} When $d=2$ and $\beta\gg1$,
   we have
\begin{align}
\label{boxstrong_degenerate_E}
&E_1(\beta)=\frac{\beta}{2L^2}+\frac{8\sqrt{\beta}}{3L^2}+
\frac{\pi}{2L^2}\ln(\beta)+o(\ln(\beta)),  \\
\label{boxstrong_degenerate_mu}
&\mu_1(\beta)=\frac{\beta}{L^2}+\frac{4\sqrt{\beta}}{L^2}+
\frac{\pi}{2L^2}\ln(\beta)+o(\ln(\beta)).
\end{align}
\end{lemma}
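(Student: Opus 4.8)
The plan is to show that, in the strongly repulsive regime, the degenerate first excited state is a singly quantized vortex centered at $(L/2,L/2)$, and that its energy and chemical potential differ from those of the ground state only through the kinetic energy stored in the phase winding, which produces the logarithmic term. Motivated by the weak-coupling expression \eqref{phi1box11}, whose amplitude vanishes exactly at the center of the square, I would take the matched vortex ansatz
\be
\phi_1^\beta(\bx)\approx \sqrt{\frac{\mu_1(\beta)}{\beta}}\;e^{i\vartheta}\,f\!\left(\sqrt{2\mu_1}\,r\right)\phi_{L,\mu_1}^{(1)}(x_1)\,\phi_{L,\mu_1}^{(1)}(x_2),\qquad \bx\in\bar{\Omega},
\ee
where $(r,\vartheta)$ are polar coordinates centered at $(L/2,L/2)$, the factors $\phi_{L,\mu_1}^{(1)}$ are the same wall boundary-layer profiles \eqref{def:TF_basic} used for the ground state, and $f$ is the universal unit-vortex amplitude profile ($f(0)=0$, $f(\infty)=1$) living on the healing-length scale $\xi=1/\sqrt{2\mu_1}$; as in the nondegenerate case, $\mu_1(\beta)$ is fixed by the normalization \eqref{norm}. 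The physical point is that the nodal line of the weak-coupling excited state turns into a vortex core, so the amplitude envelope coincides with the flat Thomas--Fermi envelope of the ground-state matched approximation \eqref{boxg31}.

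I would then split the energy \eqref{def:E} (with $V\equiv0$) into a bulk-plus-boundary part and a vortex part. Away from the core the integrand is indistinguishable from that of the ground-state approximation, so the contributions at orders $\beta$ and $\beta^{1/2}$ are exactly those of Lemma \ref{box:ground_strong} with $d=2$ and $L_1=L_2=L$; this gives the leading $\frac{\beta}{2L^2}+\frac{8\sqrt{\beta}}{3L^2}$ in $E_1(\beta)$ and $\frac{\beta}{L^2}+\frac{4\sqrt{\beta}}{L^2}$ in $\mu_1(\beta)$. Here I would check that the vortex core produces no $O(\beta^{1/2})$ correction: its depleted density occupies an area $O(\xi^2)=O(1/\mu_1)$, so both the core interaction energy and the amplitude-gradient energy are $O(1)$, hence absorbed into the $o(\ln\beta)$ remainder, and the normalization shift it induces is $O(1/\beta)$.

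The only genuinely new contribution is the phase gradient $|\nabla\vartheta|^2=1/r^2$. Writing the bulk density as $n_0=\mu_1/\beta\approx 1/L^2$, the phase kinetic energy is
\be
\frac{1}{2}\int_\Omega n_0\,f^2\!\left(\sqrt{2\mu_1}\,r\right)\frac{1}{r^2}\,d\bx=\pi n_0\int_0^{R}\frac{f^2\!\left(\sqrt{2\mu_1}\,r\right)}{r}\,dr,\qquad R=O(L),
\ee
and since $f^2(s)/s$ is integrable near $s=0$ with $f\to1$, the substitution $s=\sqrt{2\mu_1}\,r$ makes this equal to $\pi n_0\bigl(\tfrac12\ln(2\mu_1)+O(1)\bigr)=\frac{\pi}{2L^2}\ln\beta+O(1)$. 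Because this is a kinetic term, it enters $E_1(\beta)$ directly and, through $\mu=E+\frac{\beta}{2}\int|\phi|^4$, enters $\mu_1(\beta)$ with the same coefficient (the interaction term carries no logarithm). Collecting the three pieces yields \eqref{boxstrong_degenerate_E}--\eqref{boxstrong_degenerate_mu}.

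The main obstacle is the matched-asymptotic justification of the ansatz: one must verify that the wall boundary layers and the interior vortex core decouple cleanly at orders $\beta$, $\beta^{1/2}$ and $\ln\beta$, control the vortex profile equation $f''+\frac1s f'-\frac1{s^2}f+(1-f^2)f=0$ on the scale $\xi$ so that its unknown $O(1)$ constant is safely inside the $o(\ln\beta)$ remainder, and—most delicately—confirm that energy minimization pins a \emph{single} vortex at the center rather than splitting or displacing it. I would argue the latter from the reflection symmetries of the square together with the repulsive logarithmic cost of additional or off-center cores, so that the centered singly quantized configuration is the energetically preferred first excited state.
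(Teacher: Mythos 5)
Your strategy is essentially the paper's: a matched asymptotic ansatz consisting of the ground-state envelope (Thomas--Fermi bulk plus wall layers from \eqref{def:TF_basic}) carrying a winding-number-one vortex core of healing length $\xi=1/\sqrt{2\mu_1}$ at the center, with $\mu_1$ fixed by normalization. The paper matches the \emph{densities} additively and with the same core profile $f_a(r)=\sqrt{2\mu_1 r^2/(1+2\mu_1 r^2)}$; it extracts the $\frac{\pi}{2L^2}\ln\beta$ term in $\mu_1$ from the normalization condition and then gets $E_1$ from the identity $E=\mu-\frac{\beta}{2}\int|\phi|^4$, whereas you extract the log from the phase-gradient kinetic energy in $E_1$ and transfer it to $\mu_1$ through the same identity. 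Your computation of the phase kinetic energy, $\pi n_0\bigl(\tfrac12\ln(2\mu_1)+O(1)\bigr)=\frac{\pi}{2L^2}\ln\beta+O(1)$, is correct.

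There is, however, a genuine error in your bookkeeping of the core corrections, and it makes the argument internally inconsistent as written. You claim the depleted density occupies an area $O(\xi^2)$, so that the core interaction energy is $O(1)$ and the normalization shift is $O(1/\beta)$. This is false: for the true profile and for $f_a$ alike, $1-f^2(s)\approx 1/s^2$ as $s\to\infty$, so the depletion integrals are logarithmically enhanced, $\int_\Omega(1-f^2)\,d\bx=\frac{\pi}{2\mu_1}\ln\beta+O(\mu_1^{-1})$ and $\int_\Omega(1-f^4)\,d\bx=\frac{\pi}{\mu_1}\ln\beta+O(\mu_1^{-1})$. Consequently, (i) normalization forces $\mu_1$ up by exactly $\frac{\pi}{2L^2}\ln\beta$ (this is precisely how the paper obtains \eqref{boxstrong_degenerate_mu}); under your $O(1/\beta)$ estimate the normalization-determined $\mu_1$ would carry \emph{no} logarithm, contradicting the $\mu_1$ you then derive from $E_1+\frac{\beta}{2}\int|\phi_1|^4$; and (ii) at fixed bulk density the core deficit lowers $\frac{\beta}{2}\int|\phi_1|^4$ by $\frac{\pi}{2L^2}\ln\beta$, i.e. it is $O(\ln\beta)$, not $O(1)$. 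Your final formulas are nonetheless correct because these two $O(\ln\beta)$ effects cancel exactly: the normalization-induced rise of the bulk density raises the interaction energy by $+\frac{\pi}{2L^2}\ln\beta$ while the core deficit lowers it by the same amount, so the interaction term indeed carries no logarithm --- but this cancellation is exactly what must be established, and it is invisible under your claim that each effect is separately negligible. To repair the proof, either track both logarithmic core integrals (as the paper does implicitly by computing $\mu_1$ from normalization first and $E_1$ from $E=\mu-\frac{\beta}{2}\int|\phi|^4$ afterwards), or justify the cancellation directly, e.g. by writing $\rho_1=\bar\rho+\delta\rho$ with $\int_\Omega\delta\rho\,d\bx=0$ and showing $\frac{\beta}{2}\int_\Omega\rho_1^2\,d\bx=\frac{\beta}{2}\int_\Omega\bar\rho^2\,d\bx+O(1)$.
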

\begin{proof}
From Lemma \ref{box:ground_weak_degenerate2}, when $0<\beta\ll1$,
the first excited state needs to be taken as a vortex-type solution.
By assuming that there is no band crossing when $\beta>0$,
then the first excited state can be well approximated by the vortex-type solution when $\beta\gg1$ too. Thus when $\beta\gg1$, we approximate
the first excited state via a matched asymptotic approximation.

(i) In the outer region, i.e. $|\bx-(L/2,L/2)|>o(1)$, it is approximated
by the ground state profile as
\be
\phi_1^\beta(\bx)\approx \phi^{\rm{out}}(\mathbf{x})=\sqrt{\frac{\mu_1}
{\beta}}\phi^{(1)}_{L,\mu_1}(x_1)\phi^{(1)}_{L,\mu_1}(x_2),
\ee
where $\phi^{(1)}_{L,\mu}(x)$ is given in (\ref{def:TF_basic}) with $\mu=\mu_1(\beta)$  the chemical potential of the first excited state.

(ii) In the inner region near the center $(L/2,L/2)$, i.e. $|\bx-(L/2,L/2)|\ll1$, it is approximated by a
vortex solution with winding number $m=1$ as
\begin{equation}\label{sol:vortex}
\phi_1^\beta(\bx)\approx \phi^{\rm{in}}(\mathbf{x})=\sqrt{\frac{\mu_1}
{\beta}}f(r)e^{i\theta}, \qquad |\bx-(L/2,L/2)|\ll1,
\end{equation}
where $r$ and $\theta$ are the modulus and argument of $(x_1-L/2)+i(x_2-L/2)$, respectively.
Substituting (\ref{sol:vortex}) into (\ref{eq:eig}), we get the equation for $f(r)$
\begin{equation}\label{eq:vortex}
-\frac{1}{2}f''(r)-\frac{1}{2r}f'(r)+\frac{1}{2r^2}f(r)+\mu_1 f^3(r)=\mu_1f(r), \qquad r>0,
\end{equation}
with BCs $f(0)=0$ and $\lim_{r\to+\infty}f(r)=1$. When $\beta\gg1$, by dropping the term $-\frac{1}{2}f''(r)$
in (\ref{eq:vortex}) and then solving it analytically, we get
\be\label{sol:vortex_approx}
f(r)\approx f_a(r):=\sqrt{\frac{2\mu_1 r^2}{1+2\mu_1 r^2}},
\qquad r\ge0.
\ee
Combining the outer and inner approximations via the matched asymptotic
technique, we obtain an asymptotic approximation of the density of the
first excited state as
\be\label{box02_degenerate}
\rho_1^\beta(\bx):=|\phi_1^{\beta}(\mathbf{x})|^2\approx \sqrt{\frac{\mu_1}
{\beta}}\left[f_a^2(r)+\left(\phi^{(1)}_{L,\mu_1}(x_1)
\phi^{(1)}_{L,\mu_1}(x_2)\right)^2-1\right], \qquad \bx\in\bar\Omega.
\ee
Substituting \eqref{box02_degenerate} into
the normalization condition $\|\phi_1^\beta\|_2=1$, a detailed computation gives
the approximation of the chemical potential in (\ref{boxstrong_degenerate_mu}).
Plugging \eqref{box02_degenerate} into \eqref{def:mu} and noticing (\ref{boxstrong_degenerate_mu}), a detailed computation implies the approximation
of the energy in (\ref{boxstrong_degenerate_E}).  The details of the computation are omitted here for brevity \cite{Ruan}.
\end{proof}

From lemmas \ref{box:ground_weak}, \ref{box:ground_strong}, \ref{box:ground_weak_degenerate2} and \ref{box:ground_strong_degenerate}, we have the following result about the fundamental gaps for the degenerate case.

\begin{proposition}[For GPE under a box potential in degenerate case]\label{asym:box_degen}
When $\Omega=\Omega_0$ satisfying $L_1=L_2:=L$ and $d\ge2$
and $V(\mathbf{x})\equiv 0$ for $\bx\in\Omega$ in (\ref{eq:eig}), i.e. GPE with a box potential,
we have

(i) when $0\le\beta\ll1$ and $d\ge2$,
\be \label{gapbox2}
\delta_E(\beta)=
\frac{3\pi^2}{2L^2}-\frac{5dA_0^2}{32}\beta+o(\beta),\quad
\delta_{\mu}(\beta)=
\frac{3\pi^2}{2L^2}-\frac{5dA_0^2}{16}\beta+o(\beta);
\ee

(ii) when $\beta\gg1$ and $d=2$,
\be\label{gapbox3}
\delta_E(\beta)=\frac{\pi}{2L^2}\ln(\beta)+\mathcal{O}(1),\quad \delta_{\mu}(\beta)=\frac{\pi}{2L^2}\ln(\beta)+\mathcal{O}(1).
\ee

\end{proposition}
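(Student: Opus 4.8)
The plan is to derive both parts of the proposition by directly subtracting the asymptotic expansions already established in Lemmas \ref{box:ground_weak}, \ref{box:ground_strong}, \ref{box:ground_weak_degenerate2} and \ref{box:ground_strong_degenerate}, so that no new analysis is needed beyond careful bookkeeping of coefficients. For the ground state I would use Lemma \ref{box:ground_weak} in the weak regime and Lemma \ref{box:ground_strong} in the strong regime, both evaluated at $L_1=L_2=L$; this is legitimate because the ground state is always nondegenerate, so its expansion is obtained from the nondegenerate formulas simply by continuity in the $L_j$. For the first excited state I would instead use the degenerate expansions of Lemmas \ref{box:ground_weak_degenerate2} and \ref{box:ground_strong_degenerate}, which are genuinely different from the nondegenerate ones because the energy-minimizing state within $W_1$ is the vortex-type combination $a=\pm i b$ rather than a real standing wave.

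For part (i) I would subtract \eqref{boxsmall} from \eqref{boxsmall_degenerate}. The constant parts $A_2$ cancel and the $\frac{3\pi^2}{2L^2}$ term is inherited from $E_1$, so the only quantity to check is the $O(\beta)$ coefficient, which for the energy is $\left(\frac{13d}{32}-\frac{3^d}{2^{d+1}}\right)A_0^2$. The clean value $-\frac{5d}{32}A_0^2$ appearing in \eqref{gapbox2} rests on the elementary identity $\frac{3^d}{2^{d+1}}=\frac{9d}{16}$, valid precisely for the relevant dimensions $d=2$ and $d=3$; then $\frac{13d}{32}-\frac{9d}{16}=-\frac{5d}{32}$. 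Since the chemical-potential coefficient is exactly twice the energy coefficient in both lemmas, the identical subtraction yields $-\frac{5d}{16}A_0^2$, giving the $\delta_\mu$ statement.

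For part (ii) I would specialize Lemma \ref{box:ground_strong} to $d=2$ and $L_1=L_2=L$ (so that $A_0=1/L$, $A_3=2/L$, $A_4=4/L^2$) and subtract \eqref{boxtfg1}, \eqref{boxtfm1} from \eqref{boxstrong_degenerate_E}, \eqref{boxstrong_degenerate_mu}. The crux, and the only substantive point, is that the two highest-order terms cancel exactly: the $O(\beta)$ parts agree because $\frac{A_0^2}{2}=\frac{1}{2L^2}$ (respectively $A_0^2=\frac{1}{L^2}$ for $\mu$), and the $O(\beta^{1/2})$ parts agree because $\frac{4A_0A_3}{3}=\frac{8}{3L^2}$ (respectively $2A_0A_3=\frac{4}{L^2}$) match the vortex coefficients. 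After these cancellations the surviving leading term is the $\frac{\pi}{2L^2}\ln(\beta)$ carried by the first excited state alone, which is exactly \eqref{gapbox3}; the ground-state constants $2A_3^2-\frac{8A_4}{9}$ (respectively $2A_3^2-A_4$) together with the lemma remainders are absorbed into the $\mathcal{O}(1)$.

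The genuine difficulty lies upstream rather than in this proposition: it sits in the matched asymptotic analysis of Lemma \ref{box:ground_strong_degenerate}, where the $\ln(\beta)$ is produced by integrating the vortex density $f_a^2(r)$ over the inner region. Granting the four lemmas, the present argument is essentially a one-line subtraction in each regime. The point I would nonetheless verify explicitly, rather than treat as automatic, is the exact cancellation of the $\beta$ and $\beta^{1/2}$ terms in part (ii): it is this cancellation that demotes the gap from the $\beta^{1/2}$ growth of the nondegenerate Proposition \ref{asym:box} to mere logarithmic growth, and it reflects the fact that the degenerate first excited state shares the ground state's Thomas-Fermi bulk profile and differs from it only through the vortex core.
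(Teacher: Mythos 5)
Your proposal is correct and is essentially the paper's own argument: the paper states Proposition \ref{asym:box_degen} as following directly from Lemmas \ref{box:ground_weak}, \ref{box:ground_strong}, \ref{box:ground_weak_degenerate2} and \ref{box:ground_strong_degenerate} by exactly the subtractions you perform, including using the nondegenerate ground-state expansions with $L_2\to L_1=L$ and verifying the identity $\frac{3^d}{2^{d+1}}=\frac{9d}{16}$ for $d=2,3$ and the cancellation of the $\beta$ and $\beta^{1/2}$ terms in the strong regime. The only blemish --- that the $o(\ln\beta)$ remainder in Lemma \ref{box:ground_strong_degenerate} strictly yields $\delta_E(\beta)=\frac{\pi}{2L^2}\ln(\beta)+o(\ln\beta)$ rather than the stated $\mathcal{O}(1)$ remainder --- is inherited from the paper itself, not introduced by your argument.
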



Again, to verify numerically our asymptotic results in Proposition \ref{asym:box_degen}, Fig.~\ref{fig:box2d_sol_degenerate} plots
the ground state $\phi_g^\beta$, the first excited state
$\phi_1^\beta=\phi_{1,v}^\beta$, and other excited states
$\phi_{1,x}^\beta$ and $\phi_{1,c}^\beta$, of the GPE in 2D with $\Omega=(0,2)^2$
and a box potential for different $\beta\ge0$,
which were obtained numerically \cite{Bao_comp1,comp_gf,Bao2013,Wz1}.
Fig.~\ref{fig:box_2d_degenerate} depicts the energy $E_g(\beta)=E(\phi_g^\beta)<E_1(\beta)=E(\phi_1^\beta=\phi_{1,v}^\beta)
<E(\phi_{1,x}^\beta)
<E(\phi_{1,c}^\beta)$ for different $\beta\ge0$ and the corresponding
fundamental gaps in energy, and Fig.~\ref{fig:box_3d_E_degenerate} shows
the fundamental gaps in energy of GPE in 3D with $\Omega=(0,1)^3$ and a box
potential. In addition, Fig.~\ref{fig:box2d_disk_degenerate}
depicts the fundamental gaps in energy of GPE in 2D with $\Omega=B_1({\bf 0})=\{\bx\ |\ |\bx|<1\}$ and a box
potential.

\begin{figure}[!]
\centerline{\psfig{figure=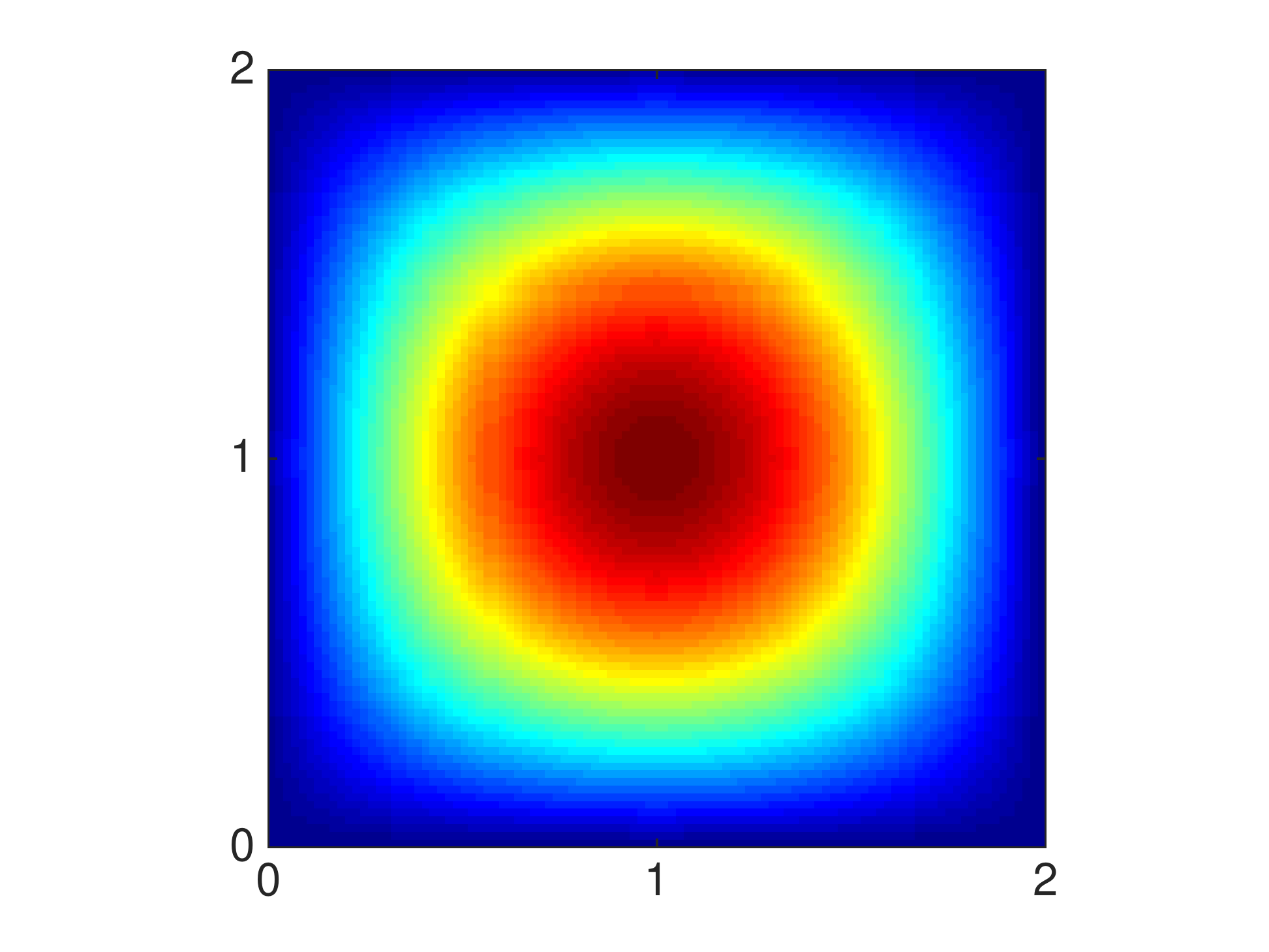,height=3.5cm,width=5cm,angle=0}
\psfig{figure=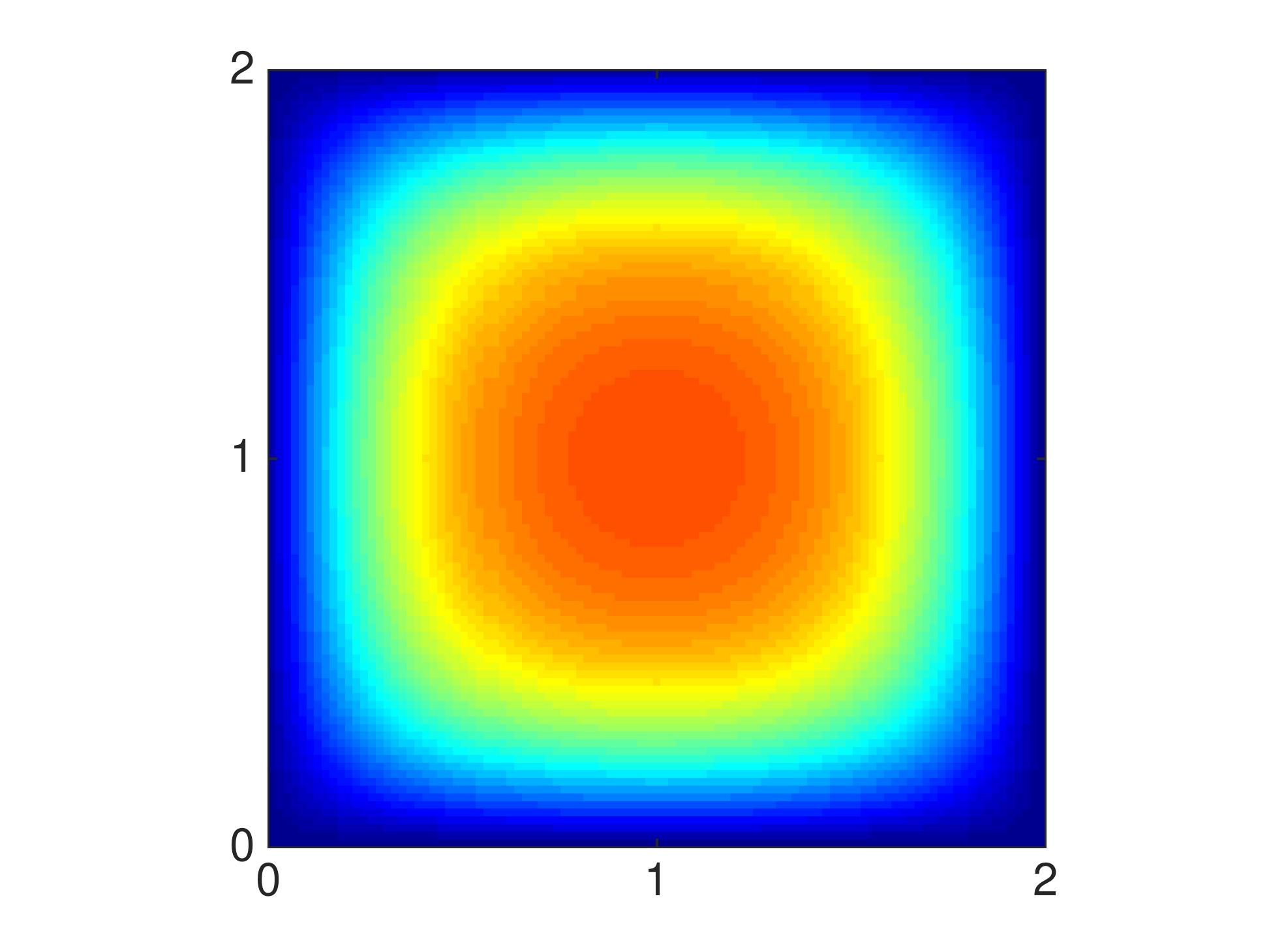,height=3.5cm,width=5cm,angle=0}
\psfig{figure=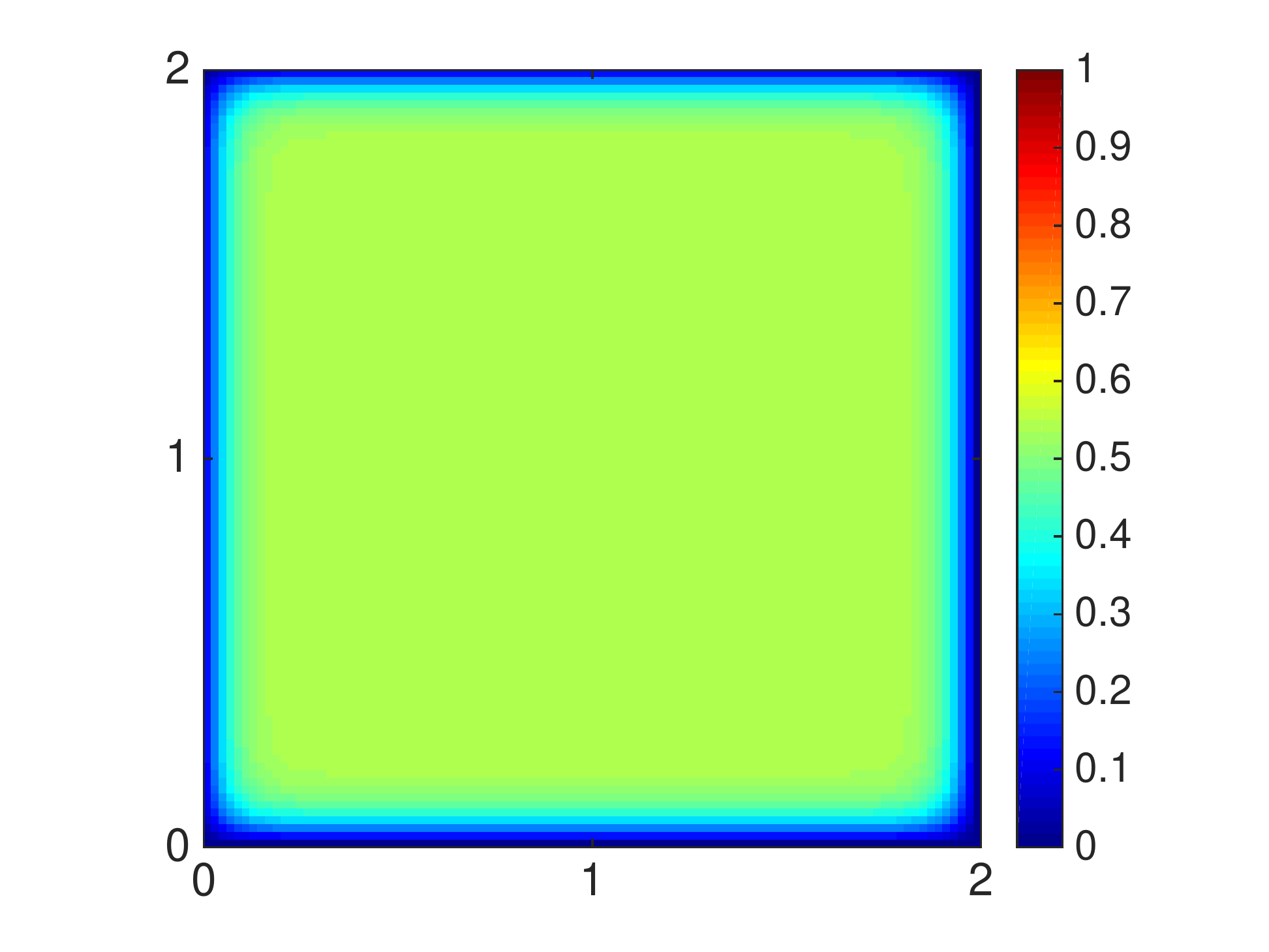,height=3.5cm,width=5cm,angle=0}}
\centerline{\psfig{figure=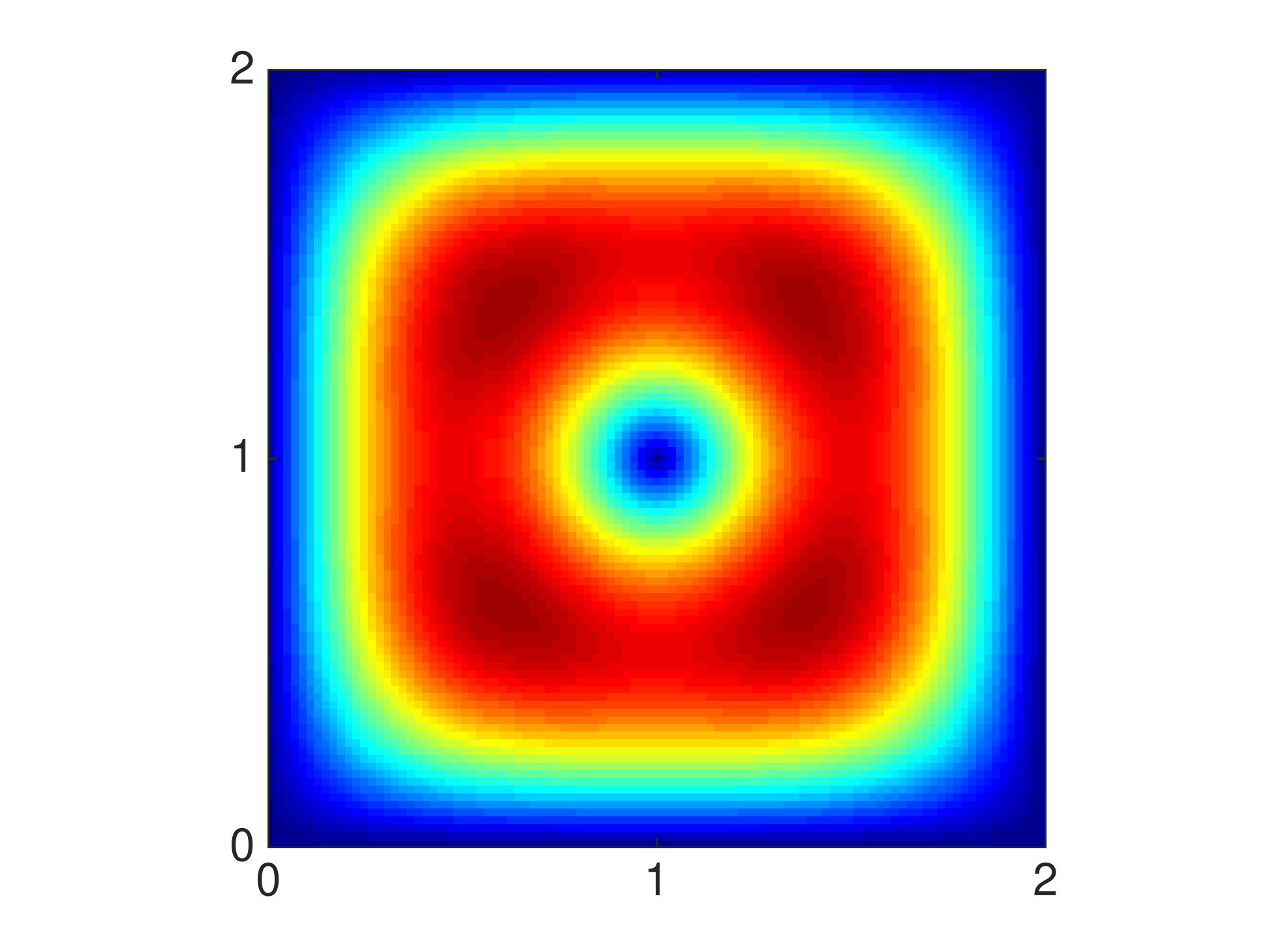,height=3.5cm,width=5cm,angle=0}
\psfig{figure=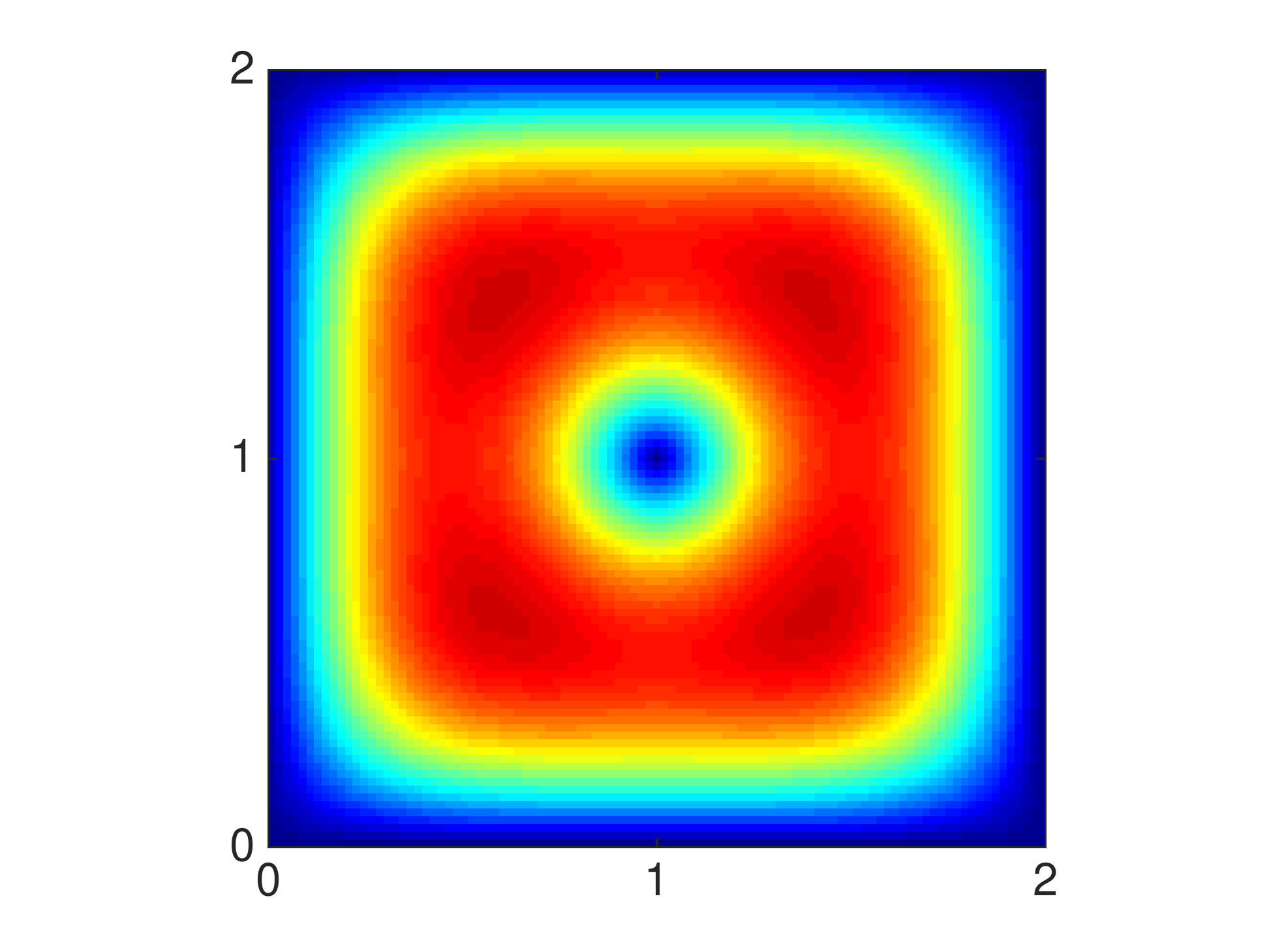,height=3.5cm,width=5cm,angle=0}
\psfig{figure=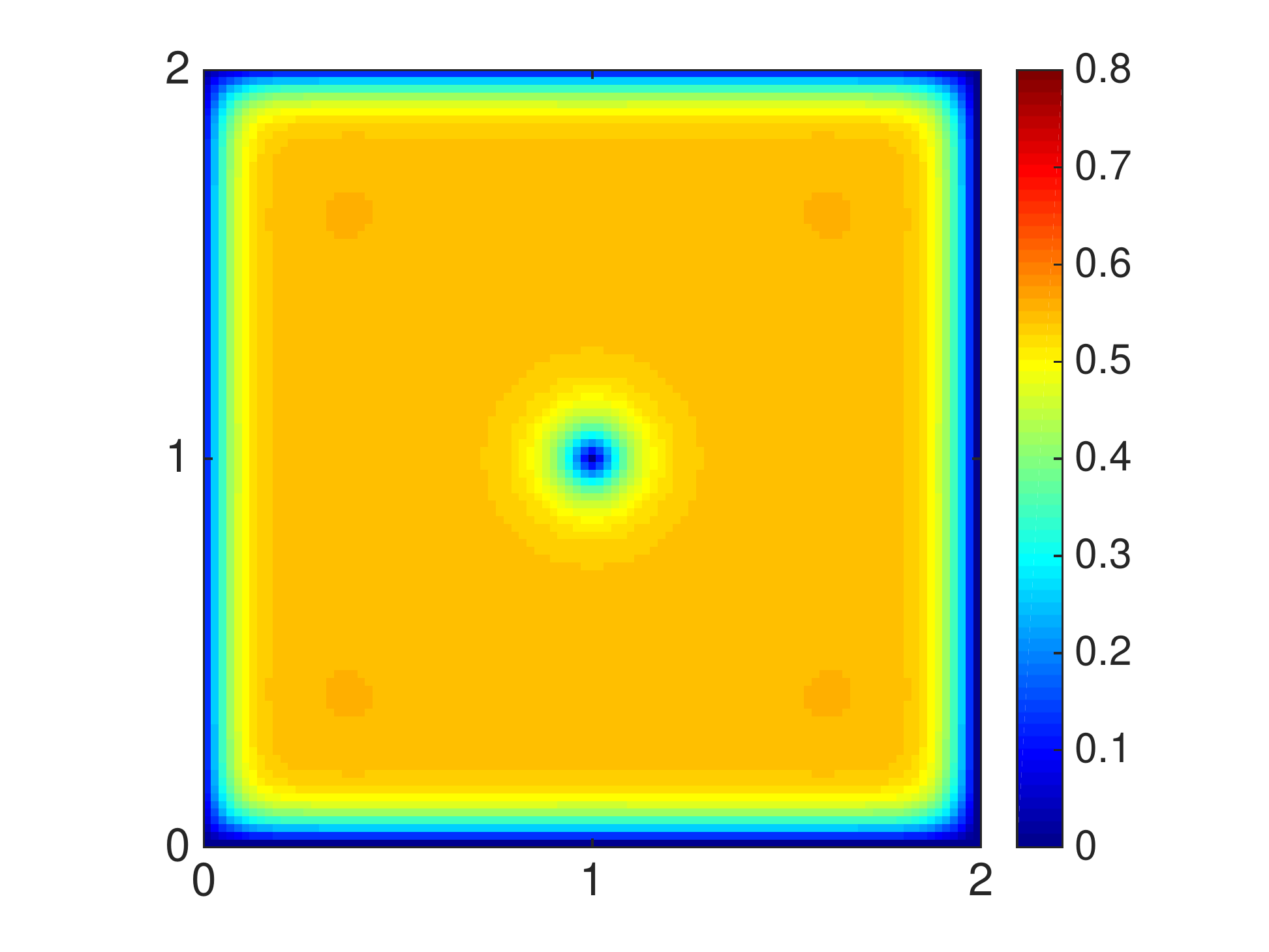,height=3.5cm,width=5cm,angle=0}}
\centerline{\psfig{figure=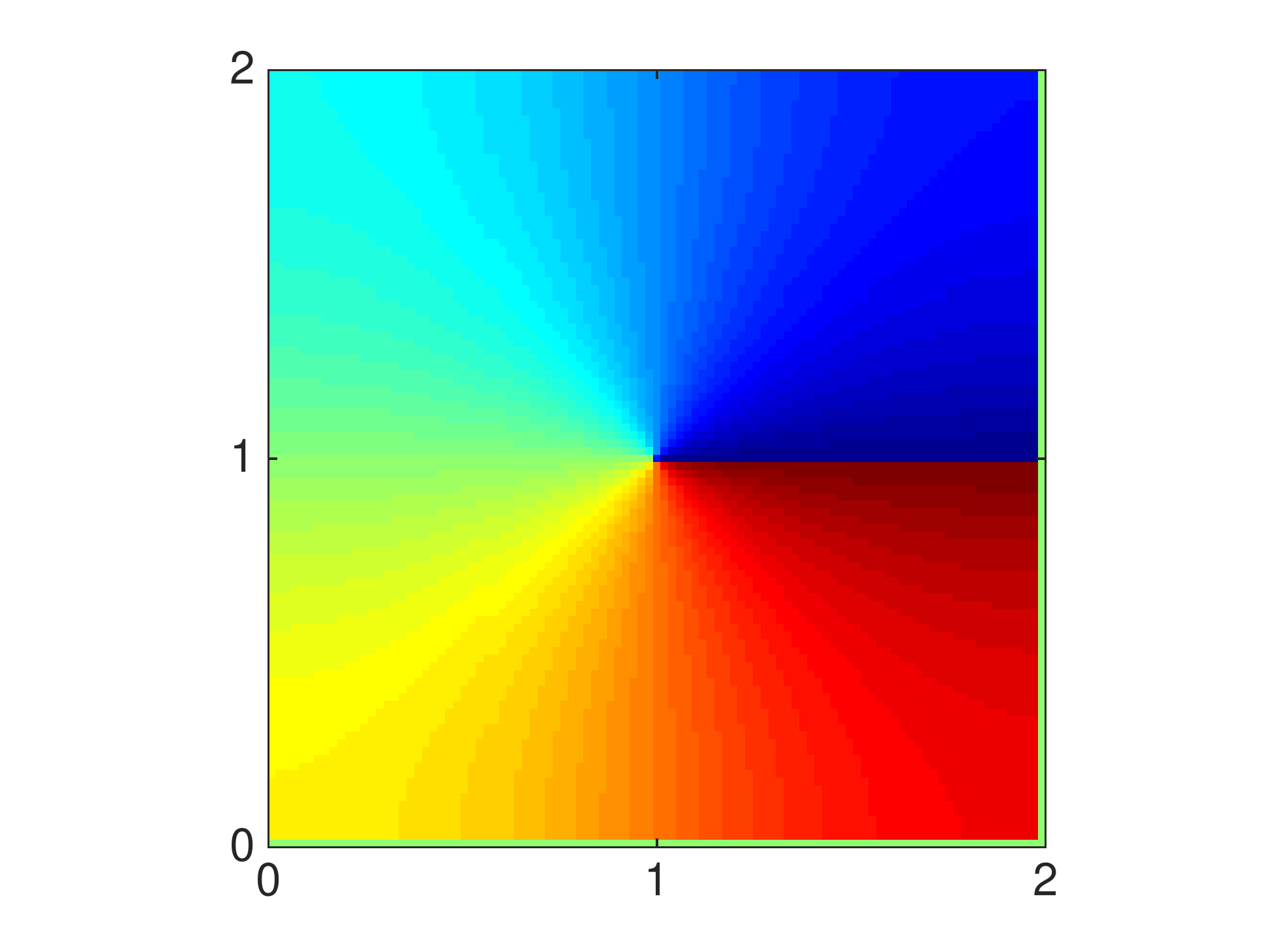,height=3.5cm,width=5cm,angle=0}
\psfig{figure=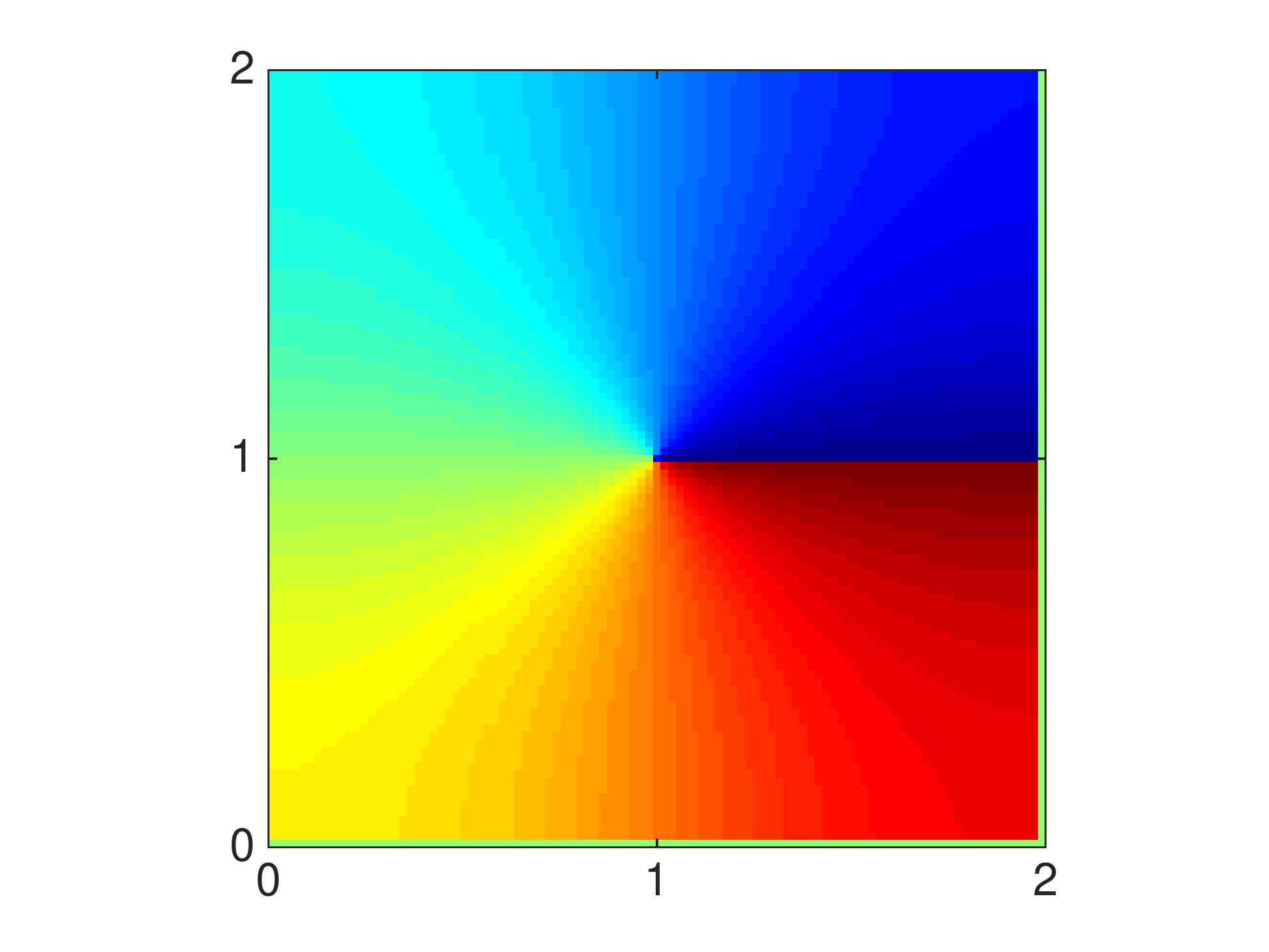,height=3.5cm,width=5cm,angle=0}
\psfig{figure=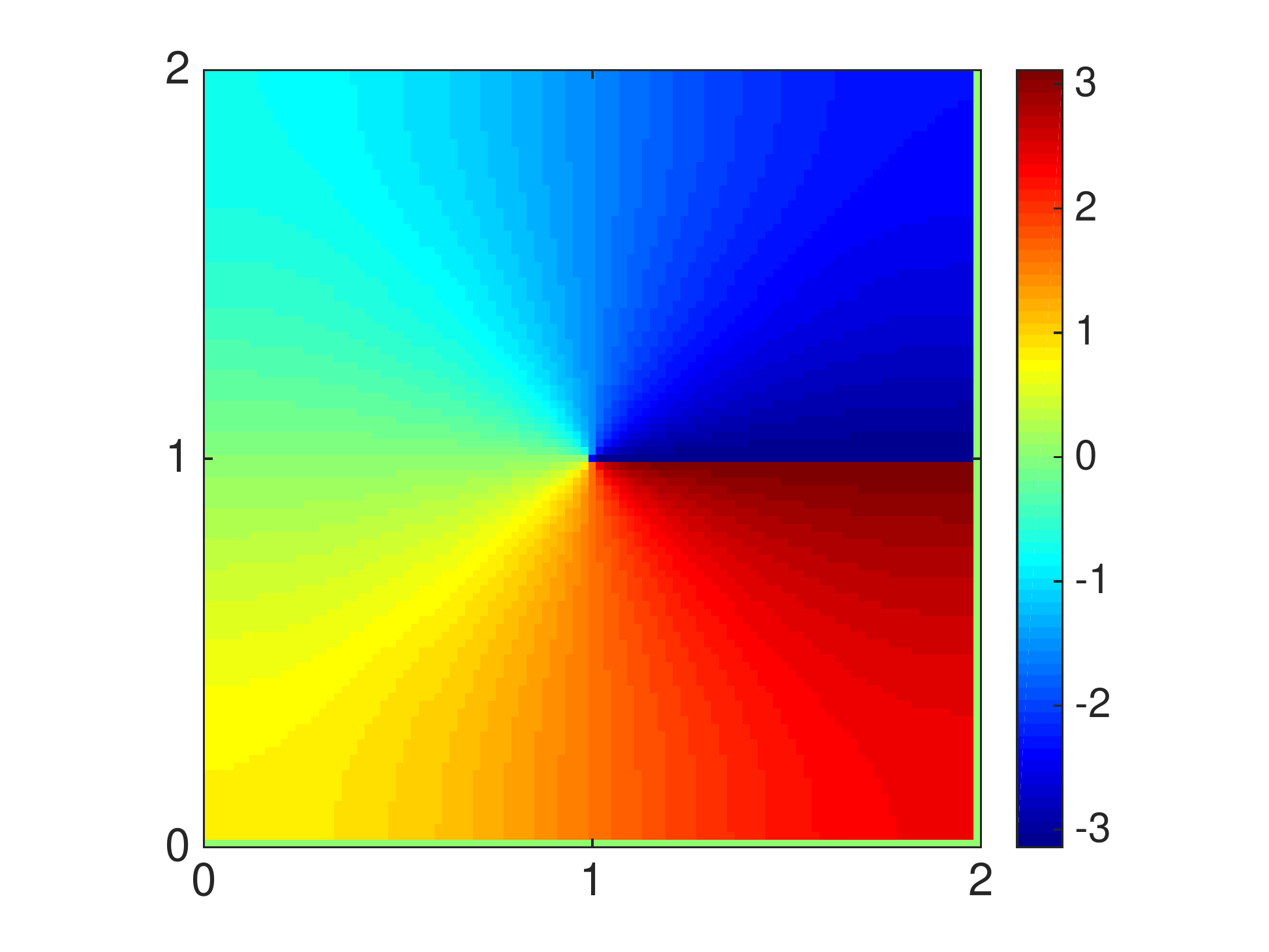,height=3.5cm,width=5cm,angle=0}}
\centerline{\psfig{figure=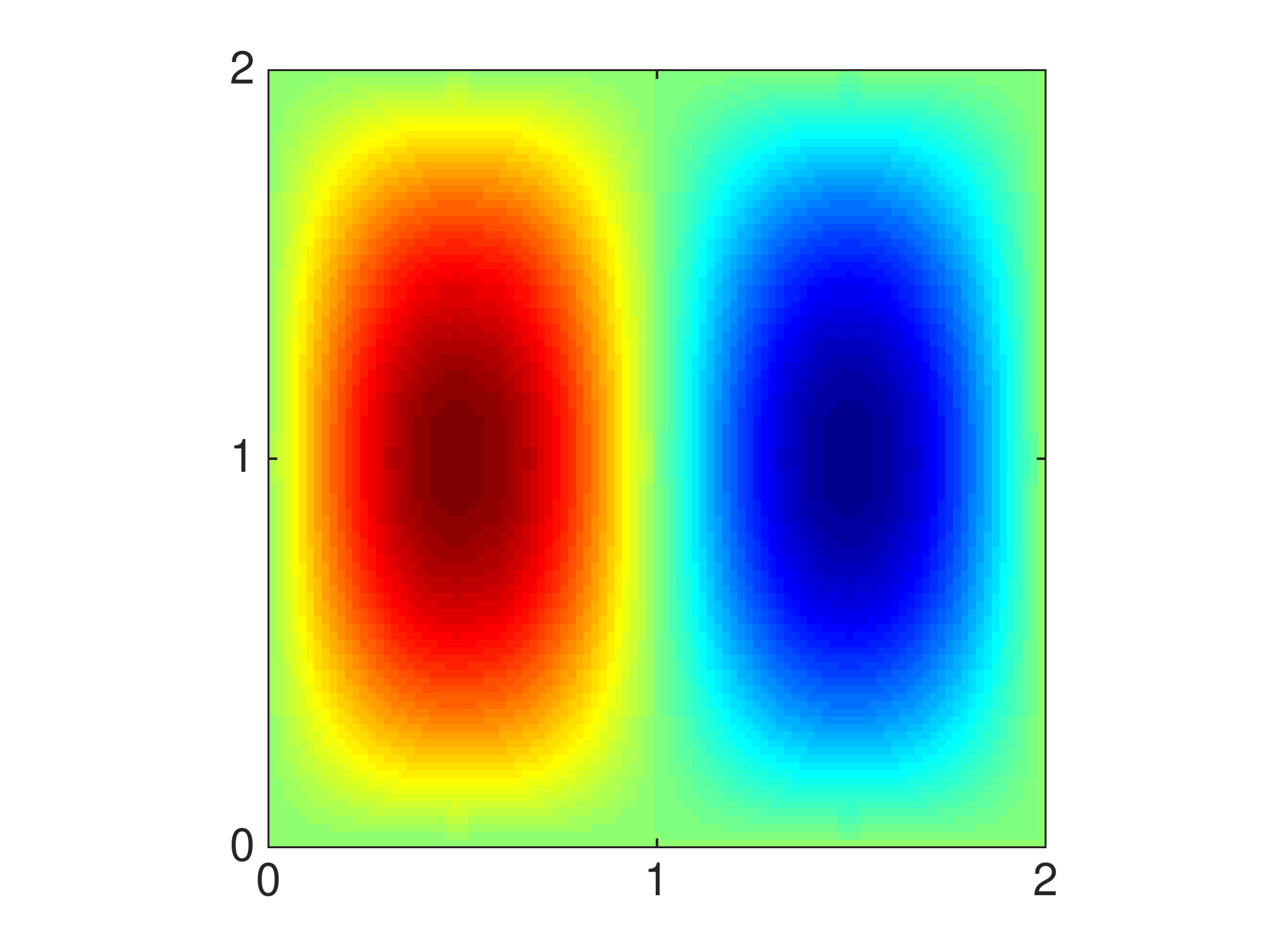,height=3.5cm,width=5cm,angle=0}
\psfig{figure=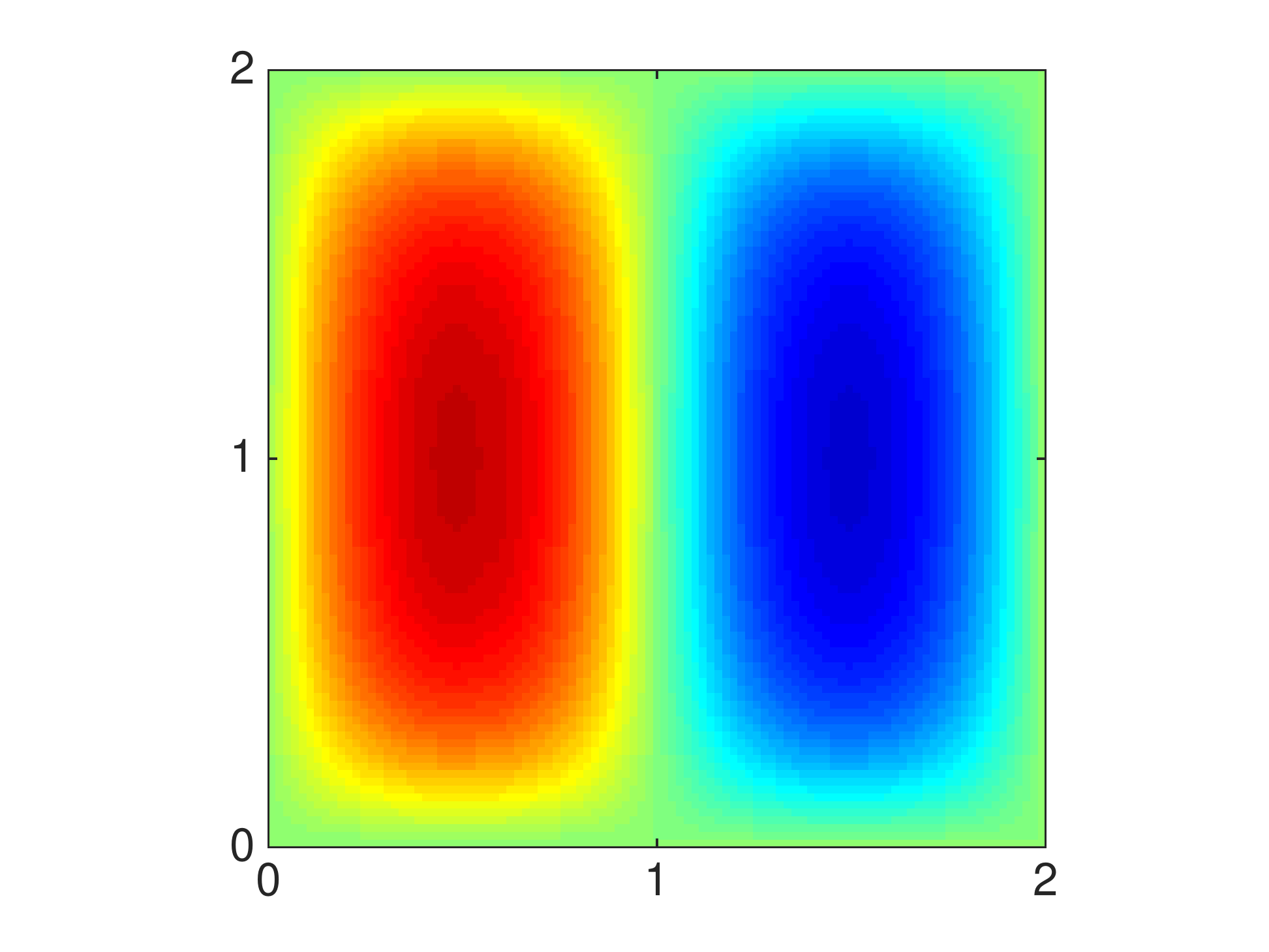,height=3.5cm,width=5cm,angle=0}
\psfig{figure=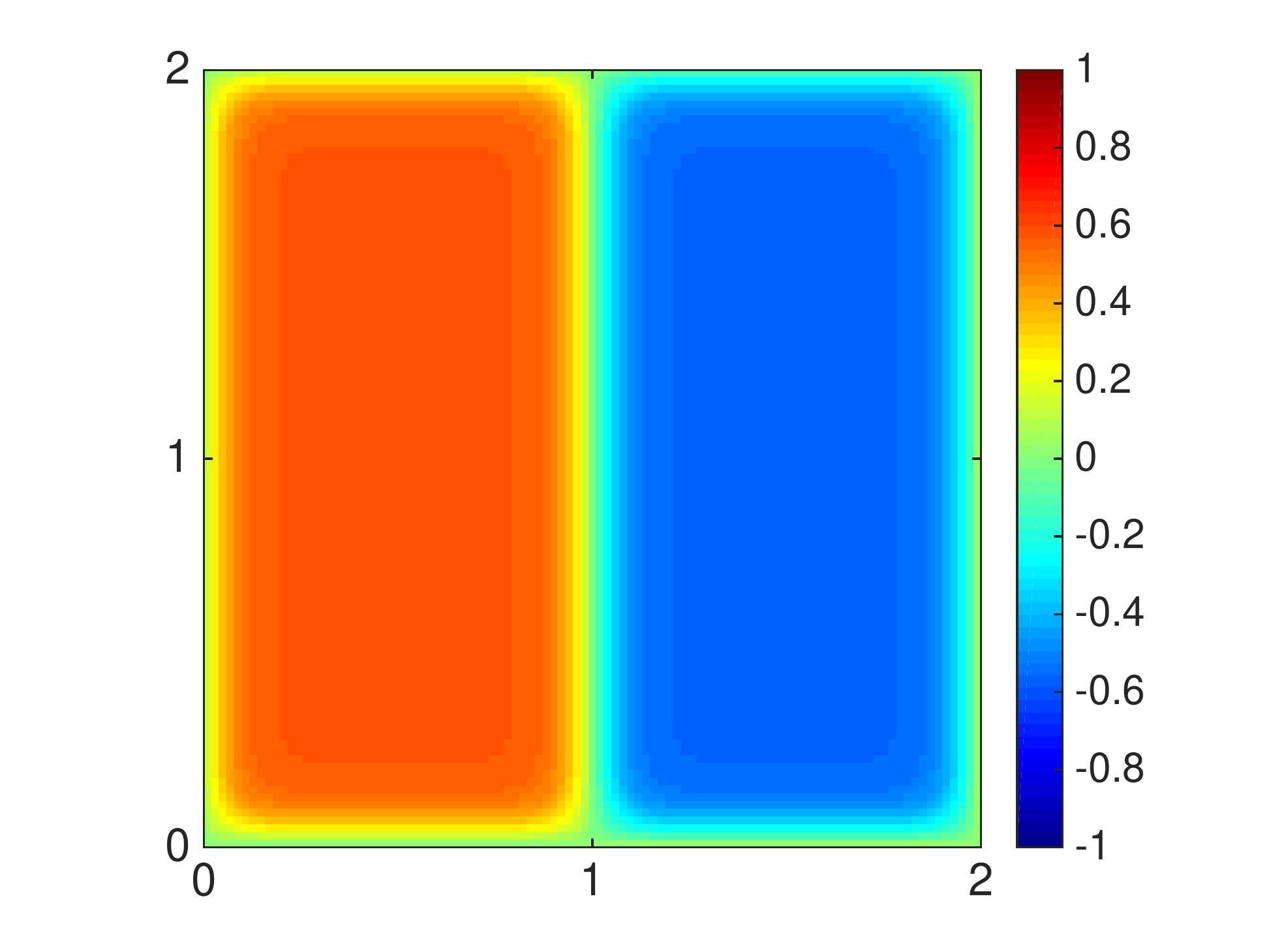,height=3.5cm,width=5cm,angle=0}}
\centerline{\psfig{figure=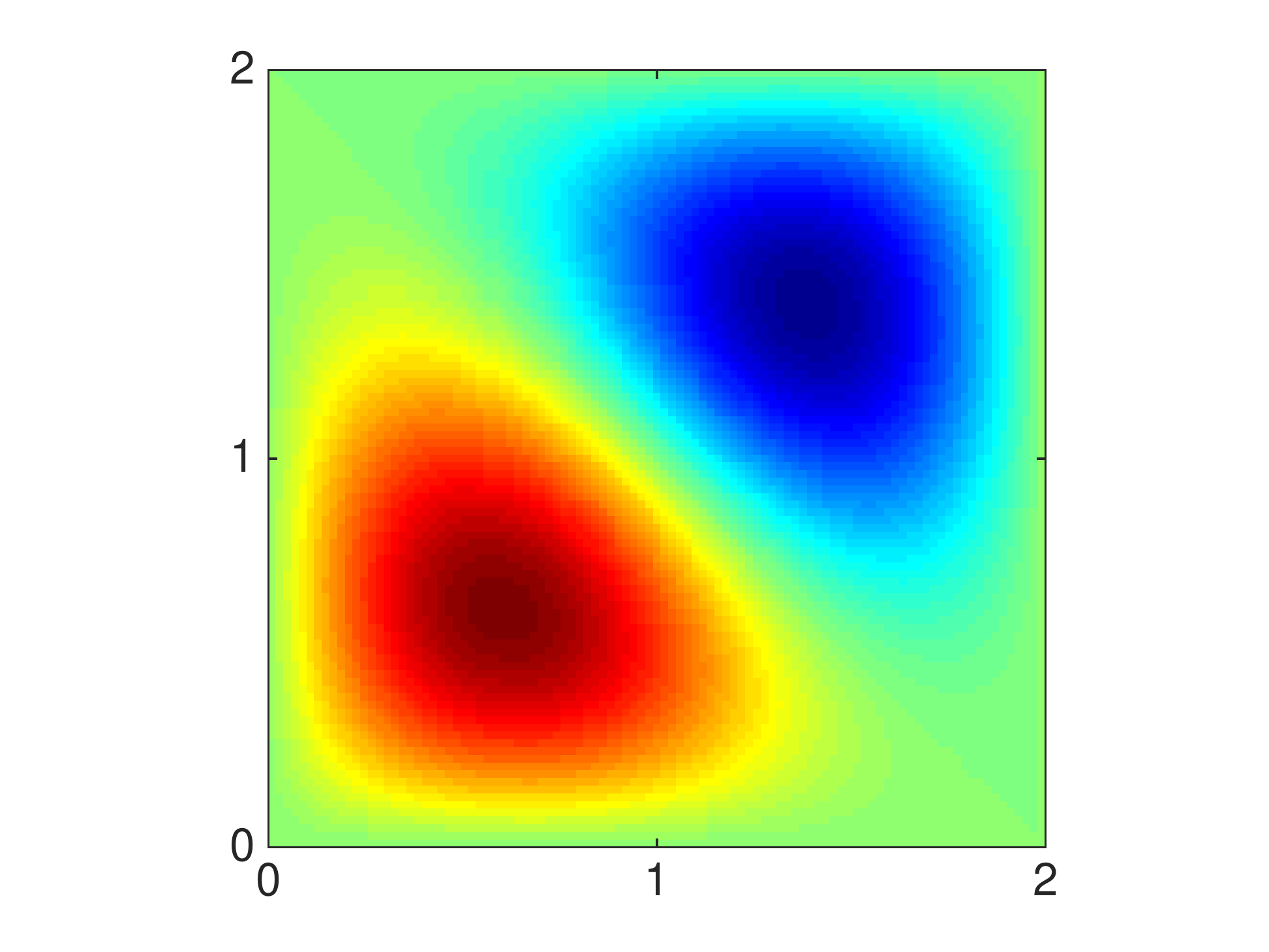,height=3.5cm,width=5cm,angle=0}
\psfig{figure=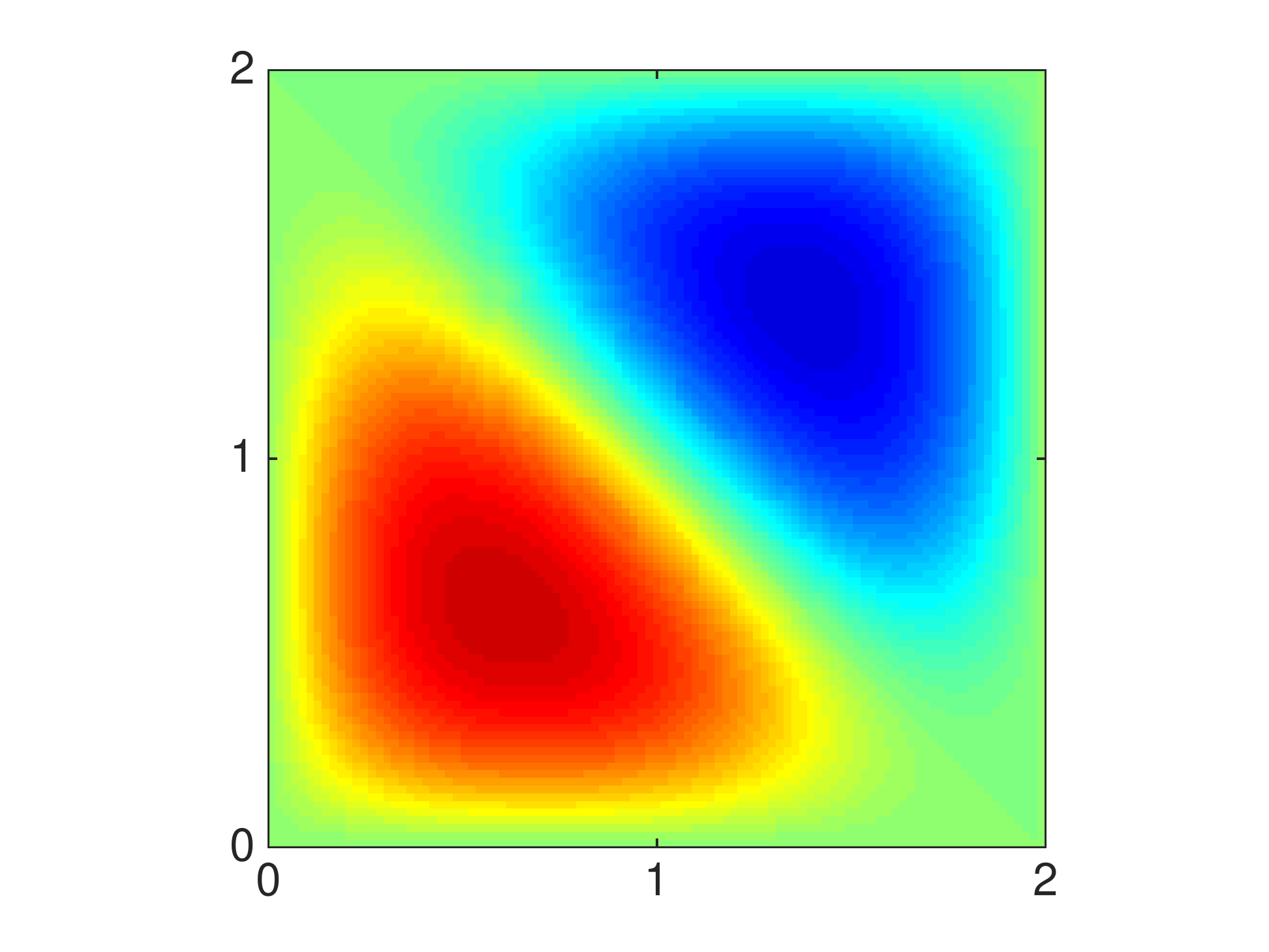,height=3.5cm,width=5cm,angle=0}
\psfig{figure=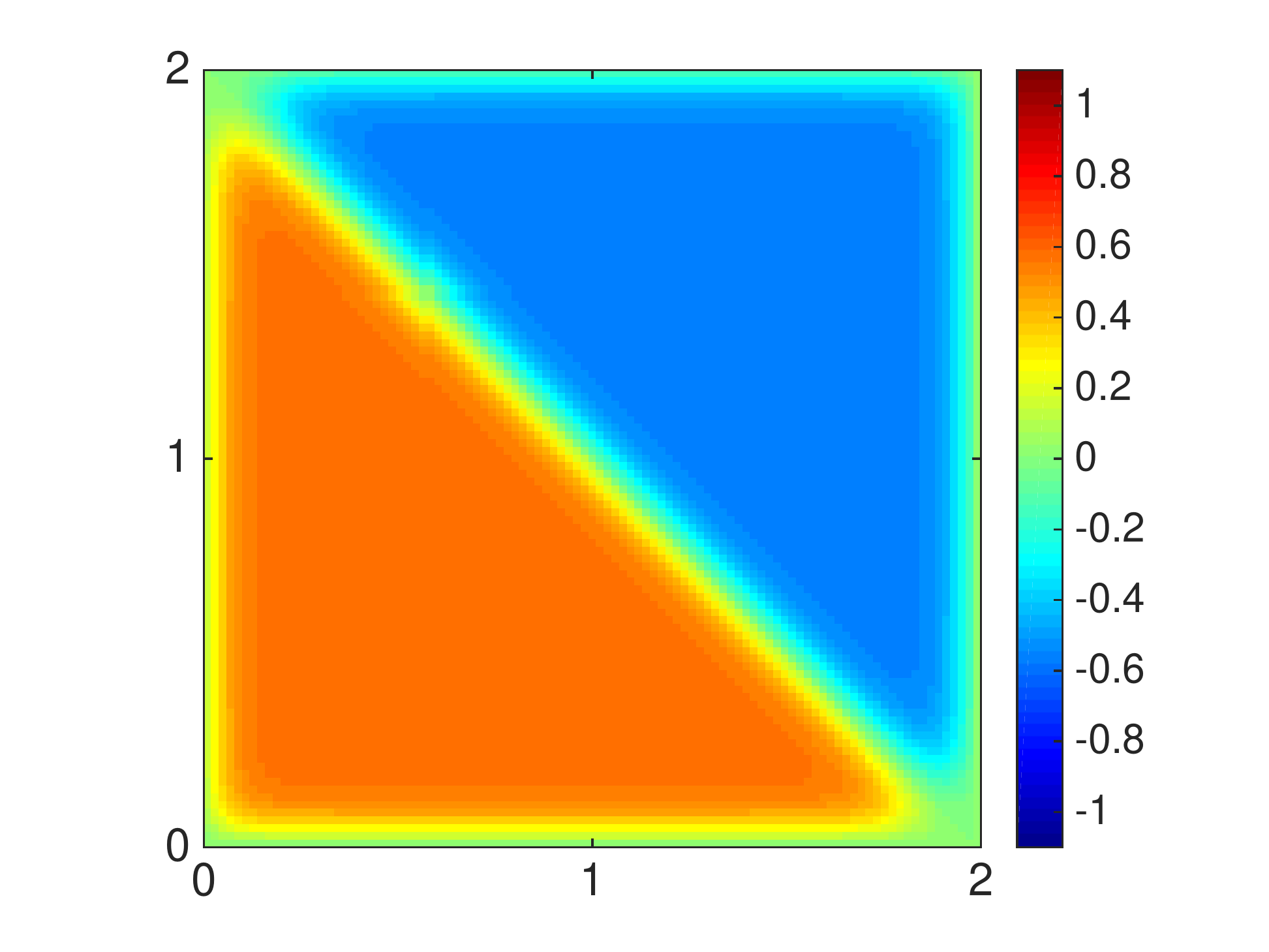,height=3.5cm,width=5cm,angle=0}}
\caption{Ground states $\phi_g^\beta$ (top row), first excited states -- vortex solution $|\phi_1^\beta=\phi_{1,v}^\beta|$ (second row), excited states in the $x_1$-direction $\phi_{1,x}^\beta$(fourth row) and excited states in the diagonal direction $\phi_{1,c}^\beta$ (fifth row) for $\beta=0$ (left column), $\beta=10$ (middle column) and $\beta=500$ (right column).
Here the phase of the vortex solution -- first excited state -- is displayed
in the third row.}
\label{fig:box2d_sol_degenerate}
\end{figure}

\begin{figure}[htb]
\centerline{\psfig{figure=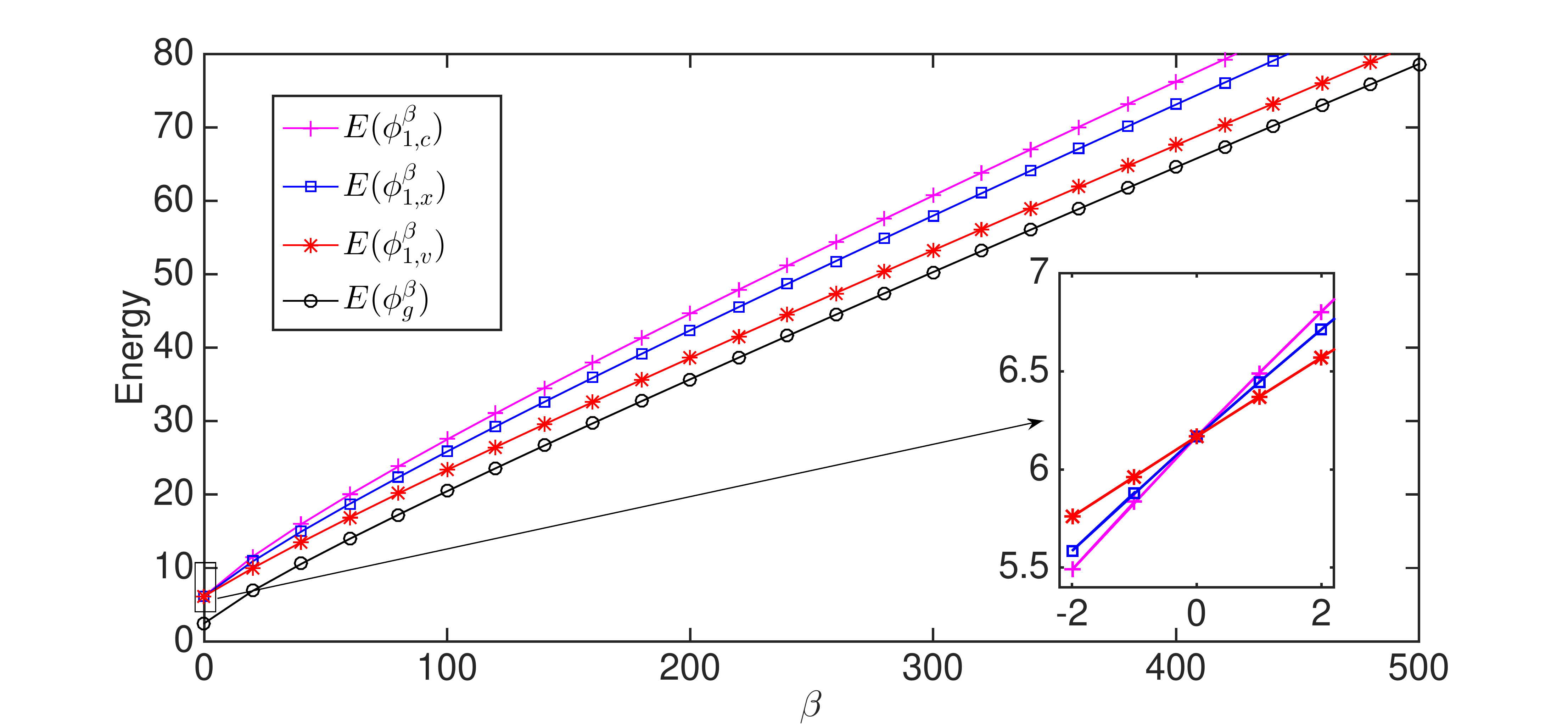,height=4cm,width=13cm,angle=0}}
\centerline{\psfig{figure=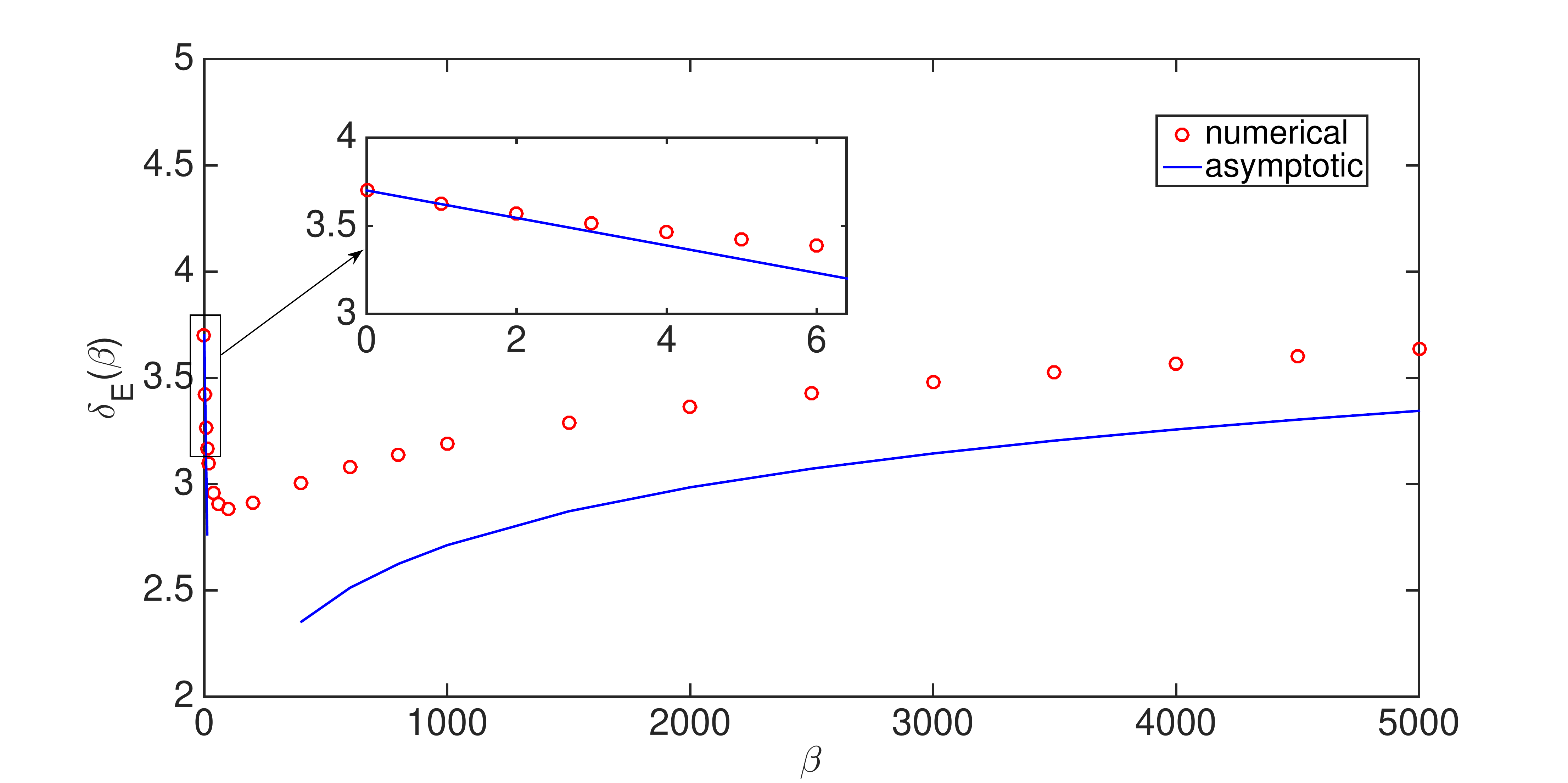,height=4cm,width=13cm,angle=0}}
\caption{Energy $E_g(\beta):=E(\phi_g^{\beta})<E_1(\beta):=E(\phi_1^\beta=\phi_{1,v}^{\beta})
<E_2(\beta):=E(\phi_{1,x_1}^{\beta}) =E(\phi_{1,x_2}^{\beta})<E_3(\beta):=E(\phi_{1,c}^{\beta})$ of
GPE in 2D under a box potential with $\Omega=(0,2)^2$ for different $\beta\ge0$ (top)  and the fundamental gaps in energy  $\delta_E(\beta)$
(bottom). Here a band crossing in energy
happens at $\beta=0$ for the excited states (cf. top). }
\label{fig:box_2d_degenerate}
\end{figure}

\begin{figure}[h!]
\centerline{\psfig{figure=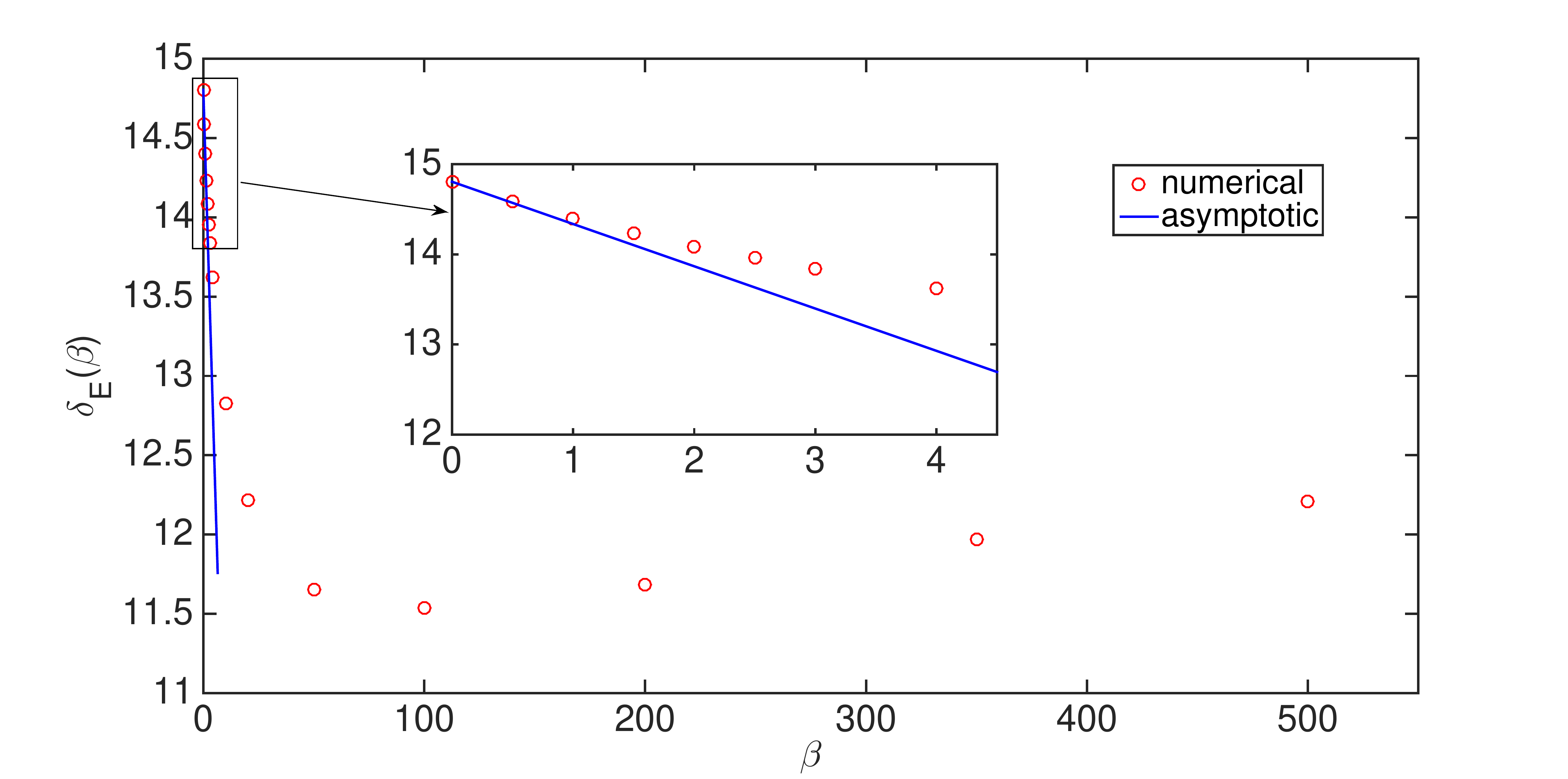,height=4cm,width=13cm,angle=0}}
\caption{The fundamental gaps in energy of GPE in 3D under a box potential with $\Omega=(0,1)^3$.}
\label{fig:box_3d_E_degenerate}
\end{figure}

\begin{figure}[t!]
\centerline{
\psfig{figure=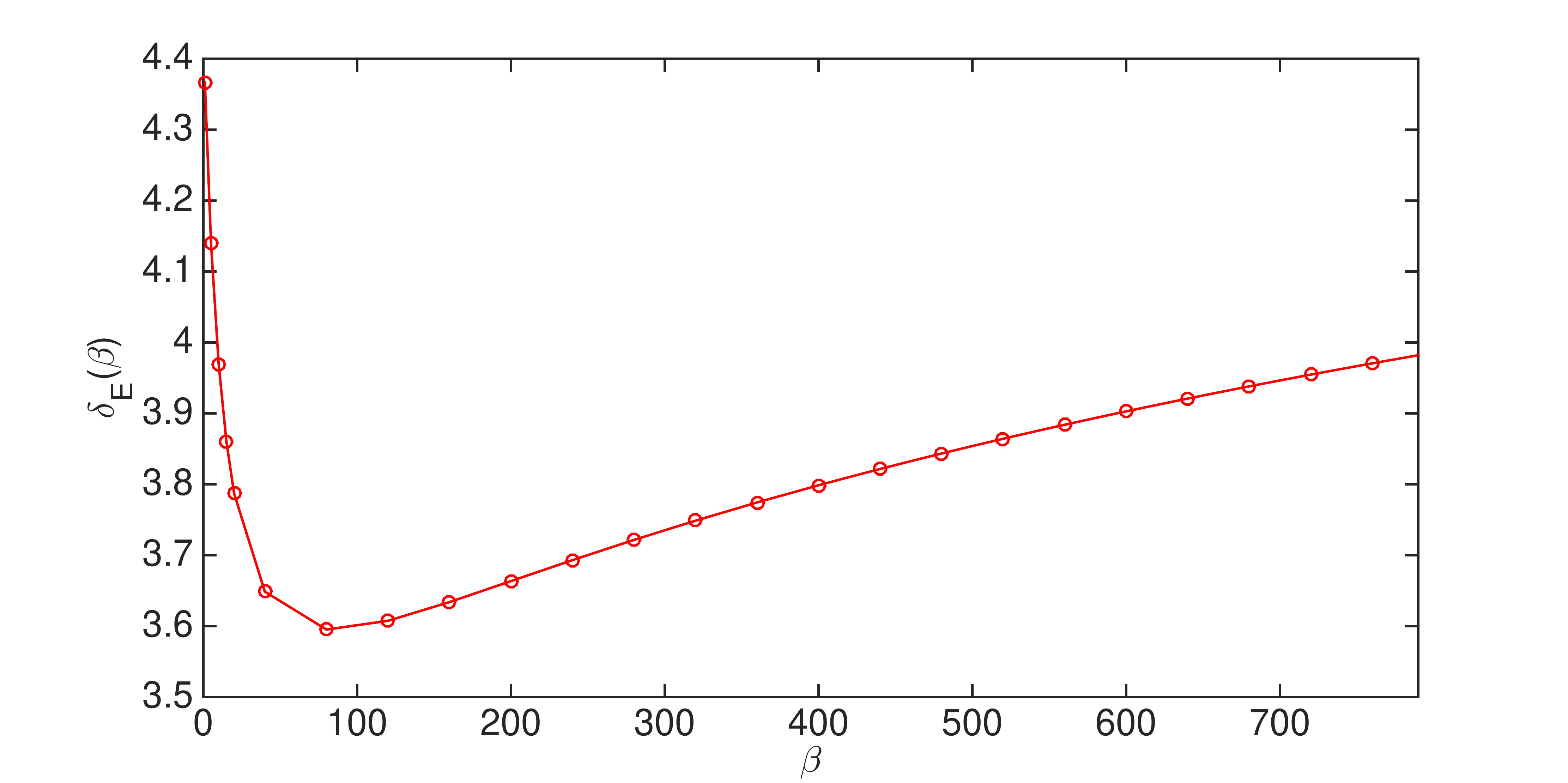,height=3.5cm,width=10cm,angle=0}}
\caption{
The fundamental gaps in energy of GPE under a box potential with $\Omega=B_1({\bf 0})$. It is obviously that the fundamental gap is larger than the lower bound proposed in the gap conjecture  (\ref{bgap976_degenerate}), which is $\frac{\pi^2}{2D^2}\approx1.234$.}
\label{fig:box2d_disk_degenerate}
\end{figure}

Based on the asymptotic results in Proposition \ref{asym:box_degen} and
the above numerical results as well as additional extensive numerical results not shown here for brevity \cite{Ruan},
we speculate the following gap conjecture.

\textbf{Gap conjecture}
(For GPE in 2D on a bounded domain with homogeneous Dirichlet BC in degenerate case)
Suppose $\Omega\subset\mathbb{R}^2$ is a convex bounded domain,
  the external potential $V(\mathbf{x})$ is convex and
   $\dim(W_1)\ge2$, we have
\begin{equation}\label{bgap976_degenerate}
\delta^{\infty}_E:=\inf_{\beta\ge0} \delta_E(\beta)\ge\frac{\pi^2}{2D^2},
\qquad\delta^{\infty}_{\mu}:=\inf_{\beta\ge0} \delta_\mu(\beta)\ge\frac{3\pi^2}{8D^2}.
\end{equation}



\section{Fundamental gaps of GPE in the whole space}\label{har}
\setcounter{equation}{0}
\setcounter{figure}{0}
In this section, we obtain asymptotically the fundamental gaps of the GPE \eqref{eq:eig} in the whole space under a harmonic potential and numerically under general potentials growing at least quadratically in the far field.
Based on the results, we formulate a novel gap conjecture for this case.
Here we take $\Omega={\mathbb R}^d$ and denote $V_h(\mathbf{x})=\frac{1}{2}\sum\limits_{j=1}^d\gm_j^2x_j^2$  satisfying
$0<\gm_1 \le  \gm_2 \le \dots \le \gm_d$.

\subsection{Nondegenerate case, i.e. $\dim(W_1)=1$}\label{sec:har_asym}
We first consider the special case by taking
$V(\mathbf{x})=V_h(\bx)$ satisfying $d=1$ or $\gm_1 <  \gm_2$ when $d\ge2$.
For simplicity, we define
\be\label{harct11}
B_0=\prod_{j=1}^d\sqrt{\frac{\gm_j}{2\pi}},\quad B_1=\frac{1}{2}\sum_{j=1}^d\gamma_j,\quad
B_2=\prod_{j=1}^d\gm_j, \quad C_d=
\begin{cases}
2, &d=1,\\
\pi, &d=2,\\
\frac{4\pi}{3}, &d=3.
\end{cases}
\ee

In this scenario, when $\beta=0$, all eigenfunctions can be
obtained via the Hermite functions \cite{BaoL,LZBao}. Thus the ground state $\phi_g^0(\bx)$
and the first excited state
$\phi_1^0(\bx)$ can be given explicitly as \cite{BaoL,LZBao}
\be\label{har00}
\phi_g^0(\mathbf{x})=\prod_{j=1}^d\left(\frac{\gm_j}{\pi}
\right)^{\frac{1}{4}}e^{-\frac{\gm_jx_j^2}{2}},
\quad  \phi_1^{0}(\mathbf{x})=\sqrt{2\gm_1}x_1\prod_{j=1}^d\left(\frac{\gm_j}{\pi}
\right)^{\frac{1}{4}}e^{-\frac{\gm_jx_j^2}{2}},\quad \bx\in{\mathbb R}^d.
\ee

\begin{lemma}\label{har:ground_weak}
In the weakly repulsive interaction regime, i.e. $0<\beta\ll1$, we have
\begin{align} \label{harsmall}
E_g(\beta)&=B_1+\frac{B_0}{2}\beta+o(\beta),\quad
\mu_g(\beta)=B_1+B_0\beta+o(\beta),\quad 0\le \beta\ll1,\\
\label{harsmall1}
E_1(\beta)&=\gm_1+B_1+\frac{3B_0}{8}\beta+o(\beta), \quad
\mu_1(\beta)=\gm_1+B_1+\frac{3B_0}{4}\beta+o(\beta).
\end{align}
\end{lemma}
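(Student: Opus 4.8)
The plan is to mirror the proof of Lemma~\ref{box:ground_weak}: in the weakly repulsive regime $0<\beta\ll1$ I would approximate the interacting states by the noninteracting Hermite states, i.e. $\phi_g^\beta(\bx)\approx\phi_g^0(\bx)$ and $\phi_1^\beta(\bx)\approx\phi_1^0(\bx)$ with the explicit profiles in (\ref{har00}), and then substitute these into the definitions (\ref{def:E}) and (\ref{def:mu}). The justification that these unperturbed profiles are accurate enough to capture the $O(\beta)$ coefficients is the envelope theorem: for any constrained critical point $\phi^\beta$ of the energy (the Euler--Lagrange equation being exactly the GPE (\ref{eq:eig}) with multiplier $\mu$), one has $\frac{dE}{d\beta}=\frac{\partial E_\beta}{\partial\beta}+\langle\nabla_\phi E_\beta,\frac{d\phi^\beta}{d\beta}\rangle$, and at a constrained critical point $\nabla_\phi E_\beta=\mu\phi^\beta$, so the second term equals $\frac{\mu}{2}\frac{d}{d\beta}\|\phi^\beta\|_2^2=0$ by (\ref{norm}). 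Hence $\frac{dE_g}{d\beta}|_{\beta=0}=\frac12\int_{\mathbb R^d}|\phi_g^0|^4\,d\bx$ and similarly for $\phi_1^0$, which are precisely the coefficients extracted below.

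For the quadratic (kinetic plus harmonic-potential) part of the energy I would simply note that $\phi_g^0$ and $\phi_1^0$ are the exact ground and first excited eigenfunctions of $-\tfrac12\Delta+V_h$, with eigenvalues $B_1=\tfrac12\sum_j\gm_j$ and $\gm_1+B_1$ respectively (the excitation living in the $x_1$-direction, the one carrying the smallest frequency $\gm_1$, which is unambiguous in the nondegenerate case $\gm_1<\gm_2$). This produces the $\beta$-independent leading terms $B_1$ and $\gm_1+B_1$ appearing in (\ref{harsmall})--(\ref{harsmall1}).

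The only genuine computation is the quartic interaction term $\frac{\beta}{2}\int_{\mathbb R^d}|\phi|^4\,d\bx$, which factorizes into one-dimensional Gaussian moments. For the ground state this is $\prod_{j=1}^d\frac{\gm_j}{\pi}\int_{\mathbb R}e^{-2\gm_jx_j^2}\,dx_j$, which via $\int_{\mathbb R}e^{-ax^2}dx=\sqrt{\pi/a}$ collapses to $B_0=\prod_j\sqrt{\gm_j/(2\pi)}$; thus the energy gains $\frac{B_0}{2}\beta$ and, by (\ref{def:mu}), the chemical potential gains $B_0\beta$. For the first excited state the $x_1$-factor carries the extra weight $2\gm_1x_1^2$, so the relevant one-dimensional factor becomes $\int_{\mathbb R}x_1^4e^{-2\gm_1x_1^2}dx_1=\frac{3}{4(2\gm_1)^2}\sqrt{\pi/(2\gm_1)}$; after multiplying by the prefactor $4\gm_1^2$ the $\gm_1^2$'s cancel and leave a factor $\tfrac34$ relative to the ground-state moment, giving $\int_{\mathbb R^d}|\phi_1^0|^4\,d\bx=\tfrac34 B_0$ and hence the corrections $\frac{3B_0}{8}\beta$ and $\frac{3B_0}{4}\beta$.

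I do not anticipate any serious obstacle: the argument is a substitution-plus-Gaussian-moment computation, and the main care needed is the bookkeeping that the $x_1^4$ moment produces exactly the factor $3/4$ relative to the ground state, since this single factor is what distinguishes the excited-state coefficients from those of the ground state. The one conceptual point worth stating explicitly, rather than burying in the phrase \emph{``after a detailed computation,''} is the envelope-theorem justification above, which is what guarantees that the unperturbed profiles yield the correct $o(\beta)$ asymptotics without needing the first-order correction $\varphi_g$ that would be required (as in Lemma~\ref{asym:boxbt}) to reach the $O(\beta^2)$ level.
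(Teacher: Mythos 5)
Your proposal is correct and follows essentially the same route as the paper's proof: approximate $\phi_g^\beta$ and $\phi_1^\beta$ by the Hermite eigenstates $\phi_g^0$, $\phi_1^0$ from (\ref{har00}), substitute into (\ref{def:E}) and (\ref{def:mu}), and evaluate the Gaussian moments, with $\int|\phi_g^0|^4=B_0$ and $\int|\phi_1^0|^4=\tfrac34 B_0$ giving exactly the stated $O(\beta)$ coefficients. Your envelope-theorem remark and explicit moment computations simply supply the justification and the ``detailed computation omitted here for brevity'' that the paper leaves out.
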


\begin{proof}
When $0<\beta\ll1$, we can approximate the ground
state $\phi_g^\beta(\bx)$ and the first excited state $\phi_1^\beta(\bx)$ by $\phi_g^0(\bx)$ and
$\phi_1^0(\bx)$, respectively. Thus we have
\be\label{har01}
\phi_g^\beta(\mathbf{x})\approx \phi_g^0(\mathbf{x}),\qquad
\phi_1^{\beta}(\mathbf{x})\approx \phi_1^{0}(\mathbf{x}),
\qquad \bx\in{\mathbb R}^d.
\ee
Plugging \eqref{har01} into \eqref{def:mu} and \eqref{def:E}, after a detailed computation which is omitted here for brevity \cite{Ruan}, we can obtain \eqref{harsmall} and \eqref{harsmall1}.
\end{proof}

\begin{lemma}\label{har:ground_strong}
In the strongly repulsive interaction regime, i.e. $\beta\gg1$, we have
\begin{align}
\label{hartfm1}
\mu_g(\beta)&\approx \mu_g^{\rm TF}=\frac{1}{2}\left(\frac{(d+2)
 B_2\beta}{C_d}\right)^{\frac{2}{d+2}},\,
\mu_1(\beta)\approx \mu_1^{\rm MA}=\mu_g^{\rm TF}+\frac{\sqrt{2}}{2}\gm_1+o(1),\\
\label{hartfg1}
E_g(\beta)&=\frac{2+d}{4+d}\mu_g^{\rm TF}+o(1),\quad
E_1(\beta)=E_g(\beta)+\frac{\sqrt{2}}{2}\gm_1+o(1), \qquad \beta\gg1.
\end{align}
\end{lemma}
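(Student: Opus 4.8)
The plan is to replace the true ground and first excited states by explicit approximations that become exact as $\beta\to+\infty$, and then to insert them into \eqref{def:E} and \eqref{def:mu}. For the ground state I would use the Thomas--Fermi (TF) ansatz obtained by dropping $-\tfrac12\Delta$ in \eqref{eq:eig}, namely $\rho_g^{\mathrm{TF}}(\mathbf{x})=|\phi_g^\beta(\mathbf{x})|^2\approx\beta^{-1}(\mu_g-V_h(\mathbf{x}))_+$, supported on the ellipsoid $\mathcal{E}=\{V_h<\mu_g\}$. First I would fix $\mu_g$ from \eqref{norm}: the scaling $y_j=\gamma_j x_j/\sqrt{2\mu_g}$ maps $\mathcal{E}$ onto the unit ball and turns $\int\rho_g^{\mathrm{TF}}\,d\mathbf{x}=1$ into $\tfrac{(2\mu_g)^{d/2}\mu_g}{\beta B_2}\int_{|\mathbf{y}|<1}(1-|\mathbf{y}|^2)\,d\mathbf{y}=1$; since $\int_{|\mathbf{y}|<1}(1-|\mathbf{y}|^2)\,d\mathbf{y}=\tfrac{2}{d+2}C_d$, this reproduces $\mu_g^{\mathrm{TF}}$ in \eqref{hartfm1}. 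For the energy I would use that on $\mathcal{E}$ the relation $\beta\rho_g^{\mathrm{TF}}=\mu_g-V_h$ gives $\tfrac{\beta}{2}(\rho_g^{\mathrm{TF}})^2=\tfrac12\rho_g^{\mathrm{TF}}(\mu_g-V_h)$, so that, discarding the negligible kinetic term, $E_g\approx\int\big(V_h\rho_g^{\mathrm{TF}}+\tfrac12\rho_g^{\mathrm{TF}}(\mu_g-V_h)\big)\,d\mathbf{x}=\tfrac12\mu_g+\tfrac12\int V_h\rho_g^{\mathrm{TF}}\,d\mathbf{x}$; the moment ratio $\int_{|\mathbf{y}|<1}|\mathbf{y}|^2(1-|\mathbf{y}|^2)\,d\mathbf{y}\big/\int_{|\mathbf{y}|<1}(1-|\mathbf{y}|^2)\,d\mathbf{y}=\tfrac{d}{d+4}$ then gives $\int V_h\rho_g^{\mathrm{TF}}\,d\mathbf{x}=\tfrac{d}{d+4}\mu_g$ and hence $E_g=\tfrac{d+2}{d+4}\mu_g^{\mathrm{TF}}$, which is \eqref{hartfg1}.

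For the first excited state, which in the nondegenerate case $\gamma_1<\gamma_2$ is excited along $x_1$, I would build a matched asymptotic (MA) approximation: away from the nodal hyperplane $\{x_1=0\}$ the profile equals the sign-changing ground-state TF profile at chemical potential $\mu_1$, while across the node there is a dark-soliton inner layer $\phi_1^\beta\approx\sqrt{\rho_0(\mathbf{x}_\perp)}\,\tanh(\sqrt{\mu_{\mathrm{loc}}}\,x_1)$, built from the one-dimensional $\tanh$ kink of \eqref{def:TF_basic}, with local background $\rho_0=\beta^{-1}\mu_{\mathrm{loc}}$ and $\mu_{\mathrm{loc}}=\mu_1-\tfrac12\sum_{j\ge2}\gamma_j^2x_j^2$. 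The key observation is that the soliton depletes density near the node: compared with the smooth TF profile at the same $\mu_1$, it removes mass $\Delta N=\int_{\{V_\perp<\mu_1\}}\tfrac{2\sqrt{\mu_1-V_\perp}}{\beta}\,d\mathbf{x}_\perp$ with $V_\perp=\tfrac12\sum_{j\ge2}\gamma_j^2x_j^2$. Enforcing \eqref{norm} then pushes $\mu_1$ slightly above $\mu_g$, and linearizing $N(\mu)=\beta^{-1}\int(\mu-V_h)_+\,d\mathbf{x}$ about $\mu_g$ gives $\mu_1-\mu_g=\Delta N/N'(\mu_g)$ with $N'(\mu_g)=|\mathcal{E}|/\beta$. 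Evaluating both integrals by scaling (the transverse one to the unit ball of $\mathbb{R}^{d-1}$) makes all $\beta$- and $\mu_g$-powers cancel and leaves $\mu_1-\mu_g=\sqrt{2}\,\gamma_1\,I_{d-1}/C_d$ with $I_{d-1}=\int_{|\mathbf{y}|<1}\sqrt{1-|\mathbf{y}|^2}\,d\mathbf{y}$; the identity $I_{d-1}/C_d=\tfrac12$ for $d=1,2,3$ produces the $d$-independent shift $\tfrac{\sqrt2}{2}\gamma_1$ in \eqref{hartfm1}. The energy shift $E_1-E_g=\tfrac{\sqrt2}{2}\gamma_1+o(1)$ in \eqref{hartfg1} would follow the same way, once one checks that the soliton's contribution to $\tfrac{\beta}{2}\int\rho^2\,d\mathbf{x}$ cancels to leading order in the relation $\mu=E+\tfrac{\beta}{2}\int\rho^2\,d\mathbf{x}$.

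The hard part will be turning this formal matched asymptotics into controlled estimates. Three points are delicate. First, I must show that the kinetic energy --- in the bulk and, more seriously, near the outer TF edge $\partial\mathcal{E}$, where $\rho^{\mathrm{TF}}$ vanishes like a square root and $|\nabla\sqrt{\rho}|^2$ is only logarithmically integrable up to the $O(\mu_g^{-1/2})$ boundary layer --- contributes only $o(1)$ to $E_g$ and $E_1$, so that the TF and MA energies are accurate to the stated order. Second, I must control the inner soliton layer where the transverse background $\rho_0(\mathbf{x}_\perp)$ degenerates to zero as $\mathbf{x}_\perp\to\partial\{V_\perp<\mu_1\}$, so that the healing length blows up and the plain $\tanh$ profile fails; this rim must be shown not to corrupt the coefficient $\tfrac{\sqrt2}{2}\gamma_1$ and to enter only at $o(1)$, in contrast with the vortex situation of Lemma \ref{box:ground_strong_degenerate}, where the analogous core produces an $O(\ln\beta)$ term. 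Third, the whole computation presupposes that the true first excited state is of single dark-soliton type centered at the trap minimum for every $\beta>0$, i.e. that no band crossing occurs as $\beta$ grows --- the same structural assumption made in Lemma \ref{box:ground_strong_degenerate}. Because the statement is phrased with $\approx$ and $o(1)$ rather than as a sharp inequality, the intended argument is precisely this formal matched-asymptotic calculation, with the lengthy algebra deferred to \cite{Ruan}; the genuine difficulty lies in justifying the error terms hidden in those symbols.
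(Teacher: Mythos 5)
Your proposal is correct and follows essentially the same route as the paper: the Thomas--Fermi profile \eqref{harg31} for the ground state and the dark-soliton (tanh-kink) matched asymptotic ansatz \eqref{hare31} for the first excited state in the soft direction $x_1$, with $\mu_g^{\rm TF}$ and $\mu_1^{\rm MA}$ fixed by the normalization \eqref{norm} and then inserted into \eqref{def:E} and \eqref{def:mu}; your mass-depletion linearization $\mu_1-\mu_g=\Delta N/N'(\mu_g)$ is just a clean way of organizing the same normalization computation that the paper defers to the thesis. The one step you flag but defer --- that $\tfrac{\beta}{2}\bigl[\int\rho_1^2-\int\rho_g^2\bigr]=o(1)$ --- does indeed hold: the $O(1)$ bulk increase of the interaction term caused by $\mu_1>\mu_g$ exactly cancels the $O(1)$ interaction deficit of the soliton notch in every dimension $d=1,2,3$, which is precisely the "detailed computation omitted for brevity" in the paper's proof.
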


\begin{proof}
When $\beta\gg1$,
the ground and  first excited states can be approximated by
the TF approximations and/or uniformly accurate matched asymptotic
approximations.
For $d=1$ and $V(x)=\frac{\gamma^2 x^2}{2}$, these approximations have
been given explicitly and verified numerically in the literature \cite{Wz1,Wg1,BaoL,LZBao}, and the results can be extended to $d$ dimensions ($d=1,2,3$) as
\be \label{harg31}
\phi_g^{\beta}(\mathbf{x})\approx\phi_g^{\rm TF}(\mathbf{x})=
\sqrt{\frac{(\mu_g^{\rm TF}-V(\bx))_+}{\beta}}, \qquad \bx\in{\mathbb R}^d,
\ee
\be
\label{hare31}
\phi_1^{\beta}(\mathbf{x})\approx\phi_1^{\rm MA}(\mathbf{x})=
\begin{cases}
\sqrt{\frac{g_1(\bx)}{\beta}}+
\sqrt{\frac{g_2(\bx)}{\beta}}\left[\tanh(x_1\sqrt{g_2(\bx)})-1\right], &g_1(\bx)\ge0\&x_1\ge0,\\
-\sqrt{\frac{g_1(\bx)}{\beta}}+
\sqrt{\frac{g_2(\bx)}{\beta}}\left[1+\tanh(x_1\sqrt{g_2(\bx)})\right], &g_1(\bx)\ge0\&x_1<0,\\
0, &{\rm otherwise},
\end{cases}
\ee
where $(f)_+:=\max\{f,0\}$, $g_1(\bx)=\mu_1^{\rm MA}-\frac{1}{2}\sum_{j=1}^d\gm_j^2x_j^2$ and $g_2(\bx)=\mu_1^{\rm MA}-\frac{1}{2}\sum_{j=2}^d\gm_j^2x_j^2$, and
$\mu_g^{\rm TF}$ and $\mu_1^{\rm MA}$ can be obtained
via the normalization condition \eqref{norm}. Inserting \eqref{harg31} and \eqref{hare31} into \eqref{def:mu}, after a detailed computation which is omitted here for brevity \cite{Ruan}, we get \eqref{hartfg1}.
\end{proof}


From Lemmas \ref{har:ground_weak} and \ref{har:ground_strong}, we have asymptotic results for the fundamental gaps.
\begin{proposition}[For GPE under a harmonic potential in nondegenerate case]\label{asym:har}
When  $V(\mathbf{x})=V_h(\bx)$
satisfying $d=1$ or $\gm_1< \gm_2$ when $d\ge2$, i.e. GPE with a harmonic potential,
we have
\be \label{gaphar1}
\delta_E(\beta)=\begin{cases}
\gm_1-\frac{B_0}{8}\beta+o(\beta),\\
\frac{\sqrt{2}}{2}\gm_1+o(1),
\end{cases}
\delta_{\mu}(\beta)=\begin{cases}
\gm_1-\frac{B_0}{4}\beta+o(\beta),&0\le\beta\ll1,\\
\frac{\sqrt{2}}{2}\gm_1+o(1),&\beta\gg1.
\end{cases}
\ee

\end{proposition}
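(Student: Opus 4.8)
The plan is to read off both regimes of \eqref{gaphar1} by directly subtracting the asymptotic expansions already established in Lemmas \ref{har:ground_weak} and \ref{har:ground_strong}, in exact parallel with the proof of Proposition \ref{asym:box} for the box potential. Recalling from \eqref{gapp21} that $\delta_E(\beta)=E_1(\beta)-E_g(\beta)$ and $\delta_\mu(\beta)=\mu_1(\beta)-\mu_g(\beta)$, no further analysis of the states is needed: the entire content sits in the two lemmas (whose standing hypothesis $d=1$ or $\gm_1<\gm_2$ guarantees that the first excited state is the nondegenerate $x_1$-excitation used there), and the proposition is their immediate corollary.

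For the weakly repulsive regime $0\le\beta\ll1$, I would subtract the ground-state expansions \eqref{harsmall} from the first-excited expansions \eqref{harsmall1}. The shared constant $B_1$ cancels, leaving the harmonic-oscillator spectral gap $\gm_1$ at order $O(1)$; at order $\beta$ the energy coefficients combine as $\tfrac{3B_0}{8}-\tfrac{B_0}{2}=-\tfrac{B_0}{8}$ and the chemical-potential coefficients as $\tfrac{3B_0}{4}-B_0=-\tfrac{B_0}{4}$, yielding the top lines of $\delta_E$ and $\delta_\mu$. For the strongly repulsive regime $\beta\gg1$, I would use that Lemma \ref{har:ground_strong} already writes each first-excited quantity as the corresponding ground-state quantity plus the same $O(1)$ shift --- $\mu_1\approx\mu_g^{\rm TF}+\tfrac{\sqrt2}{2}\gm_1$ against $\mu_g\approx\mu_g^{\rm TF}$ in \eqref{hartfm1}, and $E_1=E_g+\tfrac{\sqrt2}{2}\gm_1$ in \eqref{hartfg1}. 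Subtracting, the shared Thomas--Fermi leading behavior cancels identically and $\delta_E(\beta)=\delta_\mu(\beta)=\tfrac{\sqrt2}{2}\gm_1+o(1)$, matching the bottom lines of \eqref{gaphar1}.

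Within this proposition the subtractions are mechanical and there is no genuine obstacle; the substantive difficulty lies upstream in the two lemmas. The delicate points there are, in the weak regime, pinning down the first-order correction to $\phi_1^\beta$ precisely enough to produce the coefficients $\tfrac{3B_0}{8}$ and $\tfrac{3B_0}{4}$, and, in the strong regime, justifying the Thomas--Fermi plus matched-asymptotic ansatz \eqref{harg31}--\eqref{hare31} and extracting the exact $O(1)$ shift $\tfrac{\sqrt2}{2}\gm_1$ from the $\tanh$-type kink layer in the $x_1$-direction. Granting those lemmas, the proposition follows at once.
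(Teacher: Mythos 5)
Your proposal is correct and coincides with the paper's own proof: Proposition \ref{asym:har} is obtained there exactly by subtracting \eqref{harsmall} from \eqref{harsmall1} in the weak regime and by recalling \eqref{hartfm1} and \eqref{hartfg1} in the strong regime, with all the real work residing in Lemmas \ref{har:ground_weak} and \ref{har:ground_strong} as you note. Your coefficient arithmetic ($\tfrac{3B_0}{8}-\tfrac{B_0}{2}=-\tfrac{B_0}{8}$, $\tfrac{3B_0}{4}-B_0=-\tfrac{B_0}{4}$) is also accurate.
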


\begin{proof}  When $0\le \beta \ll1$, subtracting (\ref{harsmall})
from  (\ref{harsmall1}),  we obtain (\ref{gaphar1})
in this parameter regime. Similarly, when $\beta\gg1$,
we get the result by recalling \eqref{hartfm1} and \eqref{hartfg1}.
\end{proof}

\begin{remark} Similar to Lemma \ref{asym:boxbt}, when $\beta\gg1$, by performing asymptotic expansion
to the next order, we can obtain
\begin{align}\label{E1:higher_order}
&\delta_E(\beta)=
\frac{\sqrt{2}}{2}\gm_1+\frac{\gm_1^2(d+2)^{\frac{d}{d+2}}}{4}
\left(\frac{C_d}{B_2\beta}\right)^{\frac{2}{d+2}}+o(\beta^{-\frac{2}{d+2}}),\qquad \beta\gg1,\\
\label{mu1:higher_order}
&\delta_{\mu}(\beta)=
\frac{\sqrt{2}}{2}\gm_1+\frac{\gm_1^2d(d+2)^{-\frac{2}{d+2}}}{4}
\left(\frac{C_d}{B_2\beta}\right)^{\frac{2}{d+2}}+o(\beta^{-\frac{2}{d+2}}),\qquad\beta\gg1.
\end{align}
\end{remark}

\begin{figure}[htb]
\centerline{\psfig{figure=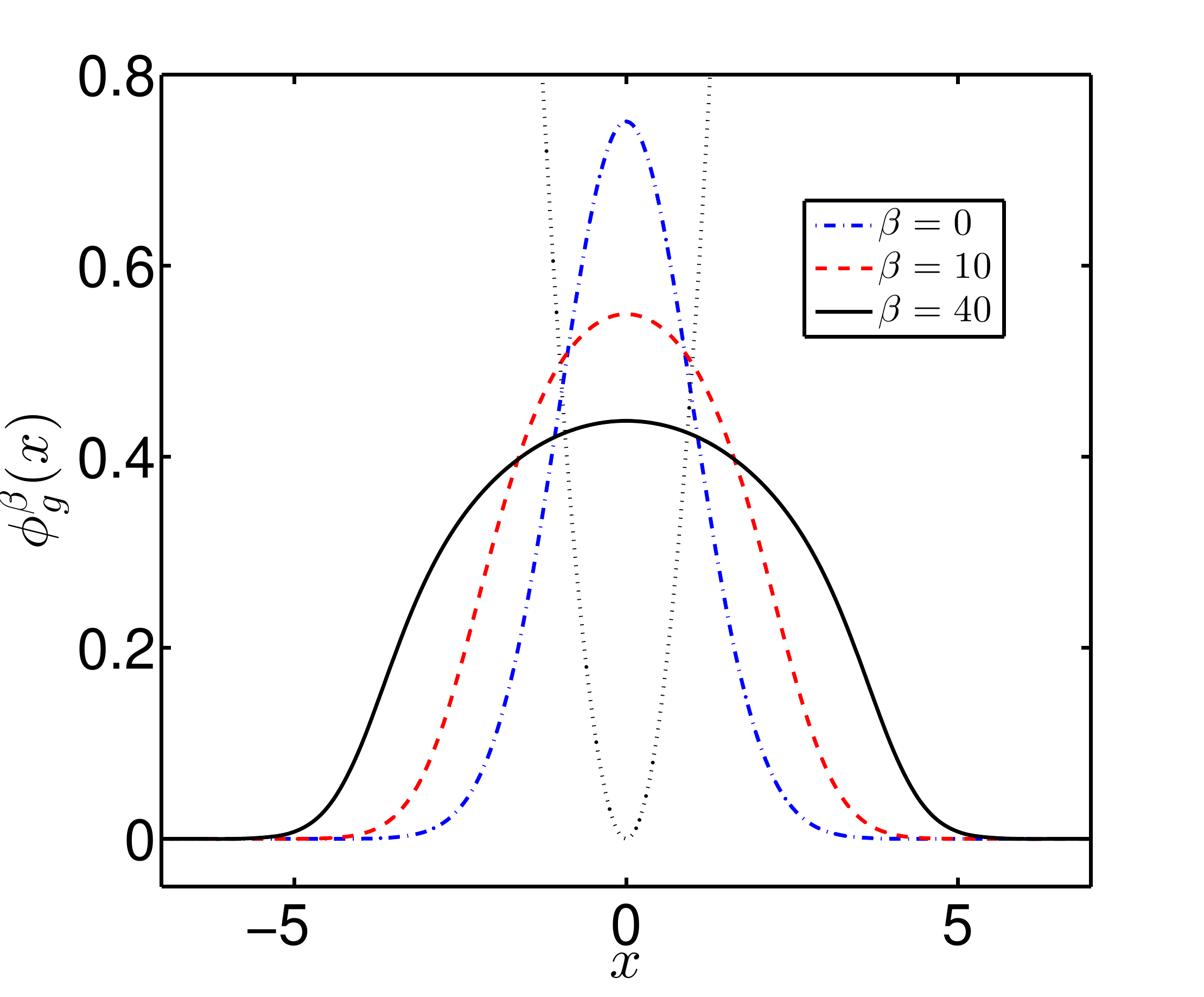,height=4cm,width=6.5cm,angle=0}
\psfig{figure=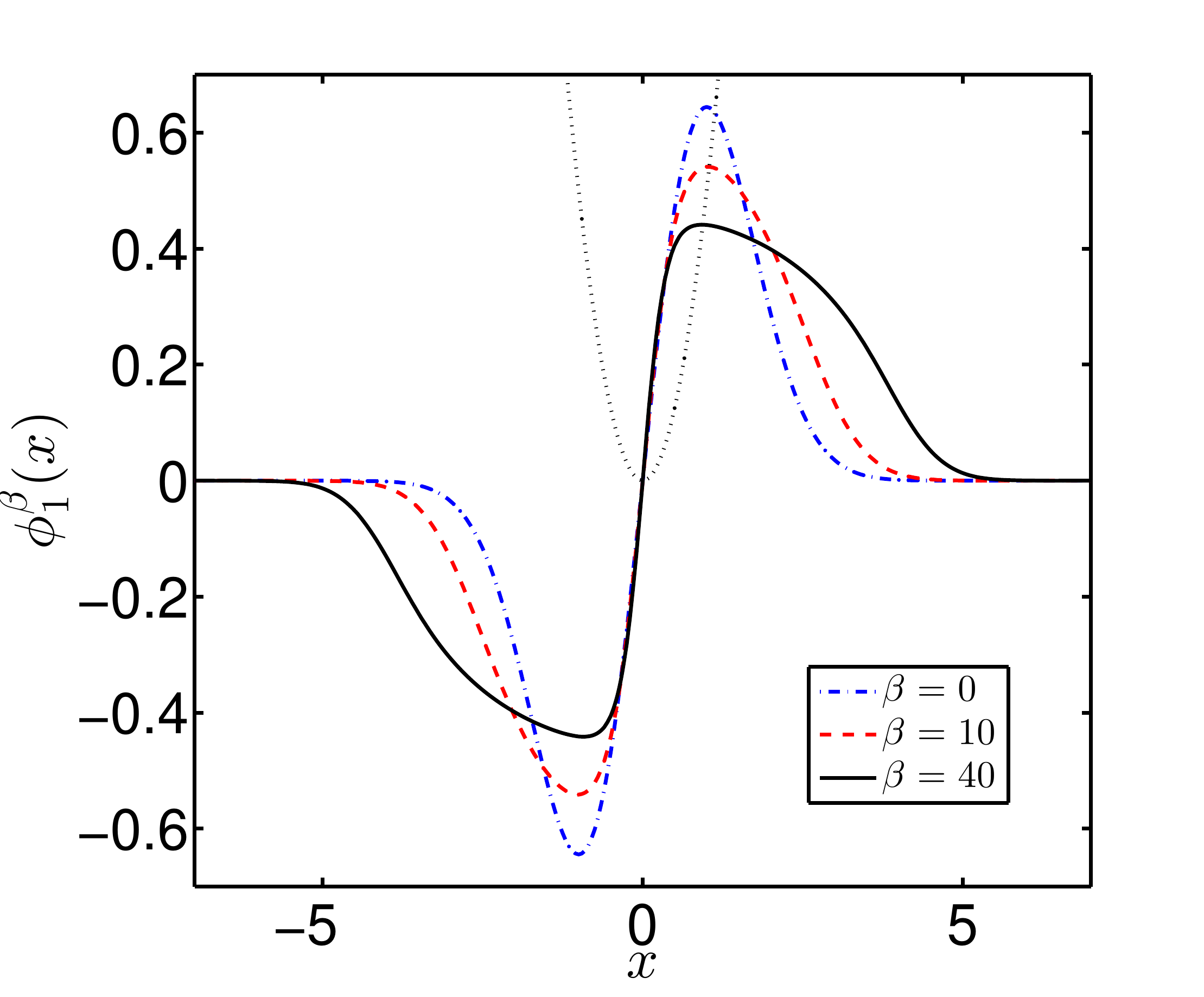,height=4cm,width=6.5cm,angle=0}}
\caption{Ground states (left) and first excited states (right) of GPE in 1D with a
harmonic potential $V(x)=x^2/2$ (dot line) for different $\beta\ge0$.}
\label{fig:har_1d_sol}
\end{figure}

\begin{figure}[htb]
\centerline{\psfig{figure=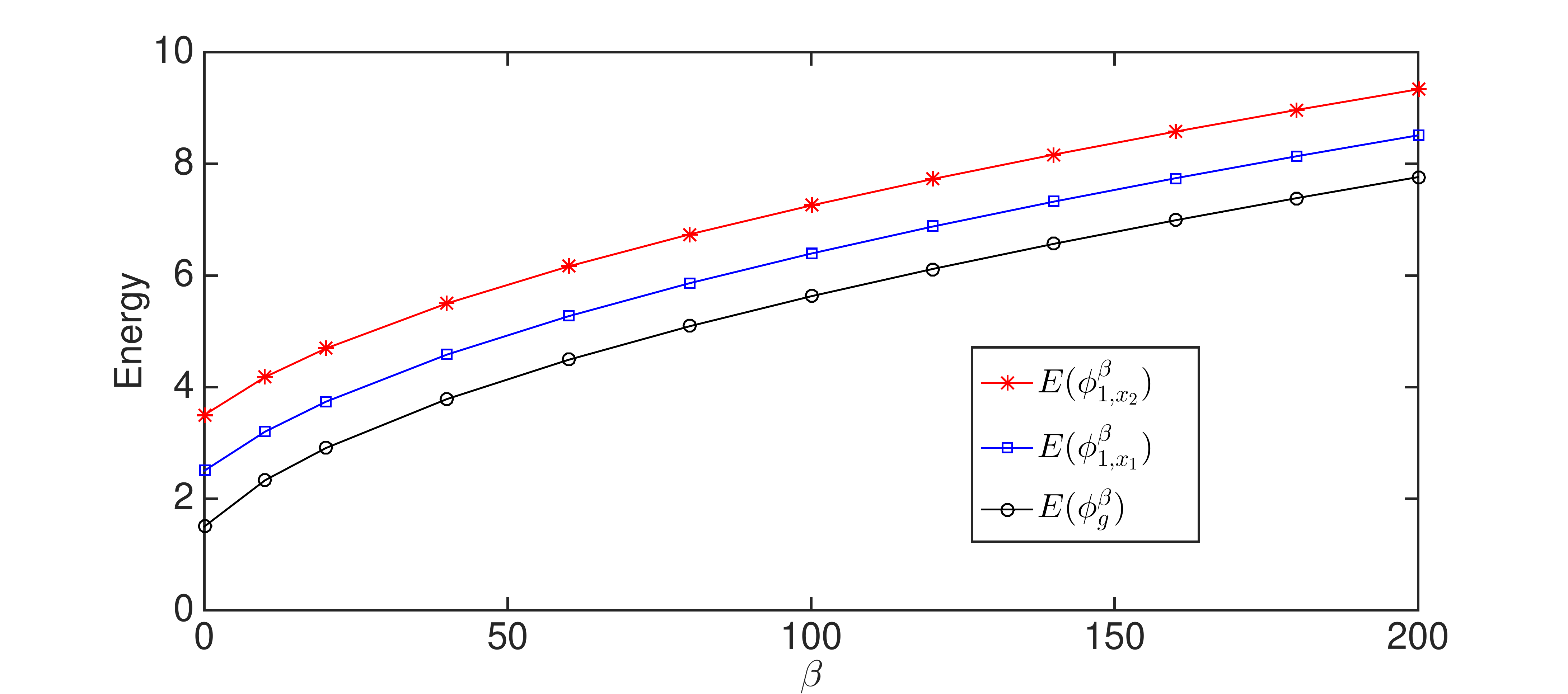,height=4cm,width=13cm,angle=0}}
\caption{Energy $E_g(\beta):=E(\phi_g^{\beta})<E_1(\beta):=E(\phi_1^\beta=\phi_{1,x}^{\beta}) <E_2(\beta):=E(\phi_{1,y}^{\beta})$ of  GPE in 2D under a harmonic potential with $\gm_1=1<\gm_2=2$ for different $\beta\ge0$.
}
\label{fig:har_dE_general_2d}
\end{figure}

\begin{figure}[htb]
\centerline{\psfig{figure=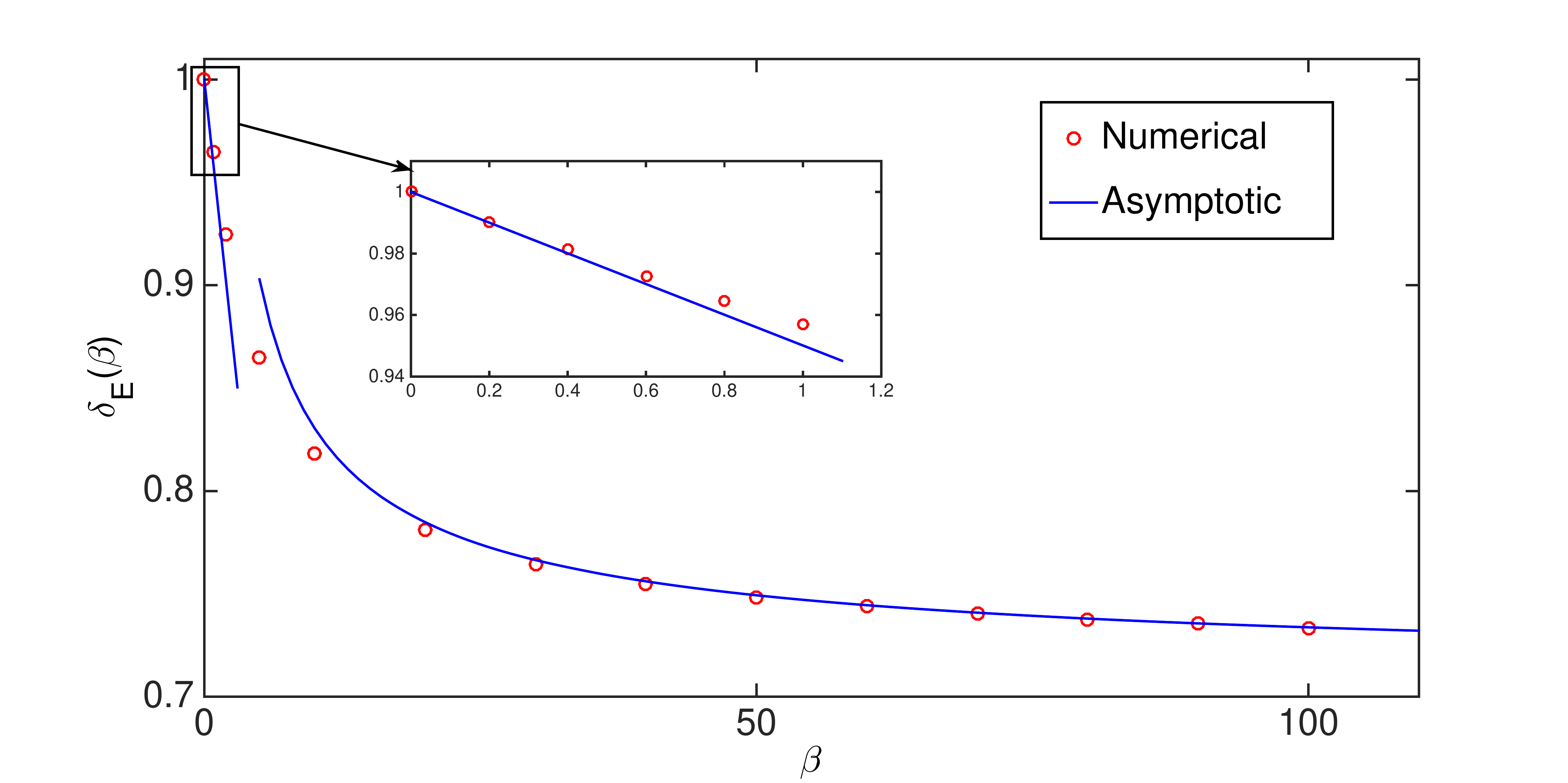,height=4cm,width=13cm,angle=0}}
\centerline{\psfig{figure=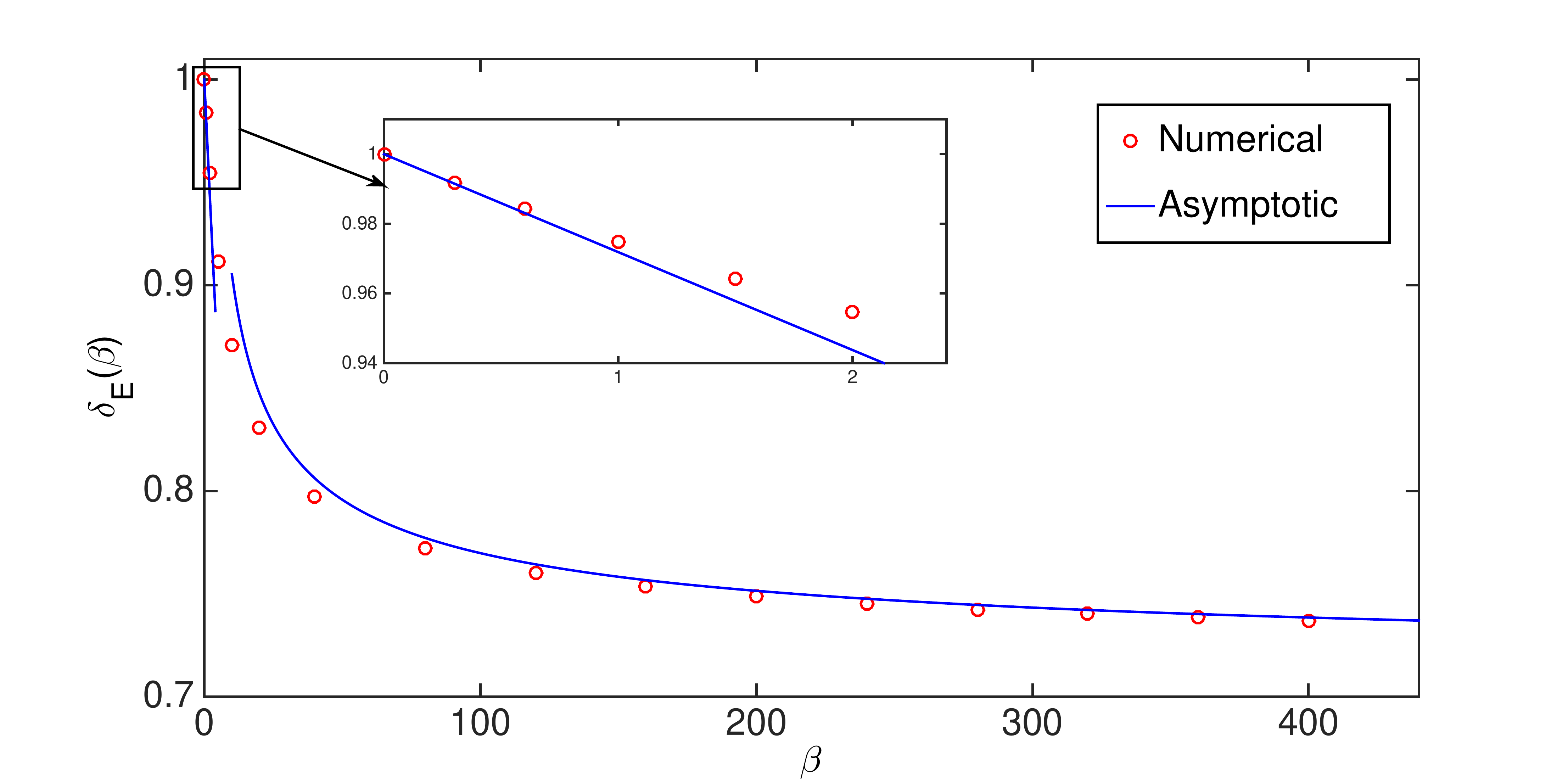,height=4cm,width=13cm,angle=0}}
\centerline{\psfig{figure=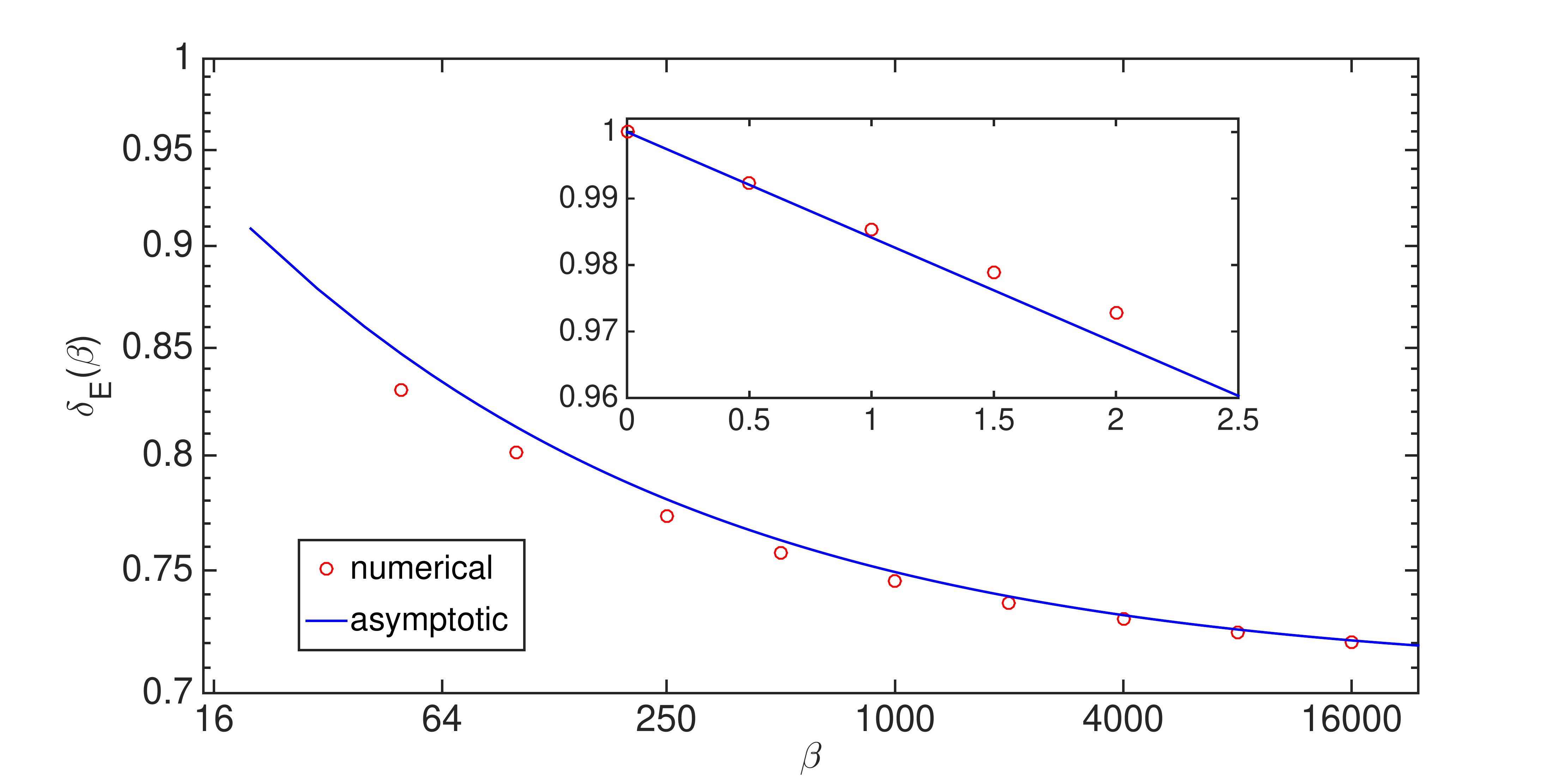,height=4cm,width=13cm,angle=0}}
\caption{Fundamental gaps in energy of GPE with a harmonic potential
in 1D with $\gm_1=1$ (top), in 2D with $\gm_1=1<\gm_2=2$ (middle),
and in 3D with $\gm_1=1<\gm_2=\gm_3=2$ (bottom).}
\label{fig:har_dE_asym}
\end{figure}

Again, to verify numerically our asymptotic results in Proposition \ref{asym:har},
Fig.~\ref{fig:har_1d_sol} shows the ground and first excited states of
GPE in 1D with $\gm_1=1$
for different $\beta\ge0$,  which are obtained numerically \cite{Bao_comp1,comp_gf,Bao2013,Wz1}.
Fig. \ref{fig:har_dE_general_2d} shows energy of the ground state,
first excited state, i.e. excited state in the $x_1$-direction,  and excited states in the $x_2$-direction
and Fig.~\ref{fig:har_dE_asym} depicts fundamental gaps in energy obtained numerically
and asymptotically (cf. Eqs. (\ref{E1:higher_order}), (\ref{mu1:higher_order}) and (\ref{gaphar1}))
in 1D, 2D and 3D.
From Fig. \ref{fig:har_dE_asym}, we can see that the asymptotic results in Proposition \ref{asym:har}
are very accurate in both weakly repulsive interaction regime, i.e. $0\le \beta\ll1$,
and strongly repulsive interaction regime, i.e. $\beta\gg1$. In addition,
our numerical results suggest that both $\delta_E(\beta)$ and $\delta_\mu(\beta)$
are decreasing functions for $\beta\ge0$ (cf. Fig. \ref{fig:har_dE_asym}).


\begin{figure}[htb]
\centerline{\psfig{figure=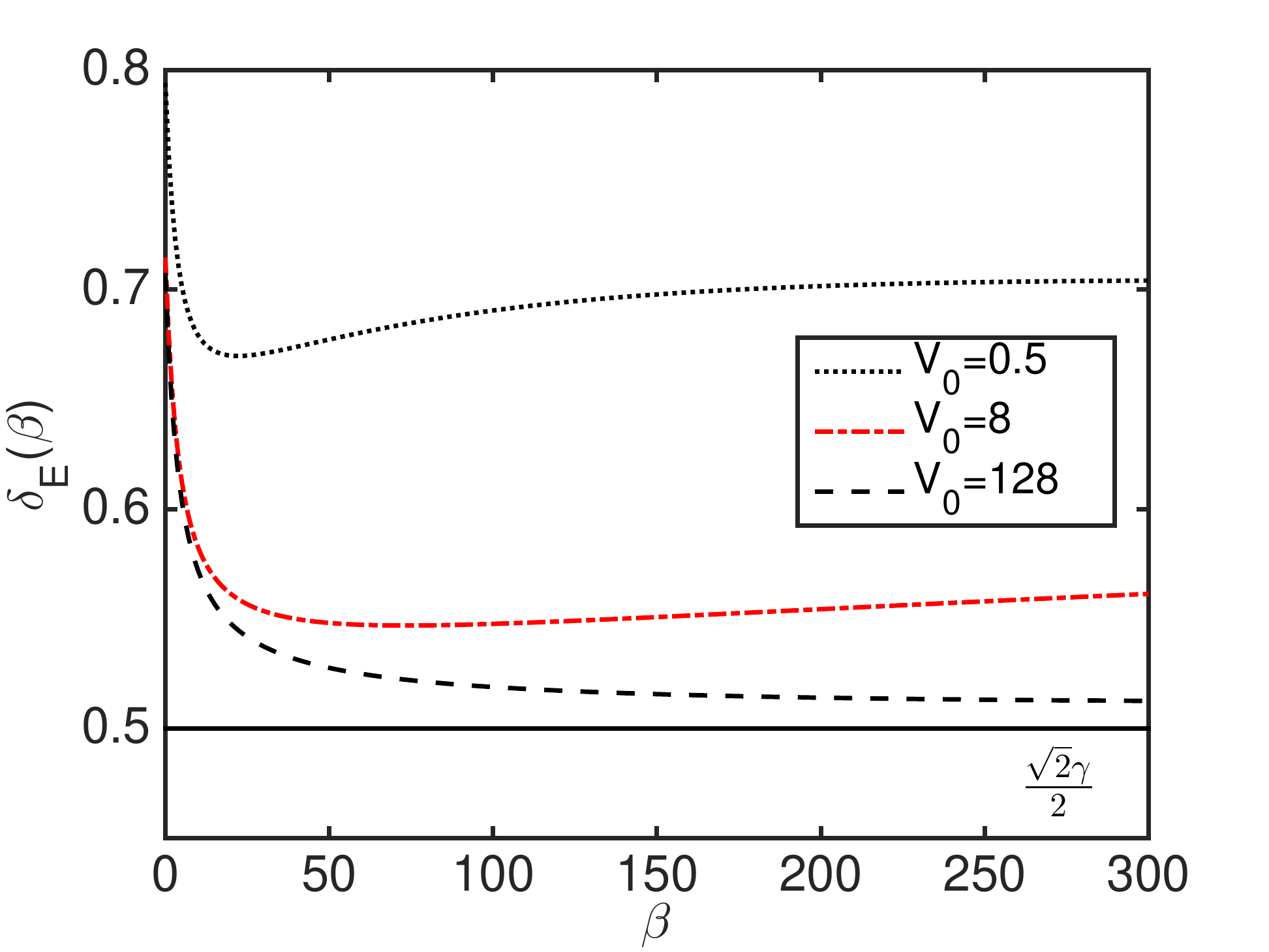,height=4cm,width=6.5cm,angle=0}
\psfig{figure=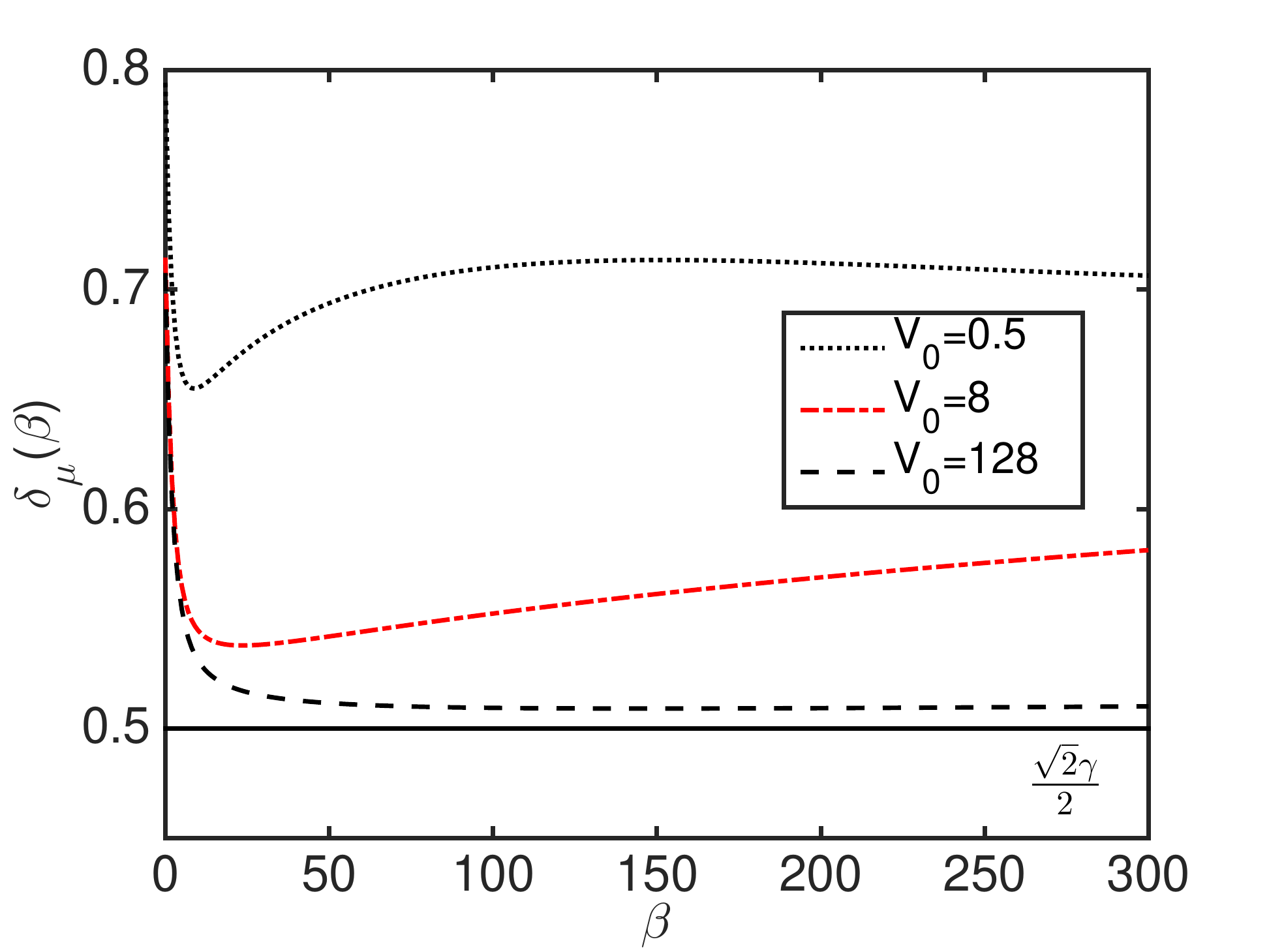,height=4cm,width=6.5cm,angle=0}}
\caption{Fundamental gaps in energy (left) and chemical potential (right)
of GPE in 1D with $V(x)=\frac{x^2}{2}+V_0\cos(kx)$ satisfying $V_0k^2=0.5$
for different $\beta$, $V_0$ and $k$.}
\label{fig:test_har_dE_dmu}
\end{figure}

Again, for general external potentials, the ground and first excited states as well as their energy and chemical potential
can be computed numerically \cite{Bao_comp1,comp_gf,Bao2013,Wz1}.
  Fig. \ref{fig:test_har_dE_dmu}
 depicts fundamental gaps
in energy and chemical potential of GPE in 1D with $V(x)=\frac{x^2}{2}+V_0\cos(kx)$ for different $\beta$, $V_0$ and $k$,
and
Fig. \ref{fig:test_genhar_dE} shows the fundamental gaps of GPE in 1D
with different  convex trapping potentials growing at least quadratically in the far field
 for different $\beta\ge0$.


\begin{figure}[htb]
\centerline{\psfig{figure=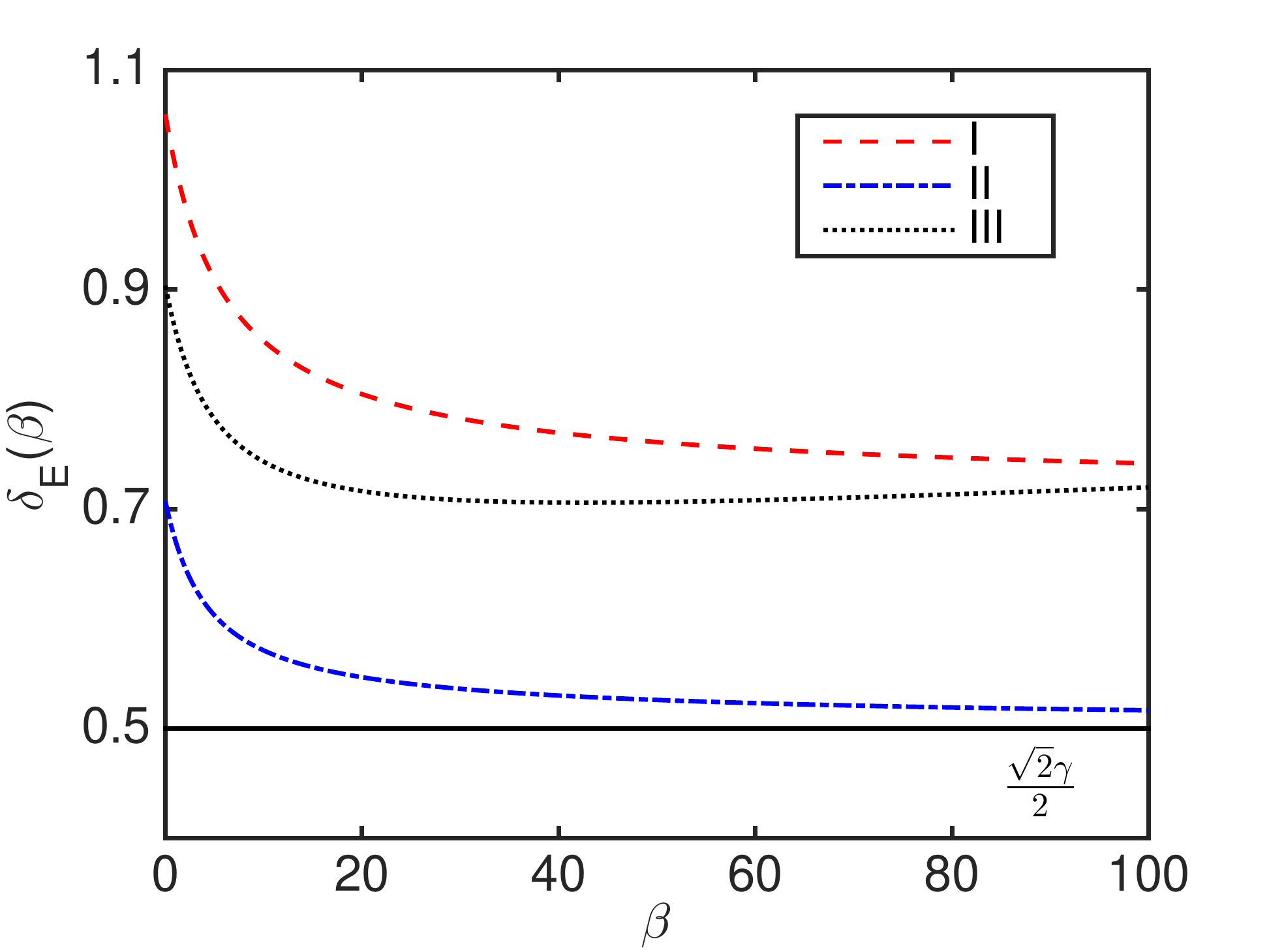,height=4cm,width=6.5cm,angle=0}
\psfig{figure=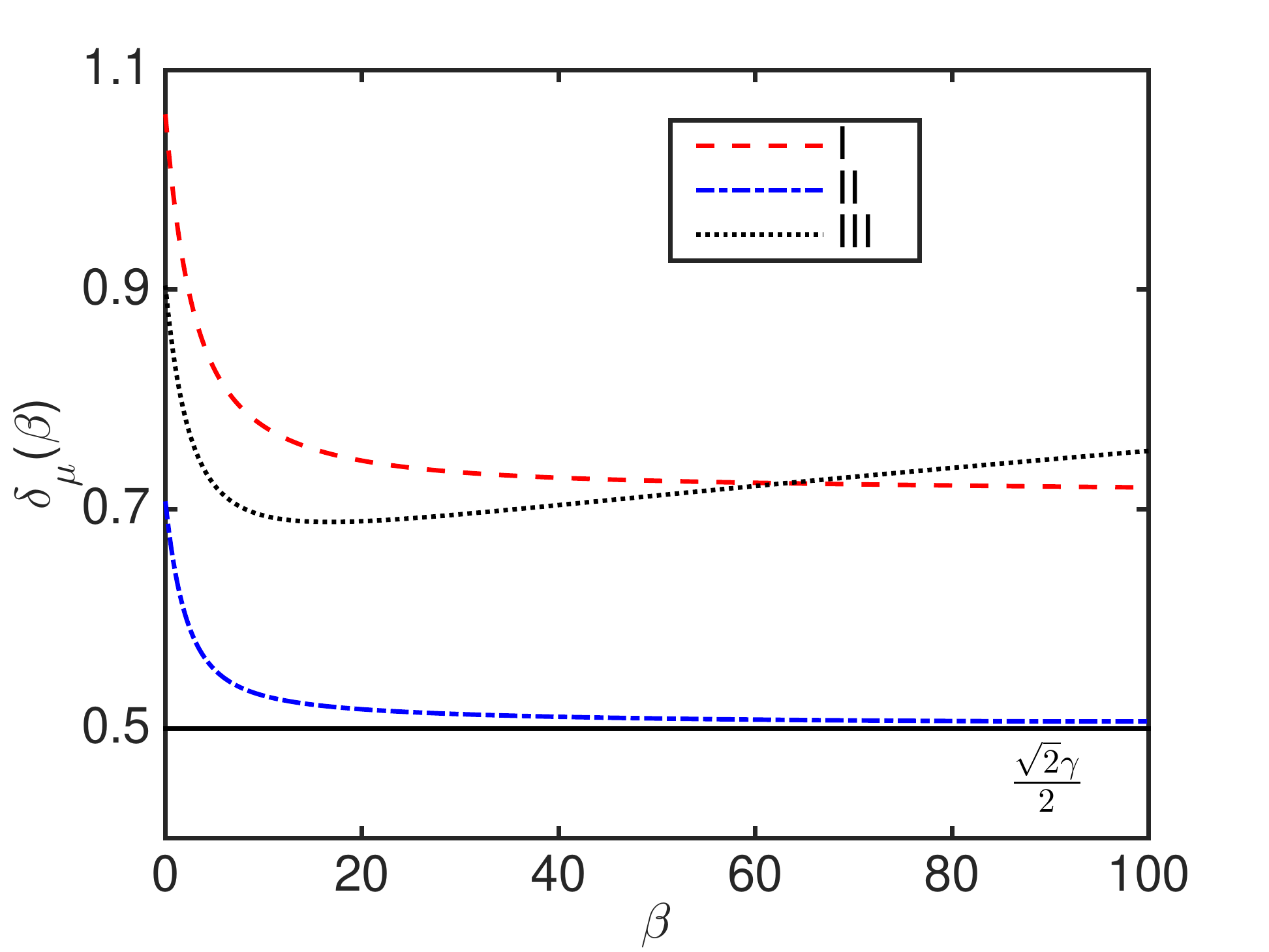,height=4cm,width=6.5cm,angle=0}}
\caption{Fundamental gaps in energy (left) and chemical potential (right)
of GPE in 1D with (I) $V(x)=\frac{x^2}{2}+0.5\sin(x)$, (II) $V(x)=\frac{x^2}{4}-x$,  (III) $V(x)=\frac{x^2}{4}+\frac{x^4}{100}+x$
for different $\beta\ge0$. }
\label{fig:test_genhar_dE}
\end{figure}


Based on the asymptotic results in Proposition \ref{asym:har} and the above numerical results as well as additional extensive numerical results not shown here for brevity \cite{Ruan},
we speculate the following gap conjecture.


\textbf{Gap conjecture}
(For GPE in whole space in nondegenerate case)
Suppose  the external potential $V(\mathbf{x})$ satisfies
$D^2V(\bx)\ge\gm_v^2I_d$ for $\bx\in{\mathbb R}^d$ with $\gm_v>0$  a constant and  $\dim(W_1)=1$, we have
\begin{equation}\label{bgap977}
\delta^{\infty}_E:=\inf_{\beta\ge0} \delta_E(\beta)\ge\frac{\sqrt{2}}{2}\gm_v,
\qquad\delta^{\infty}_{\mu}:=\inf_{\beta\ge0} \delta_\mu(\beta)\ge\frac{\sqrt{2}}{2}\gm_v.
\end{equation}

\subsection{Degenerate case, i.e. $\dim(W_1)\ge2$}\label{sec:har_asym_degenerate}
We first consider a special case by taking $V(\mathbf{x})=V_h(\bx)$ satisfying  $d\ge2$ and $\gm_1=\gm_2:=\gm$. In this case, the approximations to the ground states and their energy and chemical potential are the same as those in the previous subsection by letting $\gm_2 \to\gm_1 = \gm$. Therefore, we only need to focus on the approximations to the first excited states, which are completely different with those in the non-degenerate case.

\begin{lemma}\label{har:ground_weak_degenerate2}
For weakly interaction regime, i.e. $0<\beta\ll1$, we have for $d\ge2$
\begin{align} \label{harsmall_degenerate}
E_1(\beta)&=\frac{3\gm}{2}+\frac{B_0d}{8}\beta+o(\beta), \quad
\mu_1(\beta)=\frac{3\gm}{2}+\frac{B_0d}{4}\beta+o(\beta).
\end{align}
\end{lemma}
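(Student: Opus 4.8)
The plan is to repeat the degenerate perturbation argument of Lemma~\ref{box:ground_weak_degenerate2}, with the product-of-sine modes replaced by the Hermite--Gauss modes of the harmonic oscillator. When $\beta=0$ and $\gm_1=\gm_2=:\gm$, the second eigenvalue $\gm+B_1$ of $-\tfrac12\Delta+V_h$ is (at least) doubly degenerate, and the associated eigenspace $W_1$ contains the two orthonormal real states
\[
\varphi_1(\bx)=\sqrt{2\gm}\,x_1\,\phi_g^0(\bx),\qquad \varphi_2(\bx)=\sqrt{2\gm}\,x_2\,\phi_g^0(\bx),
\]
with $\phi_g^0$ given in \eqref{har00}. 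Since every unit vector of $\mathrm{span}\{\varphi_1,\varphi_2\}$ is an equally valid zeroth-order first excited state, the relevant approximation for $0<\beta\ll1$ is the one that minimizes the energy over this subspace. First I would take the ansatz $\varphi_{a,b}=a\varphi_1+b\varphi_2$ with $a,b\in\mathbb{C}$ and $|a|^2+|b|^2=1$ (so that $\|\varphi_{a,b}\|_2=1$), and fix $a,b$ by minimizing $E(\varphi_{a,b})$.

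The next step is to evaluate $E(\varphi_{a,b})$. Because $\varphi_{a,b}\in W_1$, the kinetic-plus-potential part of \eqref{def:E} equals exactly the degenerate linear eigenvalue $\gm+B_1$, so all the $a,b$-dependence sits in the interaction term $\tfrac{\beta}{2}\int_{\mathbb{R}^d}|\varphi_{a,b}|^4\,d\bx$. Writing $\varphi_{a,b}=\sqrt{2\gm}\,(ax_1+bx_2)\,\phi_g^0$ and expanding $|ax_1+bx_2|^4$ reduces this integral to the Gaussian moments $\int x_1^4(\phi_g^0)^4$, $\int x_2^4(\phi_g^0)^4$ and $\int x_1^2x_2^2(\phi_g^0)^4$, all odd moments vanishing by parity. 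Carrying out these elementary integrals, I expect to reach
\[
E(\varphi_{a,b})=\gm+B_1+\frac{\beta B_0}{8}\left(3-2|a|^2|b|^2+a^2\bar b^2+\bar a^2b^2\right)+o(\beta),
\]
which is the harmonic analogue of the intermediate identity in Lemma~\ref{box:ground_weak_degenerate2}.

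I would then minimize exactly as there: setting $a=e^{i\xi}\cos\theta$ and $b=e^{i\eta}\sin\theta$ turns the bracket into $3-\tfrac12\sin^2(2\theta)\bigl(1-\cos 2(\xi-\eta)\bigr)$, which is minimized at $\theta=\pm\pi/4$ and $\xi-\eta=\pm\pi/2$, i.e. at $a=\pm ib$. This selects the vortex-type first excited state $\phi_1^\beta\approx\tfrac{1}{\sqrt2}\bigl(\varphi_1+i\varphi_2\bigr)$, for which the bracket equals $2$ and $\int|\phi_1^\beta|^4\approx B_0/2$. Substituting this state into \eqref{def:E} yields $E_1(\beta)$, and then $\mu_1=E_1+\tfrac{\beta}{2}\int|\phi_1^\beta|^4$ yields the chemical potential, giving \eqref{harsmall_degenerate} for $d=2$; the case $d=3$ is handled in the same way after adjoining the third transverse Hermite mode $\varphi_3=\sqrt{2\gm}\,x_3\,\phi_g^0$ (when $\gm_3=\gm$) and repeating the now larger minimization.

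The main obstacle is the optimization over the degenerate subspace rather than the Gaussian integrals, which are routine. One must verify that the energy-minimizing combination is genuinely the planar vortex $a=\pm ib$ as a global statement, not merely a critical point; and, in the fully isotropic $d=3$ case where $\dim W_1=3$, one must confirm that a planar vortex still minimizes the interaction energy among all unit combinations of $\varphi_1,\varphi_2,\varphi_3$. Everything else is bookkeeping parallel to the box computation.
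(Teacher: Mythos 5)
Your proposal follows essentially the same route as the paper's proof: the same ansatz $\varphi_{a,b}=a\varphi_1+b\varphi_2$ over the degenerate eigenspace $W_1$, the same energy minimization forcing $a=\pm ib$ (your bracket $3-2|a|^2|b|^2+a^2\bar b^2+\bar a^2 b^2$ is exactly the paper's $|a^2+b^2|^2+2(|a|^2+|b|^2)^2$ rewritten via $|a|^2+|b|^2=1$, and both make global minimality evident), and the same substitution of the resulting planar vortex state into \eqref{def:E} and \eqref{def:mu}. One remark: your zeroth-order constant $\gm+B_1$ (equal to $2\gm$ in 2D) is the correct linear eigenvalue, whereas the paper's own proof writes $3\gm$ and the lemma states $\frac{3\gm}{2}$ -- an inconsistency internal to the paper rather than a defect of your argument, though it does mean your (correct) computation will not literally reproduce the constant in \eqref{harsmall_degenerate} as printed.
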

\begin{proof}
For simplicity, we only present the 2D case and extension to 3D is straightforward.
Denote
\be
\phi_g^0(x)=\left(\frac{\gm}{\pi}\right)^{\frac{1}{4}}e^{-\frac{\gm x^2}{2}}, \qquad \phi_1^0(x)=\sqrt{2\gm}\left(\frac{\gm}{\pi}
\right)^{\frac{1}{4}}xe^{-\frac{\gm x^2}{2}}.
\ee
When $d=2$ and $\beta=0$, it is easy to see that
$\varphi_1(\bx):=\phi_1^{0}(x_1)\phi_g^{0}(x_2)$
and $\varphi_2(\bx):=\phi_g^{0}(x_1)\phi_1^{0}(x_2)$
are two linearly independent orthonormal first excited states.
In fact, $W_1={\rm span}\{\varphi_1,\varphi_2\}$.
In order to find an appropriate approximation of the first
excited state when $0<\beta \ll1$, we take an ansatz
\be\label{har01_degenerate}
\varphi_{a,b}(\mathbf{x})= a
\varphi_1(\bx)+ b\varphi_2(\bx), \qquad \bx\in{\mathbb R}^2,
\ee
where $a,b\in\mathbb{C}$ satisfying $|a|^2+|b|^2=1$ implies
$\|\varphi_{a,b}\|_2=1$.
Then $a$ and $b$ can be determined by
minimizing $E(\varphi_{a,b})$. Plugging \eqref{har01_degenerate}
into \eqref{def:E}, we have for $\beta\ge0$
\begin{align}
E(\varphi_{a,b})=3\gm+\frac{\gm\beta}{16\pi}\left[|a^2+b^2|^2+2
(|a|^2+|b|^2)^2\right]\ge3\gm+\frac{\gm\beta}{8\pi},
\end{align}
which is minimized when $a^2+b^2=0$, i.e. $a=\pm ib$. By taking $a=1/\sqrt{2}$ and $b=i/\sqrt{2}$, we get an approximation of the first excited state as
\be\label{phi1har11}
\phi_1^\beta(\bx)\approx \phi_{1,v}(\bx)=\frac{\gm}{\sqrt{\pi}}(x_1+ix_2)
e^{-\frac{\gm(x_1^2+x_2^2)}{2}}=\frac{\gm}{\sqrt{\pi}}r
e^{-\frac{\gm r^2}{2}} e^{i\theta},
\ee
where $(r,\theta)$ is the polar coordinate.
Substituting \eqref{phi1har11} into (\ref{def:E}) and (\ref{def:mu}),
we get (\ref{harsmall_degenerate}).
\end{proof}


\begin{lemma}\label{har:ground_strong_degenerate}
For the 2D case with strongly repulsive interaction, i.e. $d = 2$ and $\beta\gg1$, we have
\bea \label{harstrong_degenerate_E}
\qquad \quad E_1(\beta)=E_g^{\rm TF}+\frac{\gm}{2}\sqrt{\frac{\pi}{\beta}}\ln(\beta)+\mathcal{O}
(\beta^{-\frac{1}{2}}),
\mu_1(\beta)=\mu_g^{\rm TF}+\frac{\gm}{4}\sqrt{\frac{\pi}{\beta}}\ln(\beta)+\mathcal{O}
(\beta^{-\frac{1}{2}}),
\eea
where $\mu_g^{\rm TF}$ is given in \eqref{hartfm1} and $E_g^{\rm TF}=\frac{2+d}{4+d}\mu_g^{\rm TF}$.
\end{lemma}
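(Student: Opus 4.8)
The plan is to mirror the matched-asymptotic construction used for the box in Lemma~\ref{box:ground_strong_degenerate}, with the flat bulk density there replaced by the harmonic Thomas--Fermi (TF) profile. From Lemma~\ref{har:ground_weak_degenerate2} the first excited state for $0<\beta\ll1$ is the central vortex $\phi_{1,v}$ with winding number $m=1$; assuming no band crossing occurs as $\beta$ increases, I take this vortex as the first excited state for $\beta\gg1$ as well. I would represent $\phi_1^\beta$ through two overlapping expansions: an outer region $|\bx|=O(1)$ where it is slaved to the TF ground-state profile $\sqrt{(\mu_1-V(\bx))_+/\beta}$ of (\ref{harg31}) (with $\mu_1$ in place of $\mu_g^{\rm TF}$), and an inner region near the center $r=|\bx|\ll1$ where $\phi_1^\beta(\bx)\approx\sqrt{\mu_1/\beta}\,f(r)e^{i\theta}$.

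Substituting the inner ansatz into (\ref{eq:eig}) with $V\approx0$ near the origin yields exactly the vortex ODE (\ref{eq:vortex}); dropping $-\tfrac12 f''$ as in the box case gives the same core profile $f(r)\approx f_a(r)=\sqrt{2\mu_1 r^2/(1+2\mu_1 r^2)}$ of (\ref{sol:vortex_approx}). Matching the two expansions produces the density approximation $\rho_1^\beta(\bx)=|\phi_1^\beta|^2\approx \frac{(\mu_1-V(\bx))_+}{\beta}\,f_a^2(r)$, i.e. the TF density with a vortex hole of size $\sim 1/\sqrt{\mu_1}$ carved out at the center. Unlike the box, there are no boundary layers here, so the vortex is the only defect.

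From here everything reduces to integrals of rational functions. I would first fix $\mu_1$ from the normalization $\int\rho_1^\beta\,d\bx=1$: the smooth TF part gives $\pi\mu_1^2/(\beta\gm^2)$, while the density deficit $1-f_a^2=(1+2\mu_1 r^2)^{-1}$ integrated against $2\pi r\,dr$ up to the TF radius $R=\sqrt{2\mu_1}/\gm$ contributes $-\frac{\pi}{2\beta}\ln(2\mu_1R^2)=-\frac{\pi}{2\beta}\ln\beta+O(\beta^{-1})$. Solving $\pi\mu_1^2/(\beta\gm^2)=1+\frac{\pi}{2\beta}\ln\beta$ by writing $\mu_1=\mu_g^{\rm TF}+\delta\mu$ and using $\pi(\mu_g^{\rm TF})^2/(\beta\gm^2)=1$ yields $\delta\mu=\frac{\gm}{4}\sqrt{\pi/\beta}\,\ln\beta$, which is the $\mu_1$ of (\ref{harstrong_degenerate_mu}). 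The energy then follows from the identity $E_1=\mu_1-\frac{\beta}{2}\int\rho_1^2$ implied by (\ref{def:mu}): the interaction integral $\frac{1}{2\beta}\int(\mu_1-V)_+^2 f_a^4\,d\bx$ splits into the full TF value $\frac{\pi\mu_1^3}{3\beta\gm^2}=\frac13\mu_g^{\rm TF}+\frac{\gm}{4}\sqrt{\pi/\beta}\ln\beta$ minus the vortex correction $\frac{1}{2\beta}\int(\mu_1-V)_+^2(1-f_a^4)\,d\bx$. Using $1-f_a^4=(1+4\mu_1 r^2)/(1+2\mu_1 r^2)^2$ together with the same cutoff at $R$, this correction is $\frac{\gm}{2}\sqrt{\pi/\beta}\ln\beta+O(\beta^{-1/2})$, so $\frac{\beta}{2}\int\rho_1^2=\frac13\mu_g^{\rm TF}-\frac{\gm}{4}\sqrt{\pi/\beta}\ln\beta$ and hence $E_1=\frac23\mu_g^{\rm TF}+\frac{\gm}{2}\sqrt{\pi/\beta}\ln\beta=E_g^{\rm TF}+\frac{\gm}{2}\sqrt{\pi/\beta}\ln\beta$, as in (\ref{harstrong_degenerate_E}).

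The main obstacle is the careful treatment of the logarithmically divergent core integrals, since both logarithms come entirely from the range $1/\sqrt{\mu_1}\lesssim r\lesssim R$ where $1-f_a^2\sim 1/(2\mu_1 r^2)$. I must verify that the inner cutoff at the healing length $\sim 1/\sqrt{\mu_1}$ enters only at $O(1)$ inside the logarithm, and that replacing $(\mu_1-V)_+^2$ by its central value $\mu_1^2$ across the core and using the approximate profile $f_a$ rather than the exact $f$ perturb the coefficient of $\ln\beta$ only at the $O(\beta^{-1/2})$ level. Establishing that these logarithmic coefficients are exactly $\frac{\gm}{4}\sqrt{\pi/\beta}$ and $\frac{\gm}{2}\sqrt{\pi/\beta}$, with all other contributions absorbed into the stated $\mathcal{O}(\beta^{-1/2})$ remainders, is the delicate accounting; the remaining integrals are elementary.
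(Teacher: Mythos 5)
Your proposal follows essentially the same route as the paper's proof: the vortex/no-band-crossing assumption, the matched asymptotics with the TF profile in the outer region and the core profile $f_a(r)=\sqrt{2\mu_1 r^2/(1+2\mu_1 r^2)}$ in the inner region, the product-form density $\rho_1^\beta\approx \frac{(\mu_1-V)_+}{\beta}f_a^2$, and then normalization plus \eqref{def:mu} to extract $\mu_1$ and $E_1$. The only difference is that the paper omits the final integrals (deferring to \cite{Ruan}) while you carry them out explicitly, and your computations (the TF mass $\pi\mu_1^2/\beta\gm^2$, the deficit $\frac{\pi}{2\beta}\ln\beta$, and the interaction corrections yielding the coefficients $\frac{\gm}{4}$ and $\frac{\gm}{2}$) are consistent with the stated result.
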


\begin{proof}
From Lemma \ref{har:ground_weak_degenerate2}, when $0<\beta\ll1$,
the first excited state needs to be taken as a vortex-type solution.
By assuming that there is no band crossing when $\beta>0$,
the first excited state can be well approximated by the vortex-type solution when $\beta\gg1$ too. Thus when $\beta\gg1$, we approximate
the first excited state via a matched asymptotic approximation.

(i) In the outer region, i.e. $|\bx|>o(1)$, it is approximated
by the TF approximation as
\be
\phi_1^\beta(\bx)\approx \phi^{\rm{out}}(\mathbf{x})\approx
\sqrt{\frac{(2\mu_1-\gm^2r^2)_+}{2\beta}}, \qquad r>o(1),
\ee
where  $\mu=\mu_1(\beta)$ is  the chemical potential of the first excited state.

(ii) In the inner region near the origin, i.e. $|\bx|\ll1$, it is approximated by a
vortex solution with winding number $m=1$ as
\begin{equation}\label{sol:vortex1}
\phi_1^\beta(\bx)\approx \phi^{\rm{in}}(\mathbf{x})=\sqrt{\frac{\mu_1}
{\beta}}f(r)e^{i\theta}, \qquad |\bx|\ll1,
\end{equation}
Substituting (\ref{sol:vortex}) into (\ref{eq:eig}), we get the equation for $f(r)$
\begin{equation}\label{eq:vortex1}
-\frac{1}{2}f''(r)-\frac{1}{2r}f'(r)+\frac{1}{2r^2}f(r)+\frac{\gm^2 r^2}{2}f(r)+\mu_1 f^3(r)=\mu_1f(r), \qquad r>0,
\end{equation}
with BC $f(0)=0$. When $\beta\gg1$ and $0\le r\ll 1$, by dropping the terms $-\frac{1}{2}f''(r)$ and $\frac{\gm^2 r^2}{2}f(r)$
in (\ref{eq:vortex1}) and then solving it analytically with the far field limit $\lim_{r\to+\infty}f(r)=1$, we get
\eqref{sol:vortex_approx}.
Combining the outer and inner approximations via the matched asymptotic
technique, we obtain an asymptotic approximation of the density of the
first excited state as
\be\label{proof:har_1st_degen}
\rho_1^\beta(\bx):=|\phi_1^{\beta}(\mathbf{x})|^2\approx \frac{2\mu_1 r^2}{1+2\mu_1 r^2}\frac{(2\mu_1-\gm^2r^2)_+}{2\beta}, \qquad r\ge0.
\ee
Substituting \eqref{proof:har_1st_degen} into
the normalization condition $\|\phi_1^\beta\|_2=1$ and \eqref{def:mu}, a detailed computation gives
the approximation of the chemical potential and energy in (\ref{harstrong_degenerate_E}).
The details of the computation are omitted here for brevity \cite{Ruan}.
\end{proof}


\begin{figure}[htb]
\centerline{\psfig{figure=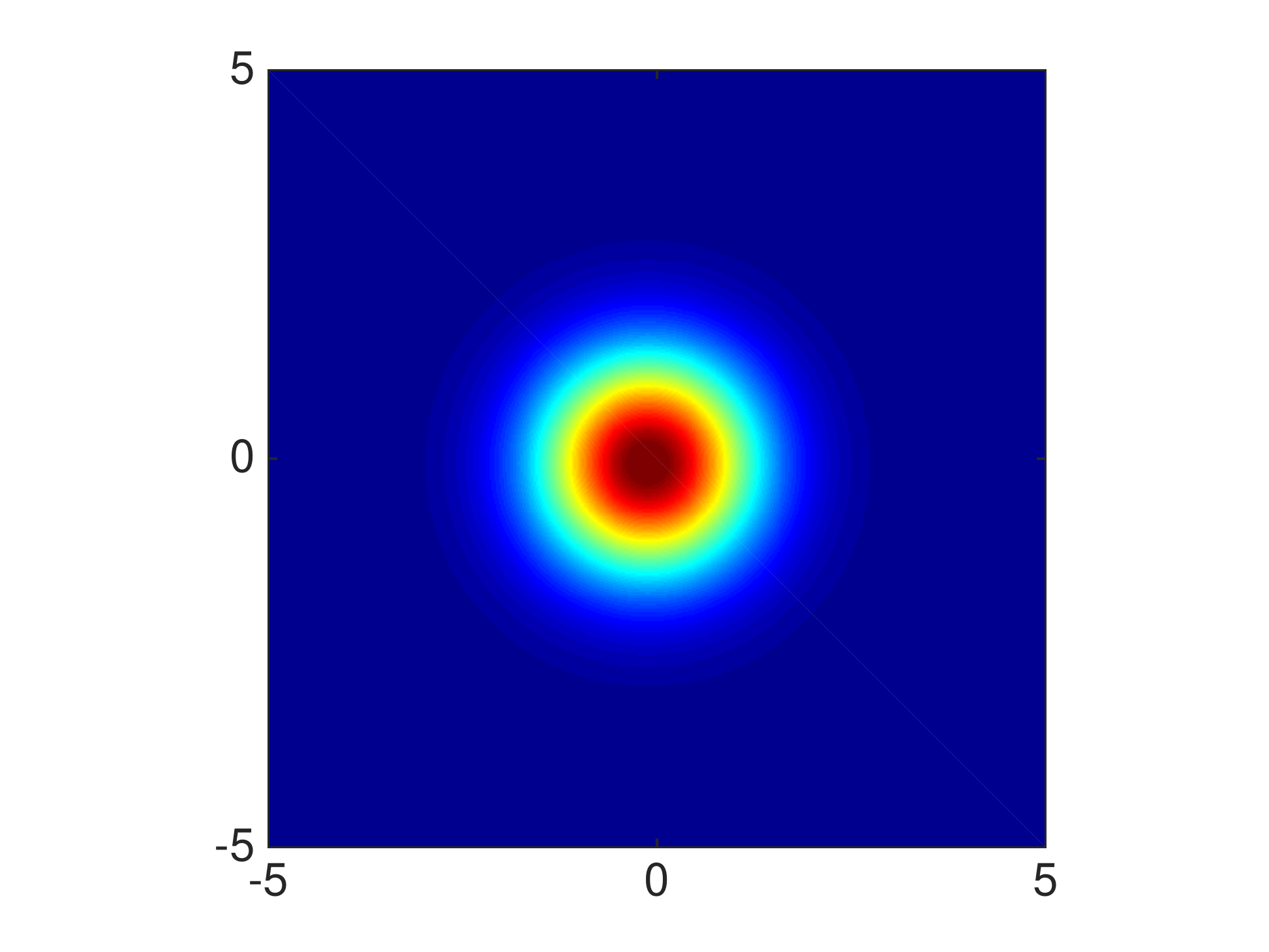,height=3.5cm,width=5cm,angle=0}
\psfig{figure=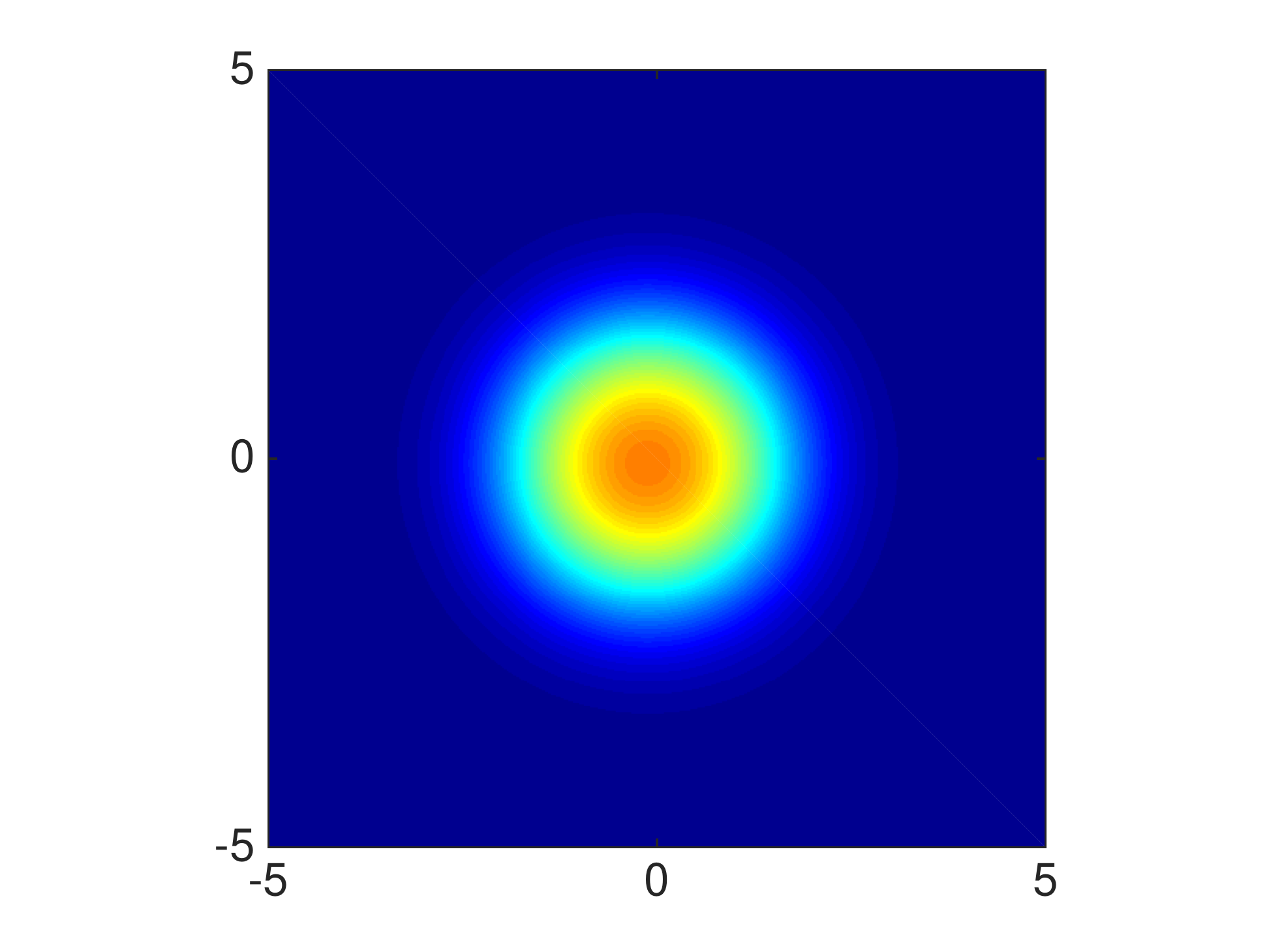,height=3.5cm,width=5cm,angle=0}
\psfig{figure=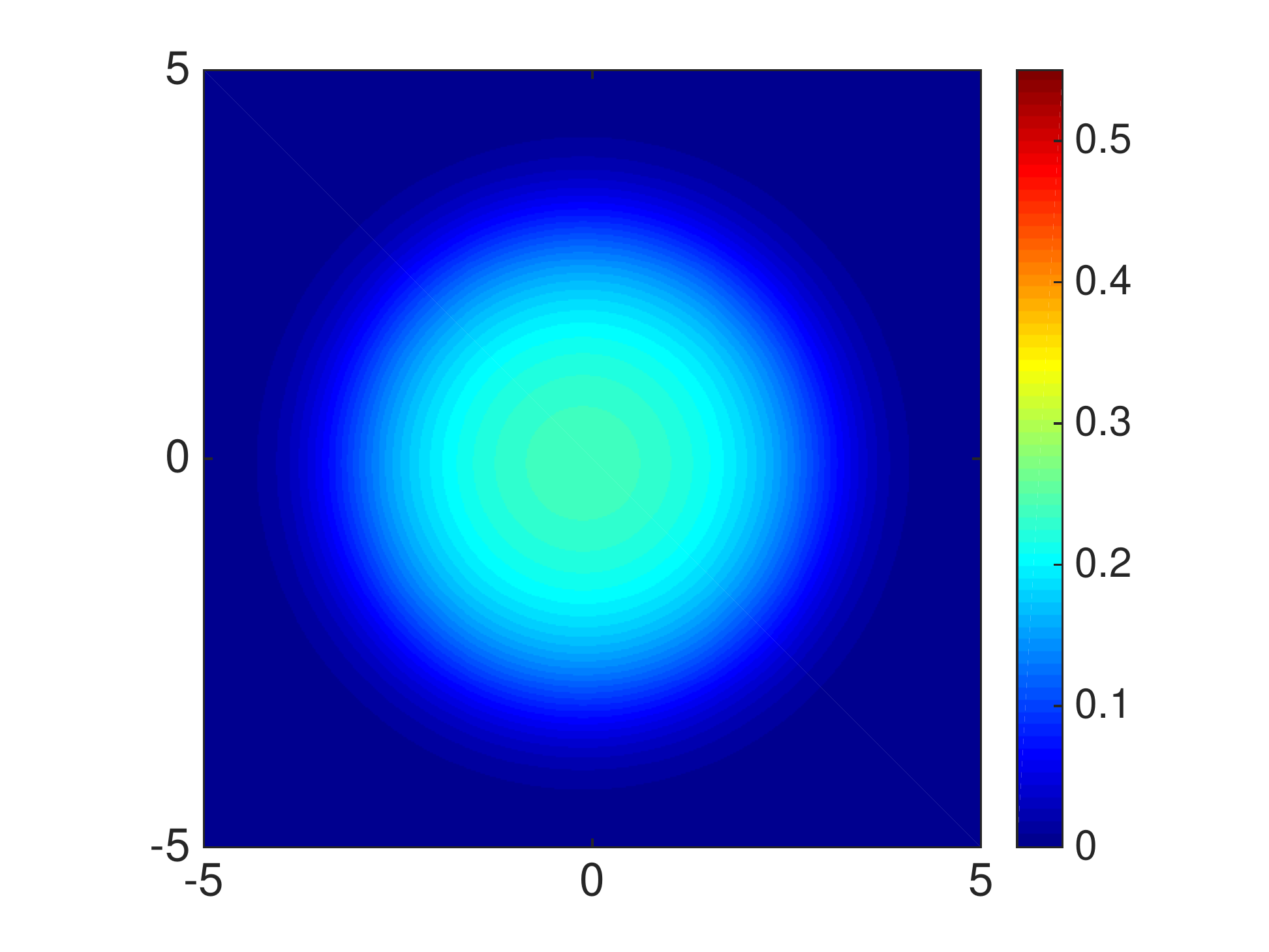,height=3.5cm,width=5cm,angle=0}}
\centerline{\psfig{figure=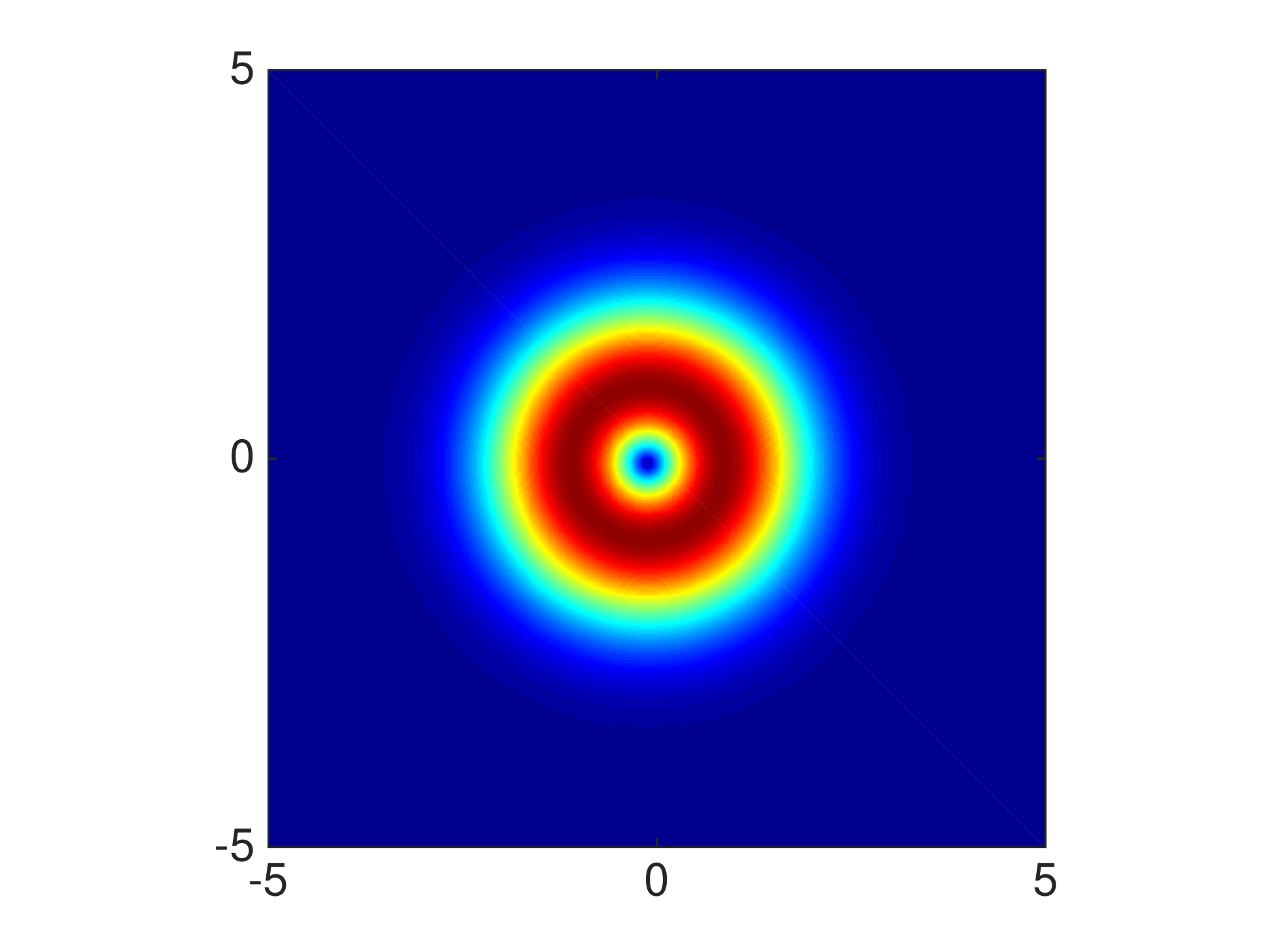,height=3.5cm,width=5cm,angle=0}
\psfig{figure=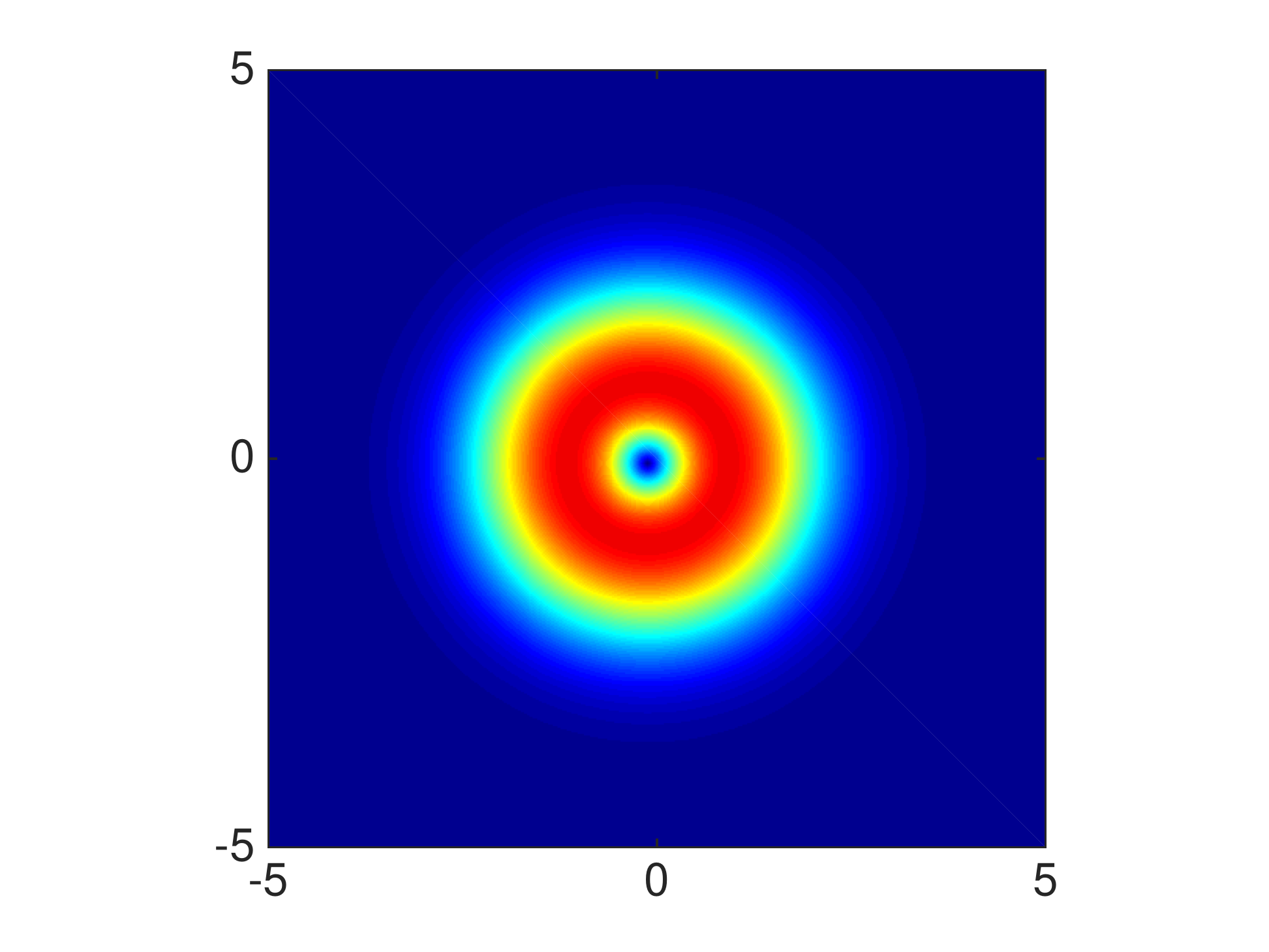,height=3.5cm,width=5cm,angle=0}
\psfig{figure=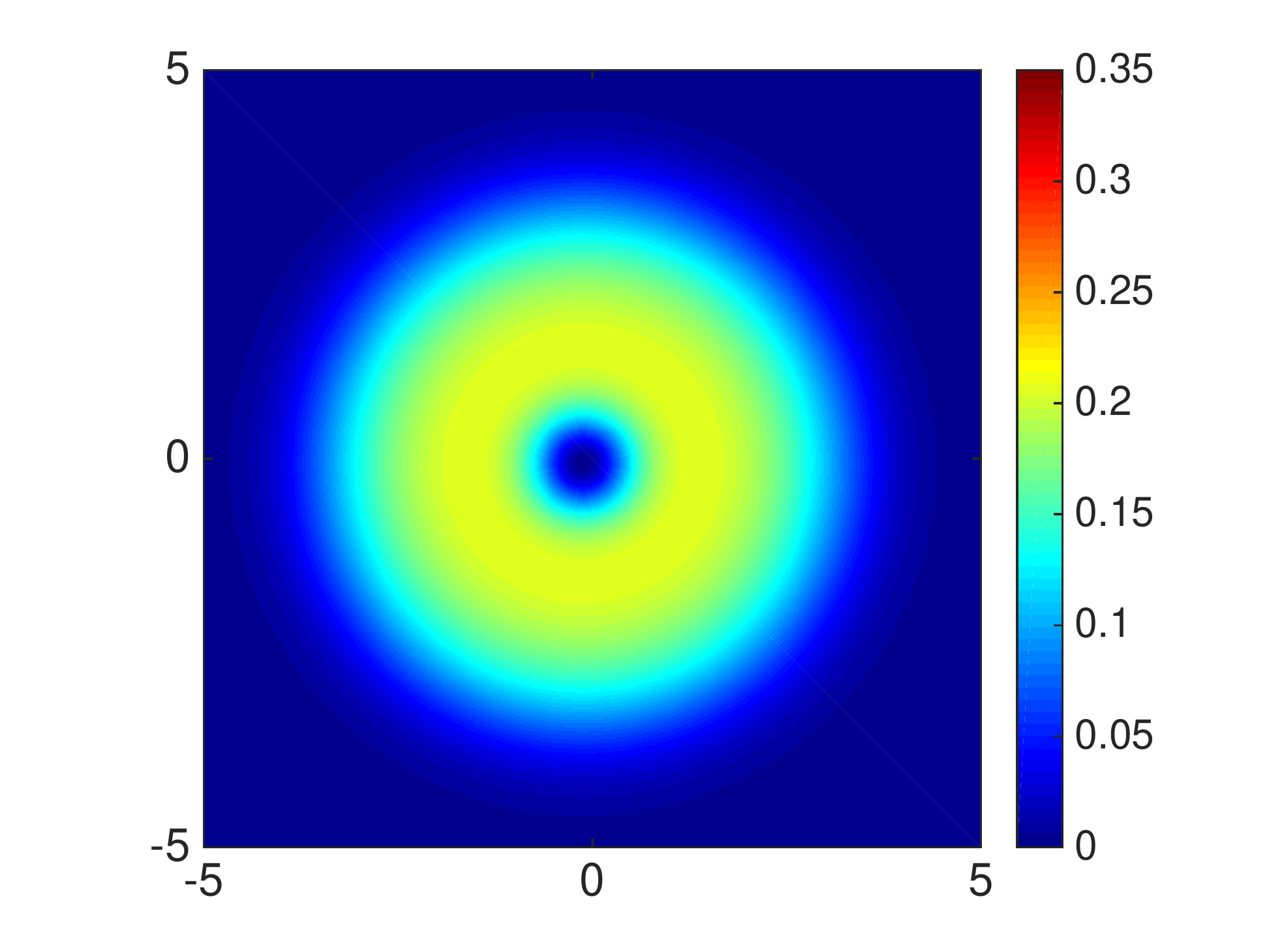,height=3.5cm,width=5cm,angle=0}}
\centerline{\psfig{figure=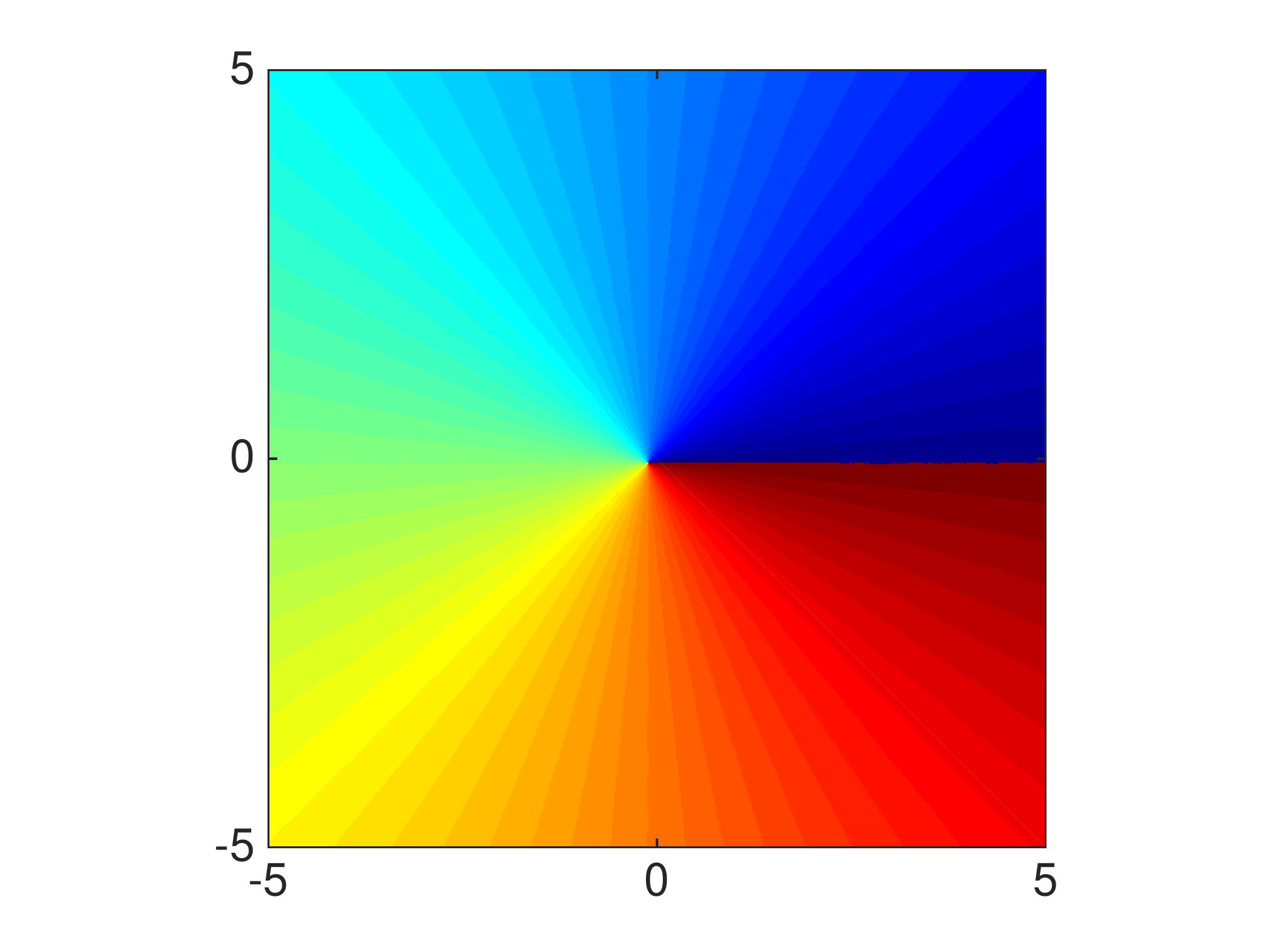,height=3.5cm,width=5cm,angle=0}
\psfig{figure=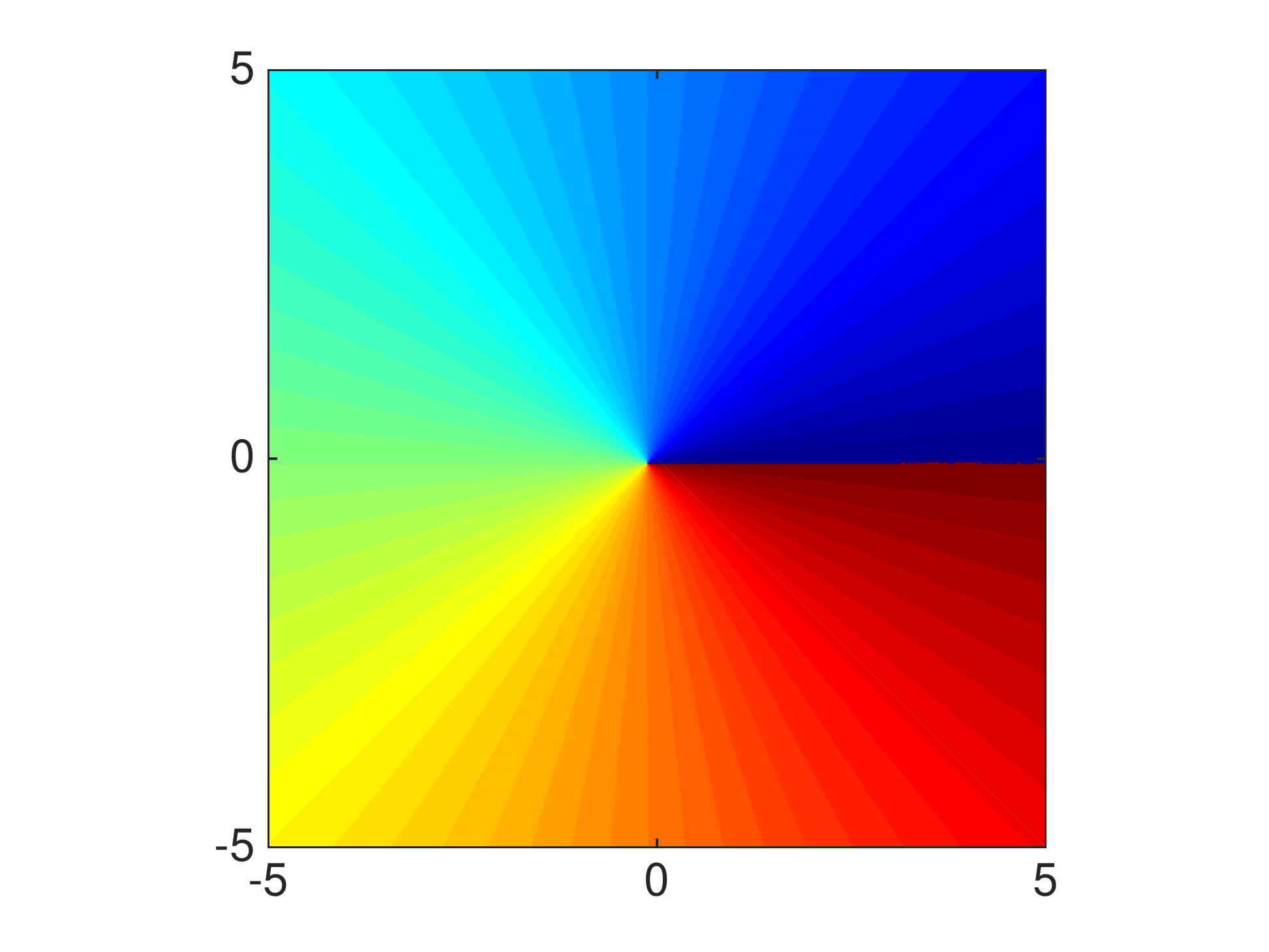,height=3.5cm,width=5cm,angle=0}
\psfig{figure=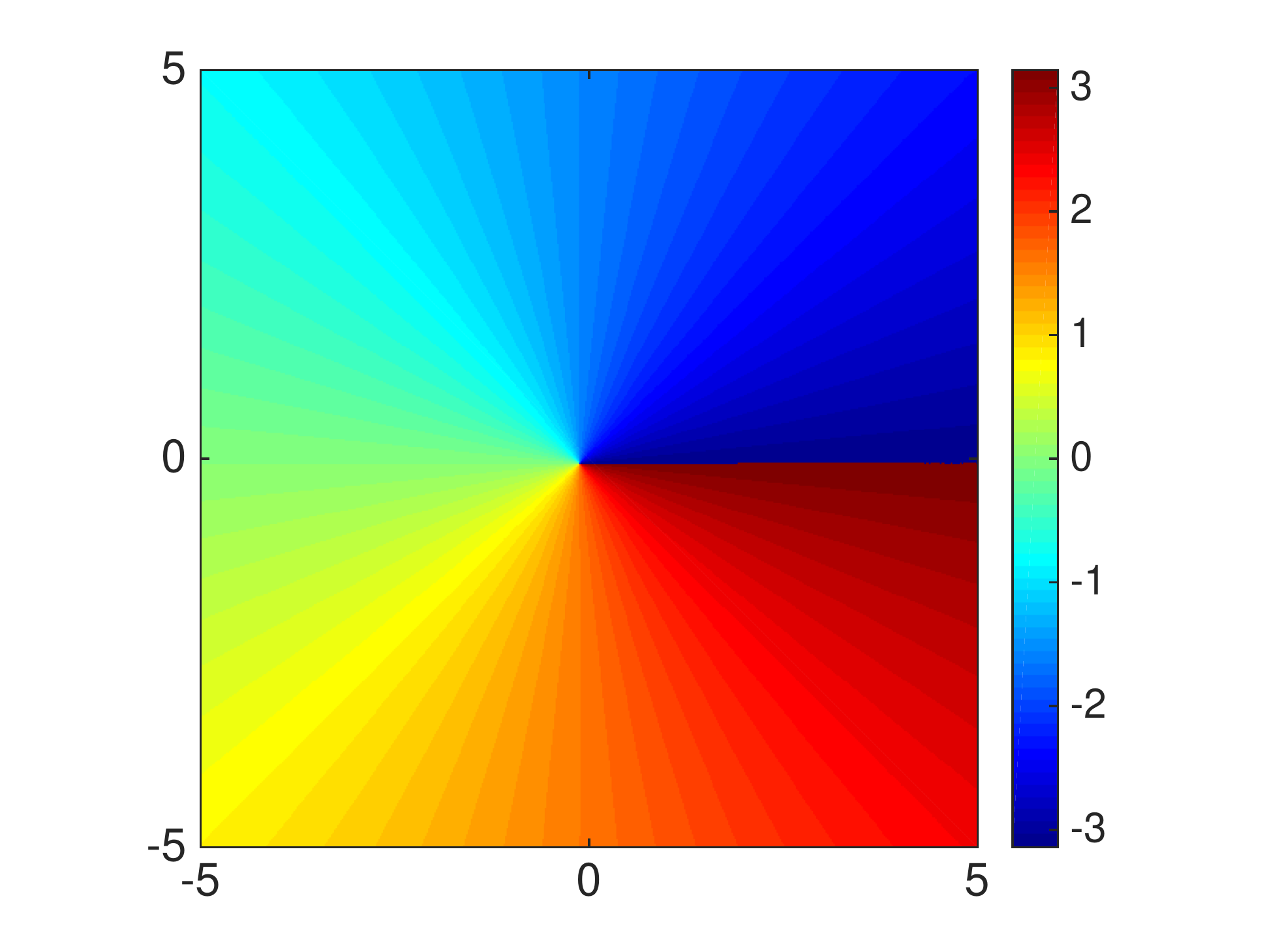,height=3.5cm,width=5cm,angle=0}}
\centerline{\psfig{figure=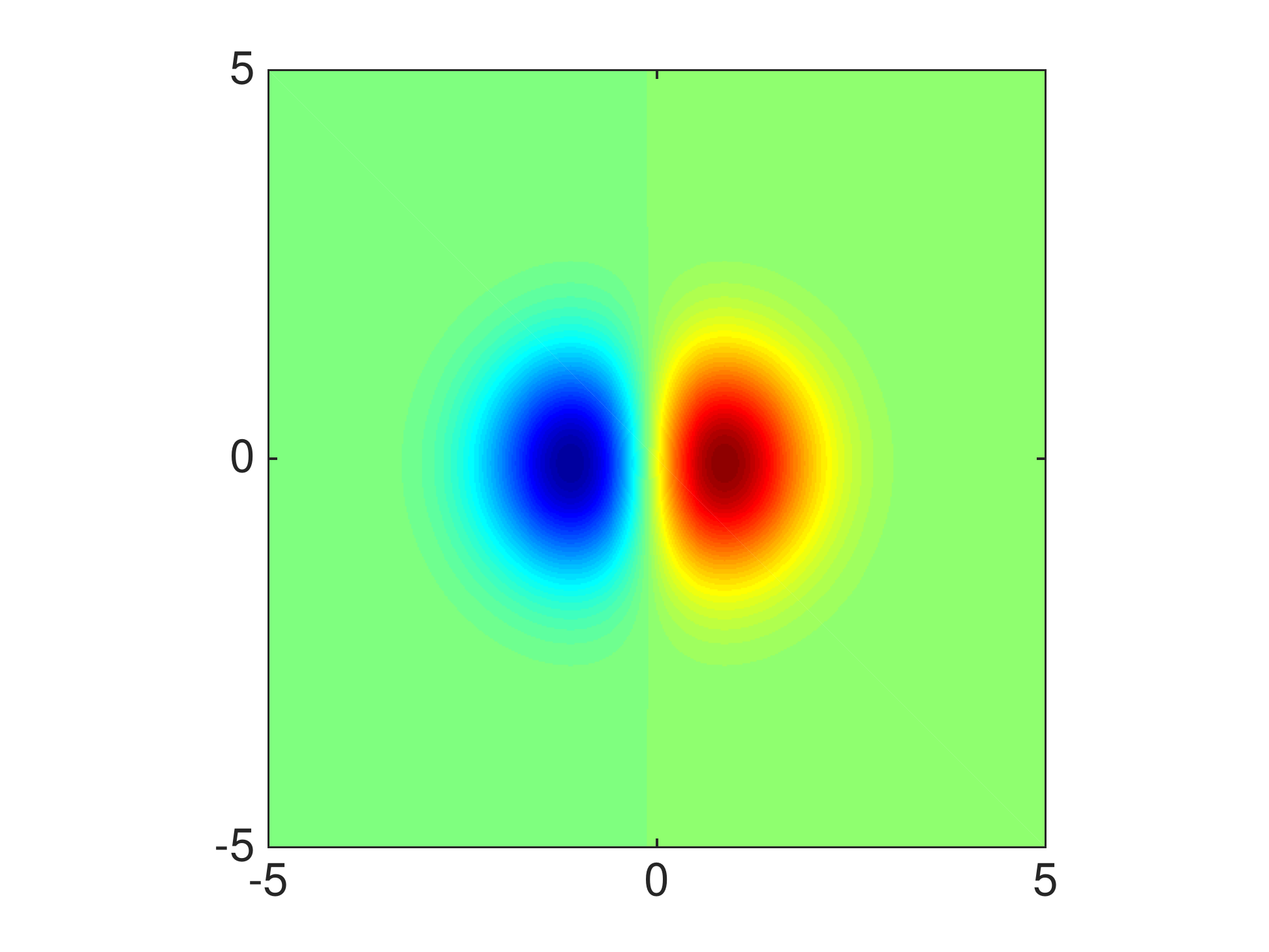,height=3.5cm,width=5cm,angle=0}
\psfig{figure=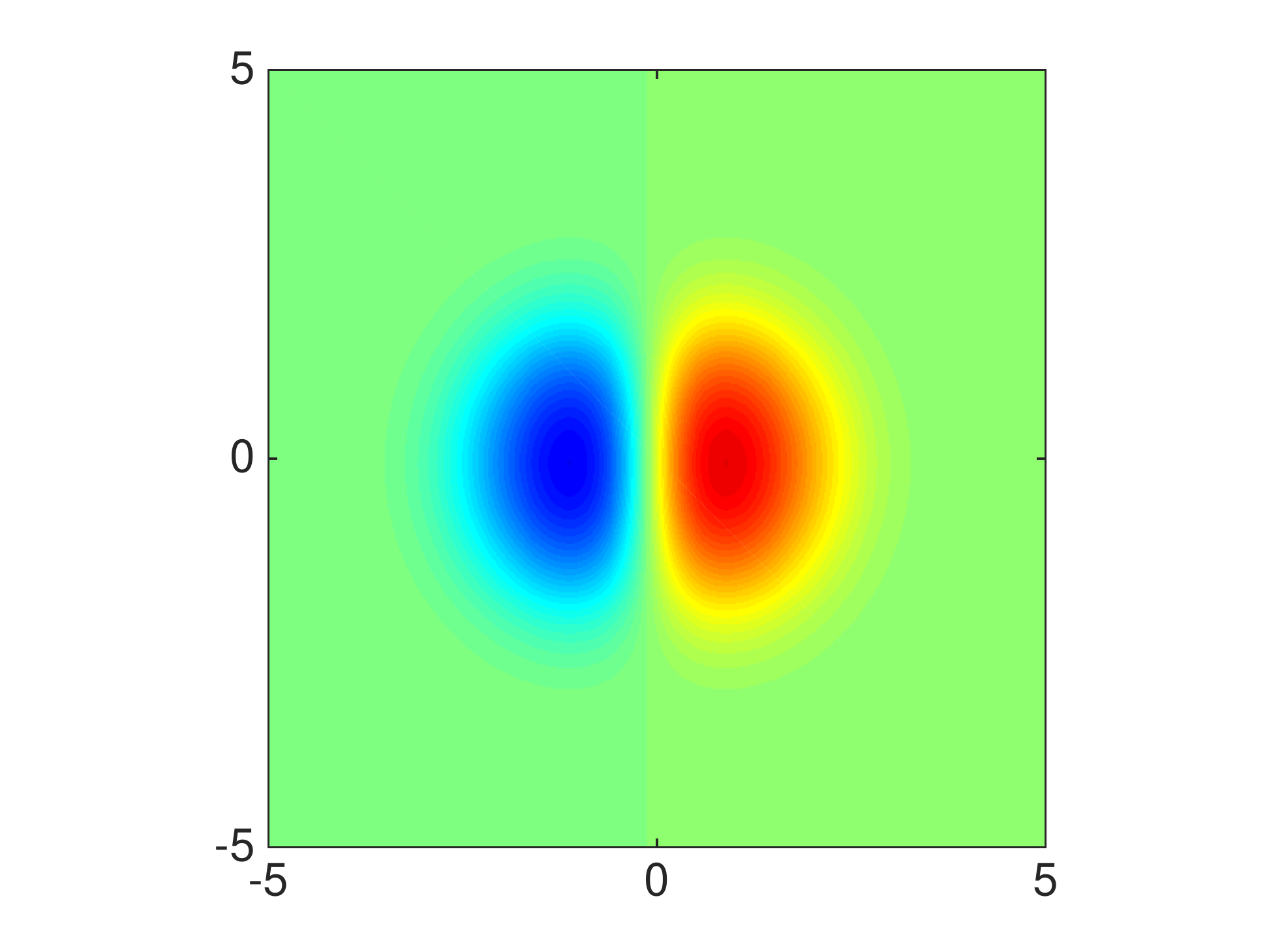,height=3.5cm,width=5cm,angle=0}
\psfig{figure=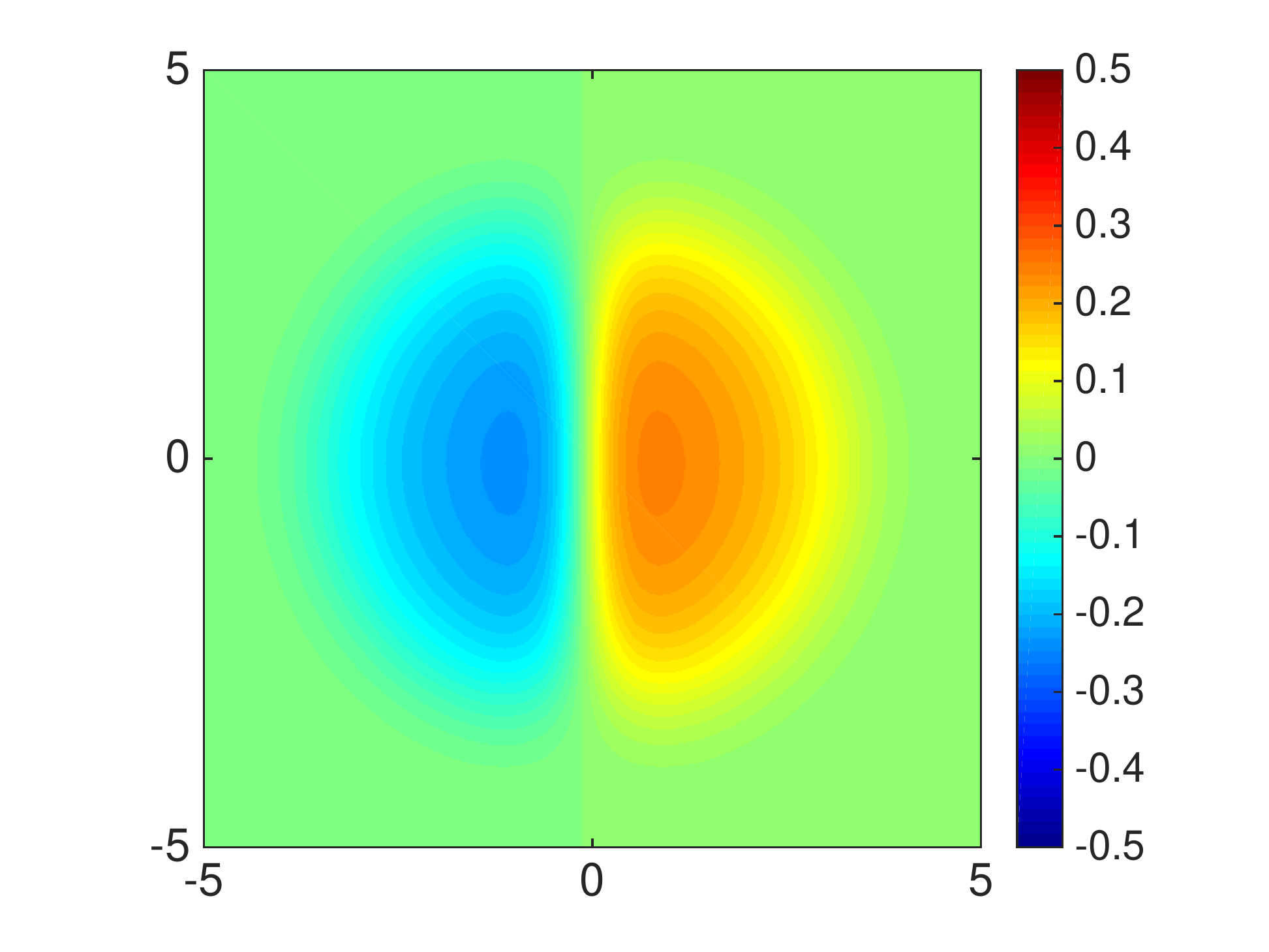,height=3.5cm,width=5cm,angle=0}}
\caption{Ground state $\phi_g^\beta$ (top row), first excited state -- vortex solution $|\phi_1^\beta=\phi_{1,v}^\beta|$(second row) and higher excited state in $x_1$-direction $\phi_{1,x}^\beta$(bottom row) of GPE in 2D with a harmonic potential ($\gm=1$) for
$\beta=0$ (left column), $\beta=10$ (middle column) and
$\beta=100$ (right column). The phase of the first excited state
$\phi_1^\beta=\phi_{1,v}^\beta$ is displayed in the third row.}
\label{fig:har2d_sol_degenerate}
\end{figure}


\begin{figure}[htb]
\centerline{\psfig{figure=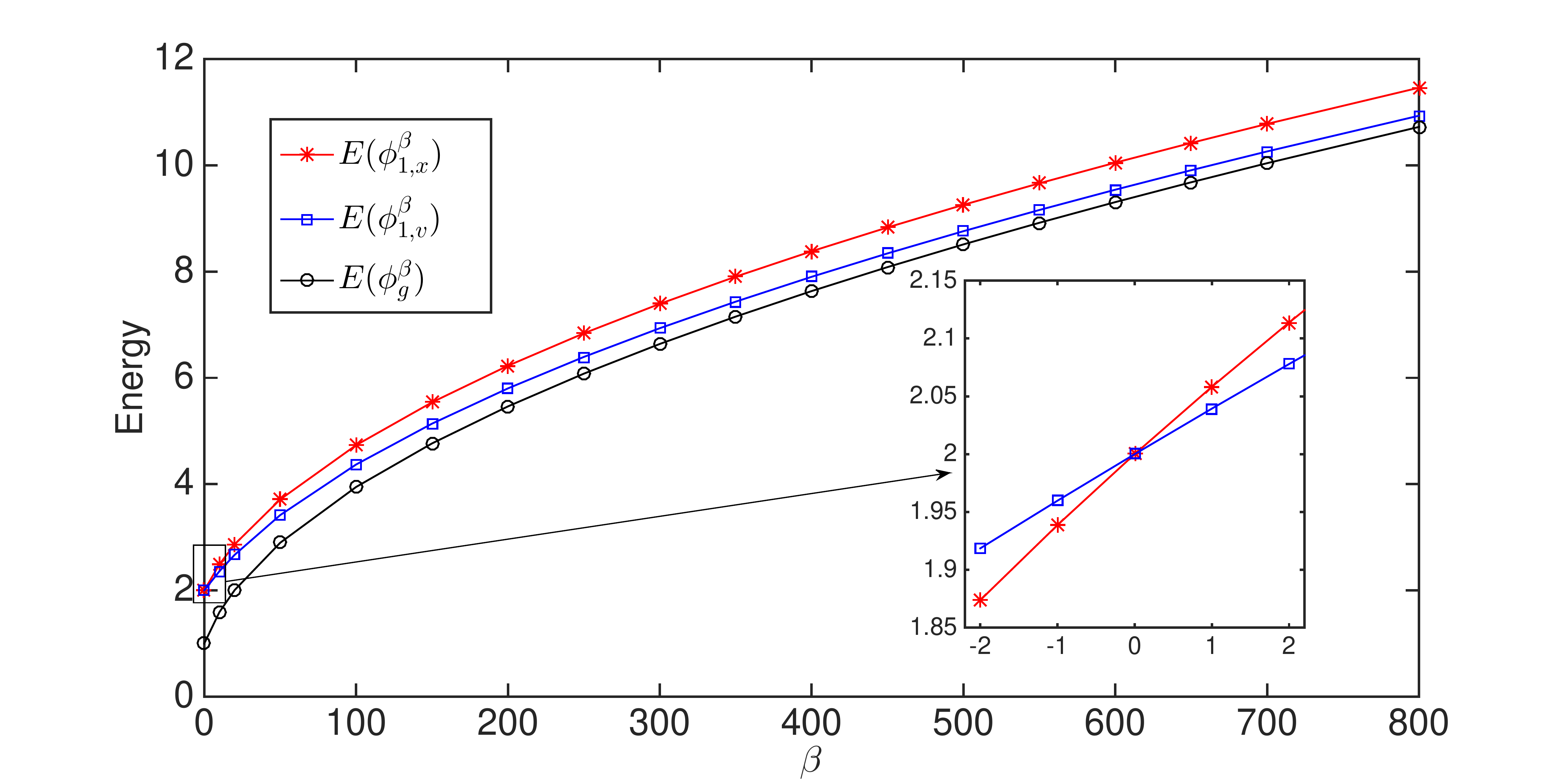,height=4cm,width=13cm,angle=0}}
\centerline{\psfig{figure=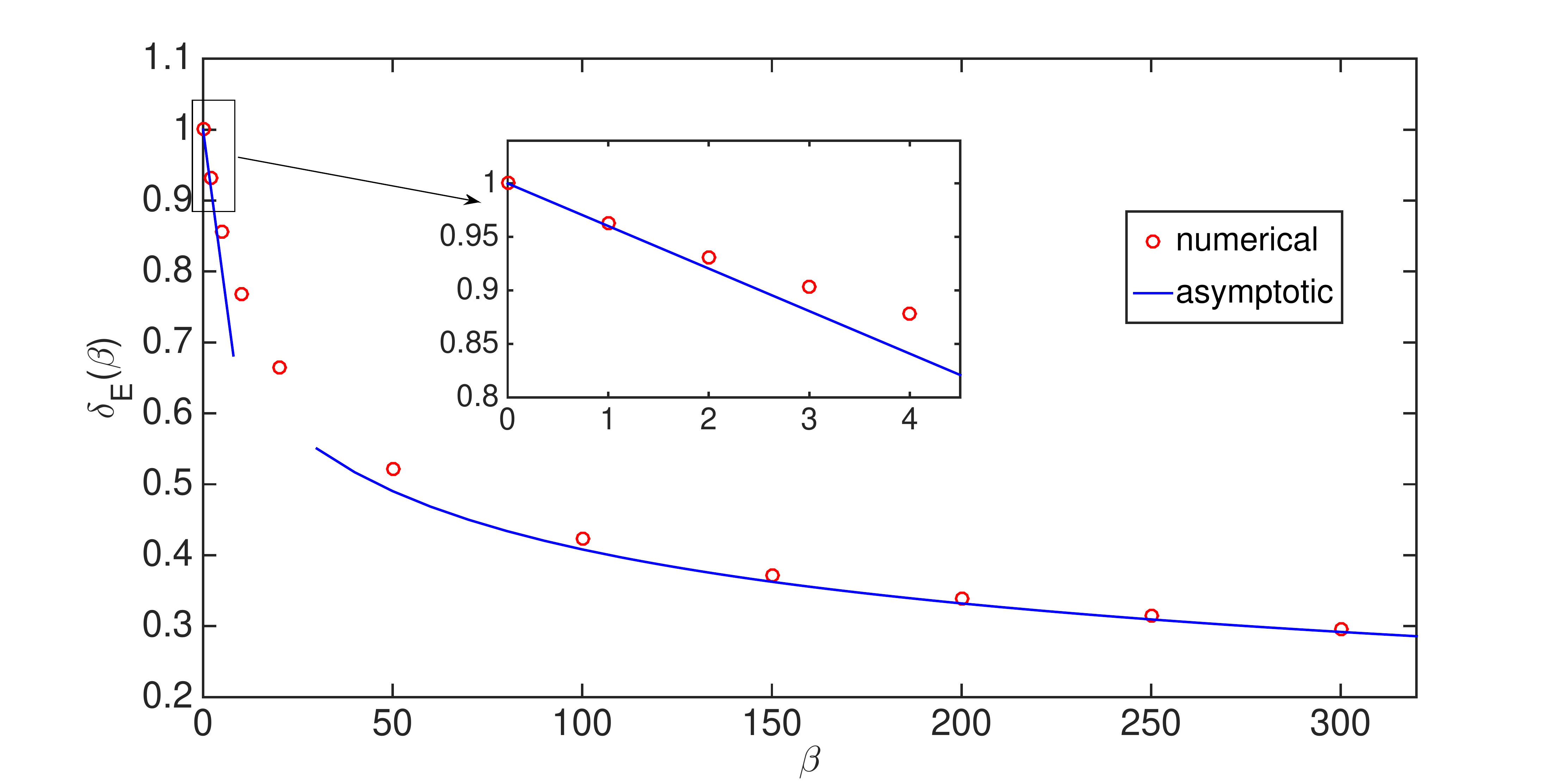,height=4cm,width=13cm,angle=0}}
\caption{Energy $E_g(\beta):=E(\phi_g^{\beta})<E_1(\beta):=E(\phi_1^\beta=\phi_{1,v}^{\beta})
<E_2(\beta):=E(\phi_{1,x_1}^{\beta}) =E(\phi_{1,x_2}^{\beta})$ of GPE in 2D under the harmonic potential $V(\bx)=\frac{x_1^2+x_2^2}{2}$ for different $\beta\ge0$ (top)
and the fundamental gap in energy $\delta_E(\beta)$ for different $\beta\ge0$ (bottom).}
\label{fig:har_2d_degenerate}
\end{figure}

\begin{figure}[h!]
\centerline{\psfig{figure=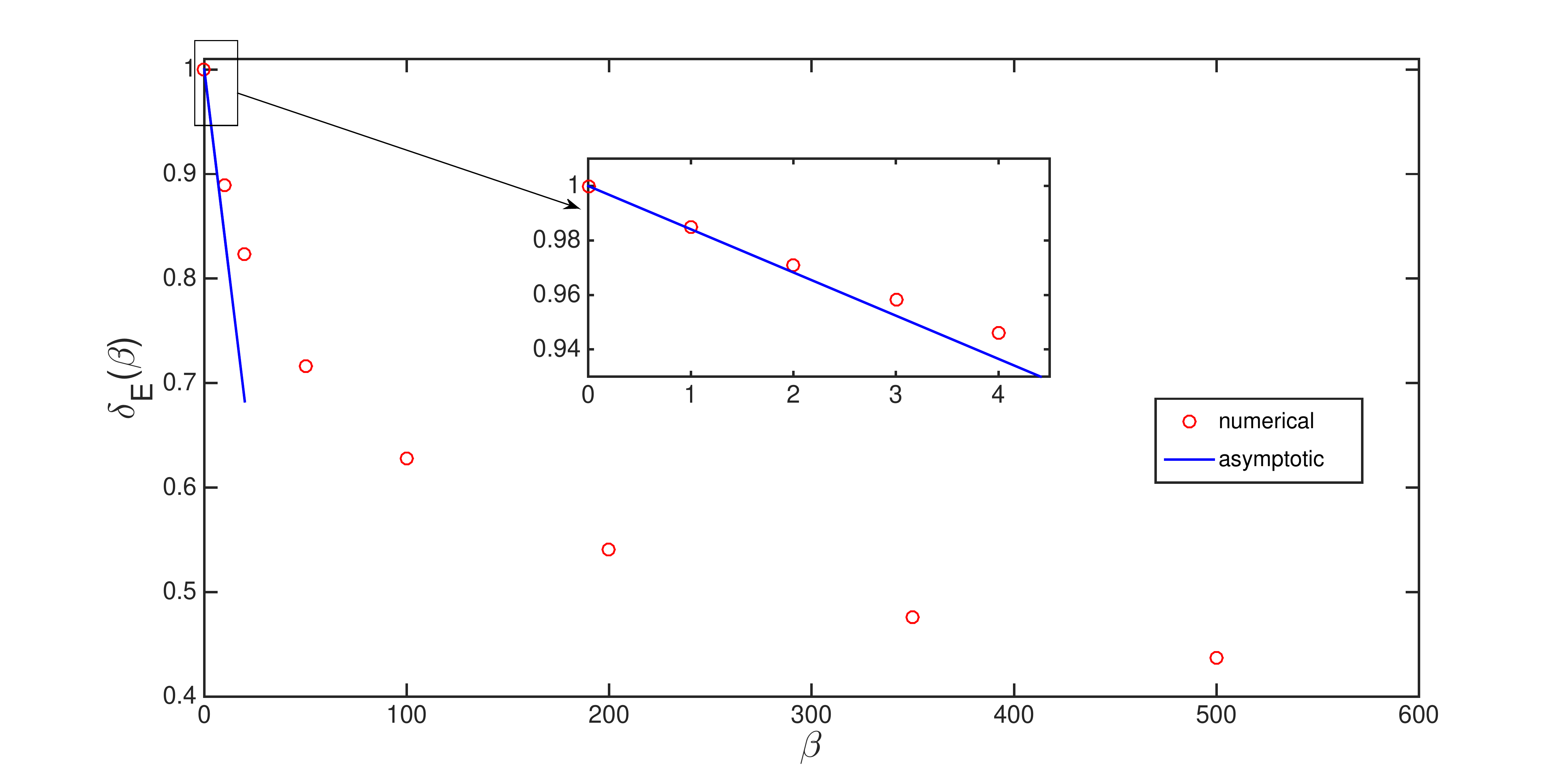,height=4cm,width=13cm,angle=0}}
\caption{The fundamental gaps in energy of GPE in 3D under a harmonic potential $V(\bx)=(x_1^2+x_2^2+x_3^2)/2$ for different $\beta\ge0$.}
\label{fig:har_3d_E_degenerate}
\end{figure}

From Lemmas \ref{har:ground_weak_degenerate2}\&\ref{har:ground_strong_degenerate}, we have asymptotic results for the fundamental gaps.

\begin{proposition}[For GPE under a harmonic potential in degenerate case]\label{asym:har_degen}
When  $V(\mathbf{x})=V_h(\bx)$ with $d\ge2$ and $\gm_1=\gm_2:=\gm$, i.e. GPE with a harmonic potential, we have

(i)  when $0 \le \beta\ll 1$ and $d\ge2$
\be \label{gaphar2}
\delta_E(\beta)=
\gm-\frac{(4-d)B_0}{8}\beta+o(\beta),\quad
\delta_{\mu}(\beta)=
\gm-\frac{(4-d)B_0}{4}\beta+o(\beta);
\ee

(ii) when $\beta\gg1$ and  $d=2$,
\be\label{gaphar3}
\delta_E(\beta)=\frac{\gm_1}{2}\sqrt{\frac{\pi}{\beta}}
\ln(\beta)+o\left(\frac{\ln(\beta)}{\sqrt{\beta}}\right),\quad \delta_{\mu}(\beta)=\frac{\gm_1}{4}\sqrt{\frac{\pi}{\beta}}\ln(\beta)+
o\left(\frac{\ln(\beta)}{\sqrt{\beta}}\right),
\ee
which implies $\delta_E(\beta)\to 0$ and $\delta_{\mu}(\beta)\to0$ as $\beta\to\infty$. 
\end{proposition}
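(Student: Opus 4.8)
The plan is to obtain \eqref{gaphar2} and \eqref{gaphar3} by subtracting the ground-state asymptotics from the first-excited-state asymptotics in each interaction regime, exactly as was done for the nondegenerate case in Proposition~\ref{asym:har}. The first observation is that in the degenerate case the ground state $\phi_g^\beta$ and its energy/chemical potential are unchanged from the nondegenerate case upon setting $\gm_2\to\gm_1=\gm$; hence the ground-state expansions \eqref{harsmall} (for $0\le\beta\ll1$) and \eqref{hartfm1}--\eqref{hartfg1} (for $\beta\gg1$) of Lemmas~\ref{har:ground_weak} and \ref{har:ground_strong} remain available verbatim, and only the first-excited-state expansions, supplied by Lemmas~\ref{har:ground_weak_degenerate2} and \ref{har:ground_strong_degenerate}, differ. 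Thus the whole task reduces to careful bookkeeping of the difference of two known expansions.

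For part (i) I would subtract \eqref{harsmall} from \eqref{harsmall_degenerate}. At $\beta=0$ both states are exact harmonic-oscillator eigenstates, and their eigenvalue difference is precisely the excitation energy $\gm$ in the degenerate $(x_1,x_2)$-plane; this supplies the leading $\beta$-independent term of $\delta_E$ and $\delta_\mu$. The linear-in-$\beta$ coefficient of $\delta_E$ is then the difference of the interaction coefficients, $\frac{B_0 d}{8}-\frac{B_0}{2}=-\frac{(4-d)B_0}{8}$, and likewise for $\delta_\mu$ one gets $\frac{B_0 d}{4}-B_0=-\frac{(4-d)B_0}{4}$, which is exactly \eqref{gaphar2}. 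It is worth noting the sign: for the physical dimensions $d=2,3$ the coefficient $-(4-d)B_0/8$ is negative, so the gap \emph{decreases} with $\beta$ in this regime, reflecting that the vortex-type first excited state lowers the relative interaction cost.

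For part (ii) I would subtract \eqref{hartfg1} and \eqref{hartfm1} from \eqref{harstrong_degenerate_E}. The key structural feature is an exact cancellation of the leading Thomas--Fermi contributions: the first excited state energy equals $E_g^{\rm TF}=\frac{2+d}{4+d}\mu_g^{\rm TF}$ plus corrections, while the ground-state energy equals the same $E_g^{\rm TF}$ up to lower order, so the $O(\sqrt\beta)$ bulk energies cancel and what survives is the logarithmic vortex-core correction $\frac{\gm_1}{2}\sqrt{\pi/\beta}\,\ln\beta$ (resp. $\frac{\gm_1}{4}\sqrt{\pi/\beta}\,\ln\beta$ for $\mu$), giving \eqref{gaphar3}; since $\ln\beta/\sqrt\beta\to0$, both gaps then tend to $0$. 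The main obstacle, and the only genuinely delicate point, is this cancellation: because the surviving gap is itself $o(1)$, one must know that $E_g(\beta)$ and $\mu_g(\beta)$ agree with their TF values $E_g^{\rm TF}$, $\mu_g^{\rm TF}$ to an order \emph{finer} than $\ln\beta/\sqrt\beta$, so that the subleading logarithm from the vortex core is not swamped by the $O(1)$ or $O(\beta^{-1/2})$ remainders of the ground-state expansion. Making this precise amounts to showing that the bulk (TF) parts of $\phi_g^\beta$ and $\phi_1^\beta$ contribute identically to this accuracy, so that their entire difference is localized to the core region $|\bx|\ll1$ governed by \eqref{sol:vortex_approx}; verifying that the matched-asymptotic accounting of Lemma~\ref{har:ground_strong_degenerate} indeed isolates this core contribution is the crux of the argument.
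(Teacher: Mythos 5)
Your proposal coincides with the paper's own (implicit) proof: Proposition~\ref{asym:har_degen} is presented there as an immediate consequence of Lemmas~\ref{har:ground_weak_degenerate2} and \ref{har:ground_strong_degenerate} (together with the ground-state expansions of Lemmas~\ref{har:ground_weak} and \ref{har:ground_strong} specialized to $\gm_2\to\gm_1=\gm$), obtained by exactly the subtraction you carry out, and your coefficient bookkeeping in both regimes reproduces \eqref{gaphar2} and \eqref{gaphar3}. Your closing caveat for part (ii) --- that the stated $o(1)$ remainders in the strong-interaction ground-state expansion are formally too coarse to justify a gap of size $\ln\beta/\sqrt{\beta}$, so one must argue that the bulk (TF) contributions of $\phi_g^\beta$ and $\phi_1^\beta$ cancel to finer order and the difference is localized at the vortex core --- is a genuine subtlety that the paper passes over silently (it defers rigorous justification of all its asymptotics to future work), so flagging it strengthens rather than departs from the paper's argument.
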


Again, to verify numerically our asymptotic results in Proposition \ref{asym:har_degen}, Fig.~\ref{fig:har2d_sol_degenerate} plots
the ground state $\phi_g^\beta$, the first excited state
$\phi_1^\beta=\phi_{1,v}^\beta$, and the higher excited states
$\phi_{1,x}^\beta$, of GPE in 2D with a harmonic potential ($\gm=1$)
for different $\beta\ge0$,
which were obtained numerically \cite{Bao_comp1,comp_gf,Bao2013,Wz1}.
Fig.~\ref{fig:har_2d_degenerate} depicts the energy $E_g(\beta)=E(\phi_g^\beta)<E_1(\beta)=E(\phi_1^\beta=\phi_{1,v}^\beta)
<E(\phi_{1,x}^\beta)$ for different $\beta\ge0$ and the corresponding
fundamental gaps in energy, and Fig.~\ref{fig:har_3d_E_degenerate} shows
the fundamental gaps in energy of GPE in 3D with a harmonic
potential. In addition,
our numerical results suggest that both $\delta_E(\beta)$ and $\delta_\mu(\beta)$
are decreasing functions for $\beta\ge0$ (cf. Figs. \ref{fig:har_2d_degenerate}\&\ref{fig:har_3d_E_degenerate}).

Based on the asymptotic results in Proposition \ref{asym:har_degen} and
the above numerical results as well as additional extensive numerical results not shown here for brevity \cite{Ruan},
we speculate the following gap conjecture.

\textbf{Gap conjecture}
(For GPE in the whole space in degenerate case)
Suppose the external potential $V(\mathbf{x})$ satisfies
$D^2V(\bx)\ge\gm_v^2I_d$ for $\bx\in{\mathbb R}^d$ with $\gm_v>0$  a constant and  $\dim(W_1)\ge2$. When $0\le \beta\ll1$, we have
\begin{equation}\label{bgap97788}
\delta_E(\beta)\ge\gm_v-C_1\beta,
\qquad\delta_\mu(\beta)\ge\gm_v-C_2\beta,
\end{equation}
where $C_1>0$ and $C_2>0$ are two constants independent of $\beta$.
In addition, we have $\lim_{\beta\to+\infty}\delta_E(\beta)=0$ and
$\lim_{\beta\to+\infty}\delta_\mu(\beta)=0$.

We remark here that the fundamental gap $\delta_E(\beta)$ gives
an upper bound of the critical rotating speed $\Omega_c^\beta$ in rotating BEC for $\beta\ge0$
\cite{Sei,BaoWP,Bao,Bao2013}, which implies that $\lim_{\beta\to+\infty}\Omega_c^\beta=0$.


\section{Extensions to other BCs}\label{asym:neumann_periodic}
\setcounter{equation}{0}
\setcounter{figure}{0}
In this section, we study the fundamental gaps of GPE on bounded domains with either periodic BC or homogeneous Neumann BC.
\subsection{Results for the periodic BC}\label{ring}
Take  $\Omega=\Omega_0$ and assume that $\phi$ satisfies the periodic BC.
When $d=1$, it corresponds to a BEC on a ring \cite{Bao2013}; and when $d=2$, it corresponds
to a BEC on a torus.
In this case, the ground state $\phi_g^\beta$  is defined the same as in (\ref{def:ground}) provided that
the set $S$ is replaced by $S=\{\phi\, |\, \|\phi\|_2^2:=\int_{\Omega} |\phi(\bx)|^2d\bx =1,\ E(\phi)<\infty,
\ \phi \ \hbox{is periodic on }\ \partial \Omega \}$,
and the first excited state $\phi_1^\beta$ and the eigenspace $W_1$ are defined similarly.
We have the following results for the energy and chemical potential of the ground  and first excited states.

\begin{lemma}\label{ring:ground} Assume $V(\bx)\equiv 0$,
for all $\beta\ge0$ and $d=1,2,3$, we have
\begin{align}
 \label{eng:ring_ground}
E_g(\beta)=\frac{A_0^2}{2}\beta, \quad \mu_g(\beta)=A_0^2\beta, \quad
E_1(\beta)=\frac{2\pi^2}{L_1^2}+\frac{A_0^2}{2}\beta,\quad
\mu_1(\beta)=\frac{2\pi^2}{L_1^2}+A_0^2\beta.
\end{align}
\end{lemma}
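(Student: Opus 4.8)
The plan is to exploit the fact that, because $V\equiv0$ and the boundary condition is periodic, both the ground state and a first excited state are available in closed form for every $\beta\ge0$, so all four quantities can be computed \emph{exactly} rather than asymptotically. First I would identify the ground state. On $\Og=\Og_0$ with periodic BC the constant $\phi_g^\beta(\bx)\equiv A_0=|\Og|^{-1/2}$ is admissible and normalized, and inserting it into (\ref{eq:eig}) shows it is a stationary state with chemical potential $\beta A_0^2$. To see it is the global minimizer I would bound the energy (\ref{def:E}) from below for an arbitrary normalized $\phi$: the kinetic term is nonnegative, while Cauchy--Schwarz gives $\|\phi\|_4^4\ge|\Og|^{-1}\|\phi\|_2^4=A_0^2$, with equality forcing $|\phi|$ to be constant. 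Hence $E(\phi)\ge\frac{\beta}{2}A_0^2=E(\phi_g^\beta)$, so the constant is indeed the ground state, and (\ref{def:mu}) then yields $E_g(\beta)=\frac{A_0^2}{2}\beta$ and $\mu_g(\beta)=A_0^2\beta$.

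Next I would treat the first excited state through the explicit candidate $\phi_\star(\bx)=A_0\,e^{2\pi i x_1/L_1}$. Since $|\phi_\star|^2\equiv A_0^2$, one checks directly that $-\frac12\Delta\phi_\star+\beta|\phi_\star|^2\phi_\star=\left(\frac{2\pi^2}{L_1^2}+\beta A_0^2\right)\phi_\star$, so $\phi_\star$ is a stationary state, and evaluating (\ref{def:E})--(\ref{def:mu}) gives energy $\frac{2\pi^2}{L_1^2}+\frac{A_0^2}{2}\beta$ and chemical potential $\frac{2\pi^2}{L_1^2}+\beta A_0^2$. As both exceed the ground-state values, $\phi_\star$ is an excited state, which already yields the upper bounds $E_1(\beta)\le\frac{2\pi^2}{L_1^2}+\frac{A_0^2}{2}\beta$ and $\mu_1(\beta)\le\frac{2\pi^2}{L_1^2}+\beta A_0^2$. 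The reason the plane wave is optimal is that its density is constant, so it simultaneously attains the minimal interaction energy $\frac{\beta}{2}A_0^2$ while carrying only the kinetic energy of the lowest nonzero Fourier mode; because $L_1\ge\cdots\ge L_d$, that lowest mode is $n_1=\pm1$, contributing exactly $\frac{2\pi^2}{L_1^2}$.

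The main obstacle is the matching lower bound, namely that no excited state has energy strictly below $\frac{2\pi^2}{L_1^2}+\frac{A_0^2}{2}\beta$. Combining the two preceding steps gives $E(\phi)\ge\frac12\|\nabla\phi\|_2^2+\frac{\beta}{2}A_0^2$ for every normalized $\phi$, so it would suffice to show $\|\nabla\phi\|_2^2\ge\frac{4\pi^2}{L_1^2}$ for every excited state; this is not automatic for arbitrary competitors, and the difficulty is precisely that the defocusing GPE on a torus admits further stationary states (e.g.\ Jacobi-elliptic/cnoidal profiles with nonconstant density) that must be excluded. I would handle this by the continuation argument used elsewhere in the paper: at $\beta=0$ the problem is linear and the claim reduces to the classical Fourier spectral gap of $-\Delta$ under periodic BC, whose lowest nonzero eigenvalue is $\frac{4\pi^2}{L_1^2}$, attained by $n_1=\pm1$; assuming no band crossing occurs as $\beta$ increases, $\phi_\star$ remains the first excited state for all $\beta\ge0$, converting the upper bounds above into the stated equalities (\ref{eng:ring_ground}). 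A fully rigorous proof of this minimality, controlling all nonconstant stationary states uniformly in $\beta$, is the step I expect to be hard.
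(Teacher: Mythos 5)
Your proposal is correct, and for the ground state it is essentially word-for-word the paper's argument: the same Cauchy--Schwarz bound $\int_\Omega|\phi|^4\,d\mathbf{x}\ge A_0^2$ for normalized $\phi$, the kinetic term dropped, and the constant $A_0$ attaining equality. For the first excited state the two routes differ in mechanics, though not in substance. The paper never posits the plane wave a priori; it works inside the degenerate eigenspace $W_1={\rm span}\{\sqrt{2}A_0\cos(2\pi x/L_1),\,\sqrt{2}A_0\sin(2\pi x/L_1)\}$, minimizes $E(a\varphi_1+b\varphi_2)$ subject to $|a|^2+|b|^2=1$, and finds the minimum exactly at $a=\pm ib$, i.e. at your $\phi_\star=A_0e^{2\pi ix_1/L_1}$ --- the same device it uses in the degenerate-case Lemmas \ref{box:ground_weak_degenerate2} and \ref{har:ground_weak_degenerate2}, and it is what explains \emph{why} the constant-density complex combination beats every real (cosine-type) combination. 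You instead write $\phi_\star$ down and verify it solves \eqref{eq:eig} exactly, which buys something the paper's small-$\beta$ ansatz does not by itself give: rigorous upper bounds on $E_1(\beta)$ and $\mu_1(\beta)$ valid for every $\beta\ge0$. On the remaining point --- that no other excited state does better --- you are more candid than the paper: the paper simply asserts that, ``similar to'' \eqref{proof:const_int} and \eqref{engper789}, one ``can prove rigorously'' that $\phi_1^\beta\equiv A_0e^{i2\pi x/L_1}$ for all $\beta\ge0$, with no details supplied; your reduction of the problem to the kinetic gap $\|\nabla\phi\|_2^2\ge 4\pi^2/L_1^2$, your observation that this follows from the Fourier spectral gap only for zero-mean competitors while nonconstant-density (cnoidal) stationary states must be excluded separately, and your fallback on a no-band-crossing continuation identify precisely the step the paper glosses over. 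Neither text actually closes that lower bound; your account of what is missing, and why, is the sharper of the two.
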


\begin{proof}
For any $\phi\in S$, the Cauchy-Schwarz inequality implies that
\be\label{proof:const_int}
1=\|\phi\|_2^4=\left(\int_{\Omega}|\phi|^2d\mathbf{x}\right)^2
\le \int_{\Omega}|\phi|^4d\mathbf{x} \, \int_{\Omega}1d\mathbf{x}=
\frac{1}{A_0^2}\int_{\Omega}|\phi|^4d\mathbf{x}.
\ee
Thus, for all $\beta\ge0$ and any $\phi\in S$, we have
\be\label{engper789}
E(\phi)=\int_{\Omega}\left[\frac{1}{2}|\nabla\phi|^2+\frac{\beta}{2}
|\phi|^4\right]d\mathbf{x}\ge\frac{\beta}{2}A_0^2=\frac{\beta}{2}
\int_{\Omega}|A_0|^4d\mathbf{x}=E(\phi\equiv A_0).
\ee
Therefore,  for all $\beta\ge0$, we have
\be\label{sol:ring_ground}
\phi_g^{\beta}(\mathbf{x})\equiv \phi_g^0(\bx):= A_0, \qquad \bx\in\overline{\Omega}.
\ee
 Plugging (\ref{sol:ring_ground})
into (\ref{def:E}) and (\ref{def:mu}) and noticing $V(\bx)\equiv0$, we obtain the first two equalities in (\ref{eng:ring_ground}).

As for the first excited state, for simplicity, we only present 1D case and
extensions to 2D and 3D are straightforward. When $d=1$ and $\beta=0$,
it is easy to see that
$\varphi_1(x):=\sqrt{2}A_0\cos\left(2\pi x/L_1\right)$
and $\varphi_2(x):=\sqrt{2}A_0\sin\left(2\pi x/L_1\right)$
are two linearly independent orthonormal first excited states.
In fact, in this case, $W_1={\rm span}\{\varphi_1,\varphi_2\}$.
In order to find an appropriate approximation of the first
excited state when $0<\beta \ll1$, we take an ansatz
\be\label{per01_degenerate}
\varphi_{a,b}(\mathbf{x})= a
\varphi_1(x)+ b\varphi_2(x), \qquad 0\le x\le L_1,
\ee
where $a,b\in\mathbb{C}$ satisfying $|a|^2+|b|^2=1$ implies
$\|\varphi_{a,b}\|_2=1$.
Then $a$ and $b$ can be determined by
minimizing $E(\varphi_{a,b})$. Plugging \eqref{per01_degenerate}
into \eqref{def:E}, we have for $\beta\ge0$
\begin{align}
E(\varphi_{a,b})=\frac{2\pi^2}{L_1^2}+\frac{\beta}{4L_1}\left[2(|a|^2+|b|^2)^2+|a^2+b^2|^2\right]\ge\frac{2\pi^2}{L_1^2}+\frac{\beta}{2L_1},
\end{align}
which is minimized when $a^2+b^2=0$, i.e. $a=\pm ib$. By taking $a=1/\sqrt{2}$ and $b=i/\sqrt{2}$, we get an approximation of the first excited state as
\be\label{sol:ring_1st}
\phi_1^{\beta}(x)\approx \phi_1^0(x):=A_0e^{i2\pi x/L_1}, \qquad 0\le x\le L_1.
\ee
Similar to \eqref{proof:const_int} and \eqref{engper789}, we can prove
rigorously that for all $\beta\ge0$
\be\label{sol:ring_excit}
\phi_1^{\beta}(x)\equiv \phi_1^0(x)= A_0e^{i2\pi x/L_1}, \qquad 0\le x\le L_1.
\ee
Plugging \eqref{sol:ring_excit} into \eqref{def:E} and \eqref{def:mu}, we obtain the last two equalities in \eqref{eng:ring_ground}.
\end{proof}



From \eqref{eng:ring_ground}, it is straightforward to have (with the proof
omitted here for brevity).

\begin{proposition} [For GPE on a bounded domain with periodic BC] \label{asym:ring} Assume $V(\mathbf{x})\equiv 0$, we have 
\be
\delta_E(\beta)=\delta_{\mu}(\beta)=\frac{2\pi^2}{L_1^2}, \qquad \beta\ge0.
\ee
\end{proposition}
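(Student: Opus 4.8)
The plan is to derive this Proposition as an immediate corollary of Lemma~\ref{ring:ground}, since all the analytical work has already been invested there. Recall from the definition of the fundamental gaps in \eqref{gapp21} that $\delta_E(\beta)=E_1(\beta)-E_g(\beta)$ and $\delta_\mu(\beta)=\mu_1(\beta)-\mu_g(\beta)$. I would simply insert the four explicit expressions
\[
E_g(\beta)=\frac{A_0^2}{2}\beta,\quad E_1(\beta)=\frac{2\pi^2}{L_1^2}+\frac{A_0^2}{2}\beta,\quad
\mu_g(\beta)=A_0^2\beta,\quad \mu_1(\beta)=\frac{2\pi^2}{L_1^2}+A_0^2\beta
\]
established in \eqref{eng:ring_ground} and subtract, observing that in each case the $\beta$-dependent term is identical for the ground and first excited states and therefore cancels exactly, leaving $\delta_E(\beta)=\delta_\mu(\beta)=\frac{2\pi^2}{L_1^2}$ for every $\beta\ge0$.

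The conceptual point worth flagging, rather than any computational obstacle, is \emph{why} the cancellation is exact for all $\beta$ and not merely asymptotic. Under the periodic BC with $V\equiv0$, Lemma~\ref{ring:ground} shows that both the ground state $\phi_g^\beta\equiv A_0$ and the first excited state $\phi_1^\beta=A_0e^{i2\pi x/L_1}$ (and its higher-dimensional analogues) have constant modulus $|\phi|^2\equiv A_0^2$. Consequently the interaction integrals $\frac{\beta}{2}\int_\Omega|\phi|^4\,d\mathbf{x}=\frac{\beta}{2}A_0^4|\Omega|=\frac{\beta}{2}A_0^2$ coincide for the two states, so the entire $\beta$-contribution to both $E$ and $\mu$ is state-independent. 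The gap is thus pinned to its linear ($\beta=0$) value $\frac{2\pi^2}{L_1^2}$, which is the difference between the first two eigenvalues $-\Delta$ on the torus $\Omega_0$, namely $\frac{(2\pi)^2}{L_1^2}\cdot\frac14\cdot$ the appropriate mode spacing.

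Since there is genuinely nothing more to prove beyond this one-line subtraction, I expect no main obstacle at this stage; the only care needed is to note that the exactness (as opposed to leading-order validity) is inherited directly from the rigorous identities \eqref{sol:ring_ground} and \eqref{sol:ring_excit} of Lemma~\ref{ring:ground}, which hold for all $\beta\ge0$ rather than only in an asymptotic regime. This is precisely why the statement can assert equality $\delta_E(\beta)=\delta_\mu(\beta)=\frac{2\pi^2}{L_1^2}$ uniformly in $\beta$, in contrast to the Dirichlet and harmonic cases where only asymptotic expansions were available, and it justifies omitting the detailed computation as the authors do.
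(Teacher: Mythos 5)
Your proof is correct and coincides with the paper's own (omitted) argument: the paper derives Proposition~\ref{asym:ring} directly from the exact expressions \eqref{eng:ring_ground} in Lemma~\ref{ring:ground} by subtraction, exactly as you do. Your added remark explaining \emph{why} the cancellation is exact for all $\beta\ge0$ (both states have constant modulus $A_0$, so the interaction contributions to $E$ and $\mu$ are identical) is a correct gloss on the rigorous identities \eqref{sol:ring_ground} and \eqref{sol:ring_excit}, not a different method.
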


Based on the above analytical results and extensive numerical results not shown here for brevity \cite{Ruan}, we speculate the following gap conjecture.

\textbf{Gap conjecture}
(For GPE on a bounded domain with periodic BC)
Suppose the external potential $V(\mathbf{x})$ is convex,
we speculate the following gap conjecture
\begin{equation}\label{bgap6776}
\delta^{\infty}_E:=\inf_{\beta\ge0} \delta_E(\beta)\ge\frac{2\pi^2}{D^2},
\qquad\delta^{\infty}_{\mu}:=\inf_{\beta\ge0} \delta_\mu(\beta)\ge\frac{2\pi^2}{D^2}.
\end{equation}

\subsection{Results for homogeneous Neumann BC}
Assume that $\Omega\subset{\mathbb R}^d$ is a bounded domain and $\phi$ satisfies the homogeneous Neumann
BC, i.e. $\left.\partial_{\bf n} \phi\right|_{\partial\Omega}=0$
with ${\bf n}$ the unit outward normal vector.
In this case, the ground state $\phi_g^\beta$  is defined the same as in (\ref{def:ground}) provided that
the set $S$ is replaced by $S=\{\phi\, |\, \|\phi\|_2^2:=\int_{\Omega} |\phi(\bx)|^2d\bx =1,\ E(\phi)<\infty,
\ \left.\partial_{\bf n} \phi\right|_{\partial\Omega}=0\}$,
and the first excited state $\phi_1^\beta$ and the eigenspace $W_1$ are defined similarly.


Similar to Lemma \ref{ring:ground} (with the proof omitted here for brevity), we have

\begin{lemma}\label{neumann:ground} For the ground state $\phi_g^\beta$,
 we have for $\beta\ge0$
\begin{align}
\label{eng:neumann_ground}
\phi_g^{\beta}(\mathbf{x})=\phi_g^0(\bx)\equiv\frac{1}{\sqrt{\int_{\Omega} 1\,d\bx}}:= \tilde A_0, \ \bx\in\overline{\Omega}; \quad E_g(\beta)=\frac{\tilde A_0^2}{2}\beta,
\ \mu_g(\beta)=\tilde A_0^2\beta.
\end{align}
\end{lemma}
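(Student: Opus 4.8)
The plan is to mirror the proof of Lemma~\ref{ring:ground} almost line for line, exploiting the fact that a constant function satisfies the homogeneous Neumann condition (its gradient vanishes identically, hence $\partial_{\bf n}\phi|_{\partial\Omega}=0$) exactly as it satisfies the periodic condition. Writing $\tilde A_0=1/\sqrt{\int_\Omega 1\,d\bx}=|\Omega|^{-1/2}$, I would first establish a variational lower bound for the energy. For any admissible $\phi\in S$ normalized by $\|\phi\|_2^2=1$, the Cauchy--Schwarz inequality gives
\be
1=\left(\int_\Omega|\phi|^2\,d\bx\right)^2\le\int_\Omega|\phi|^4\,d\bx\,\int_\Omega 1\,d\bx=\frac{1}{\tilde A_0^2}\int_\Omega|\phi|^4\,d\bx,
\ee
so that $\int_\Omega|\phi|^4\,d\bx\ge\tilde A_0^2$.

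Next, since $V(\bx)\equiv0$ and $\beta\ge0$, discarding the nonnegative kinetic term in \eqref{def:E} yields
\be
E(\phi)=\int_\Omega\left[\frac{1}{2}|\nabla\phi|^2+\frac{\beta}{2}|\phi|^4\right]d\bx\ge\frac{\beta}{2}\int_\Omega|\phi|^4\,d\bx\ge\frac{\beta}{2}\tilde A_0^2.
\ee
The constant $\phi\equiv\tilde A_0$ lies in $S$ --- it is normalized because $\tilde A_0^2|\Omega|=1$ and it trivially meets the Neumann condition --- and it attains this bound, since $E(\tilde A_0)=\frac{\beta}{2}\tilde A_0^4|\Omega|=\frac{\beta}{2}\tilde A_0^2$. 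Hence $\tilde A_0$ is a ground state and $E_g(\beta)=\frac{\tilde A_0^2}{2}\beta$. Reading off the chemical potential from \eqref{def:mu} then gives $\mu_g(\beta)=E_g(\beta)+\frac{\beta}{2}\int_\Omega\tilde A_0^4\,d\bx=\frac{\tilde A_0^2}{2}\beta+\frac{\tilde A_0^2}{2}\beta=\tilde A_0^2\beta$.

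The only genuinely substantive point --- and the reason this fails under Dirichlet data --- is the admissibility of the uniform state: for Neumann (as for periodic) boundary conditions the constant is an allowed competitor, which forces the minimizer to be spatially homogeneous. I expect no serious obstacle; to make the identification rigorous one need only observe that equality in Cauchy--Schwarz forces $|\phi|^2$ to be constant a.e.\ while the vanishing of the kinetic term forces $\phi$ to be constant, so $\phi\equiv\tilde A_0$ is the unique nonnegative minimizer, consistent with the uniqueness of the nonnegative ground state invoked earlier via \cite{Lieb,Bao2013}.
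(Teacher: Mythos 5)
Your proof is correct and is essentially the paper's own argument: the paper omits the proof of Lemma~\ref{neumann:ground}, stating only that it is ``similar to Lemma~\ref{ring:ground}'', and your Cauchy--Schwarz bound on $\int_\Omega|\phi|^4\,d\bx$, the discarding of the kinetic term, and the admissibility of the constant $\tilde A_0$ under Neumann data reproduce exactly that argument, with the equality-case discussion a welcome extra detail justifying $\phi_g^\beta\equiv\tilde A_0$.
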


However, for the first excited state, we first consider a special case
by taking $\Omega=\Omega_0$ and distinguish two different cases:
(i) non-degenerate case $d=1$ or $L_1>L_2$ when $d\ge2$ ($\Leftrightarrow {\rm dim}(W_1)=1$); and (ii) degenerate case $L_1=L_2$ and
$d\ge2$ ($\Leftrightarrow {\rm dim}(W_1)\ge2$).

\begin{lemma}\label{neumann:nondegenerate} 
Assume $\Omega=\Omega_0$ satisfying $d=1$ or $L_1>L_2$ when $d\ge2$, i.e.
non-degenerate case, we have

(i) in the weakly repulsive interaction regime, i.e. $0<\beta\ll1$,
\begin{align}
\label{eng:neumann_1st_weak}
&E_1(\beta)=\frac{\pi^2}{2L_1^2}+\frac{3A_0^2}{4}\beta+o(\beta),
\quad
\mu_1(\beta)=\frac{\pi^2}{2L_1^2}+\frac{3A_0^2}{2}\beta+o(\beta);
\end{align}

(ii) in the strongly repulsive interaction regime, i.e. $\beta\gg1$,
\bea
\label{chm:neumann_1st_strong}
\qquad\ \ E_1(\beta)=\frac{A_0^2}{2}\beta+\frac{4A_0}{3L_1}\beta^{1/2}+\frac{2}{L_1^2}+o(1),  \ \mu_1(\beta)=A_0^2\beta+\frac{2A_0}{L_1}
\beta^{1/2}+\frac{2}{L_1^2}+o(1).
\eea
\end{lemma}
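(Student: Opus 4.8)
The plan is to mirror the Dirichlet arguments of Lemmas \ref{box:ground_weak} and \ref{box:ground_strong}, replacing the sine modes by the cosine modes adapted to the Neumann boundary condition. First I would record that when $\beta=0$ the Neumann eigenfunctions on $\Omega_0$ are products $\prod_j\cos(k_j\pi x_j/L_j)$ with eigenvalues $\frac{\pi^2}{2}\sum_j k_j^2/L_j^2$; the ground mode is the constant $\tilde A_0=A_0$ and, under the nondegeneracy hypothesis $d=1$ or $L_1>L_2$, the unique (up to sign) first excited state is
$$\phi_1^0(\bx)=\sqrt{2}\,A_0\cos\!\left(\frac{\pi x_1}{L_1}\right),$$
constant in $x_2,\dots,x_d$, with eigenvalue $\frac{\pi^2}{2L_1^2}$. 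The strict inequality $L_1>L_2$ is precisely what makes $\frac{\pi^2}{2L_1^2}<\frac{\pi^2}{2L_2^2}$ and hence $\dim(W_1)=1$.

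For the weakly repulsive regime I would take $\phi_1^\beta\approx\phi_1^0$, exactly as in \eqref{box01}, and substitute into \eqref{def:E} and \eqref{def:mu}. The only integrals needed are $\int_0^{L_1}\cos^2=\frac{L_1}{2}$ and $\int_0^{L_1}\cos^4=\frac{3L_1}{8}$, which give $\frac{\beta}{2}\int_\Omega|\phi_1^0|^4=\frac{3A_0^2}{4}\beta$; adding the kinetic energy $\frac{\pi^2}{2L_1^2}$ and using $\mu=E+\frac{\beta}{2}\int|\phi|^4$ yields \eqref{eng:neumann_1st_weak}. Note that, unlike the Dirichlet case in Lemma \ref{box:ground_weak}, there is no $A_2$-type contribution, because the transverse directions sit in the zero (constant) mode.

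The substantive part is the strongly repulsive regime. Since the ground state stays constant (Lemma \ref{neumann:ground}) while $\phi_1^\beta$ must change sign once along $x_1$, I would use the matched/Thomas--Fermi ansatz of a single planar dark soliton (domain wall) centred at $x_1=L_1/2$,
$$\phi_1^\beta(\bx)\approx\sqrt{\frac{\mu_1}{\beta}}\,\tanh\!\left(\sqrt{\mu_1}\,(x_1-L_1/2)\right),$$
independent of $x_2,\dots,x_d$, with $\mu_1=\mu_1(\beta)$ fixed by \eqref{norm}. Two observations make this admissible: it solves \eqref{eq:eig} exactly along the $x_1$-line (the identity $\mathrm{sech}^2+\tanh^2\equiv1$ collapses the nonlinear term to $\mu_1\phi$), and — in contrast to the Dirichlet profile \eqref{def:TF_basic} — the homogeneous Neumann condition at $x_1=0,L_1$ is satisfied to exponential accuracy, since $\partial_{x_1}\tanh$ decays like $\mathrm{sech}^2(\sqrt{\mu_1}L_1/2)$, so no boundary layers appear. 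This absence of wall layers is exactly what makes the $\beta^{1/2}$ coefficient here smaller than its Dirichlet counterpart in Lemma \ref{box:ground_strong}.

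To finish, I would impose $\|\phi_1^\beta\|_2^2=1$. Using $\int_0^{L_1}\tanh^2(\sqrt{\mu_1}(x-L_1/2))\,dx=L_1-\frac{2}{\sqrt{\mu_1}}+o(\mu_1^{-1/2})$ together with $\prod_{j\ge2}L_j=(A_0^2L_1)^{-1}$, normalization reduces to $\mu_1-\frac{2}{L_1}\sqrt{\mu_1}=A_0^2\beta$, a quadratic in $\sqrt{\mu_1}$ whose large-$\beta$ root expands as $\mu_1=A_0^2\beta+\frac{2A_0}{L_1}\beta^{1/2}+\frac{2}{L_1^2}+o(1)$. The energy then follows from $E_1=\mu_1-\frac{\beta}{2}\int_\Omega|\phi_1^\beta|^4$, using $\int_0^{L_1}\tanh^4=L_1-\frac{8}{3\sqrt{\mu_1}}+o(\mu_1^{-1/2})$, which after inserting the expansion of $\mu_1$ gives \eqref{chm:neumann_1st_strong}. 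The main obstacle I anticipate is bookkeeping the $O(1)$ terms: the $\frac{2}{L_1^2}$ in $\mu_1$ emerges only after expanding $\sqrt{\mu_1}$ to order $\beta^{-1/2}$ in the quadratic, and the $O(1)$ term of $E_1$ is a cancellation of contributions at three different orders of $\mu_1$ ($\frac{2}{L_1^2}-\frac{4}{L_1^2}+\frac{4}{L_1^2}$), so the soliton integrals must be retained one order beyond their naive leading value, and one must check that the exponentially small Neumann defect and the neglected transverse structure do not contaminate these coefficients. The extension to $d=2,3$ is immediate, since the wall stays one-dimensional and the transverse directions enter only through the factor $\prod_{j\ge2}L_j$.
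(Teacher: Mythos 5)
Your proposal is correct and follows essentially the same route as the paper: in the weak regime the paper also approximates $\phi_1^\beta$ by the cosine mode $\sqrt{2}A_0\cos(\pi x_1/L_1)$, and in the strong regime it uses exactly the same matched-asymptotic ansatz $\sqrt{\mu_1/\beta}\,\tanh\bigl(\sqrt{\mu_1}\,(L_1/2-x_1)\bigr)$ with $\mu_1$ fixed by the normalization condition, presenting the 1D case and deferring the higher-dimensional extension and the detailed expansions (which you carry out explicitly, and correctly, including the cancellation producing the $2/L_1^2$ terms) to the references.
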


\begin{proof} Here we only present the proof in 1D case and
extension to high dimensions is similar to that in Lemma \ref{box:ground_strong}.
When $d=1$ and $\beta=0$, the first excited state can be taken as
$\phi_1^0(x)=\sqrt{2}A_0\cos(\pi x/L_1)$ for $x\in[0,L_1]$.
When $0<\beta\ll1$, we can approximate $\phi_1^\beta(x)$ by
$\phi_1^0(x)$, i.e.
\be
\label{sol:neumann_1st_weak}
\phi_1^{\beta}(x)\approx\phi_1^{0}(x)=
\sqrt{2}A_0\cos\left(\pi x/L_1\right), \qquad 0\le x\le L_1.
\ee
Plugging (\ref{sol:neumann_1st_weak}) into (\ref{def:E}) and (\ref{def:mu}) with $V(\bx)\equiv0$, we obtain
(\ref{eng:neumann_1st_weak}).
When $\beta\gg1$,  i.e. in strongly repulsive interaction regime,
the first excited state can be approximated via the matched asymptotic method shown in \cite{BaoL,LZBao} as
\be
\label{sol:neumann_1st_strong}
\phi_1^\beta(x)\approx\phi_1^{MA}(x)=
\sqrt{\frac{\mu_1^{MA}}{\beta}}\tanh\left(\sqrt{\mu_1^{MA}}
\left(\frac{L_1}{2}-x\right)\right), \qquad 0\le x\le L_1.
\ee
Substituting (\ref{sol:neumann_1st_strong}) into the normalization condition
\eqref{norm} and (\ref{def:mu}), we obtain
(\ref{chm:neumann_1st_strong}), while the detailed computation is omitted here for brevity \cite{Ruan}.
\end{proof}

\begin{lemma}\label{neumann:1st_degenerate} 
Assume $\Omega=\Omega_0$ satisfying $L_1=L_2:=L$ and $d\ge2$, i.e.
degenerate case,
we have

 (i) in the weakly repulsive interaction regime, i.e. $0\le \beta\ll1$,
\begin{align}
\label{eng:neumann_1st_weak_degenerate}
&E_1(\beta)=\frac{\pi^2}{2L^2}+\frac{5A_0^2}{8}\beta+o(\beta),
\quad
\mu_1(\beta)=\frac{\pi^2}{2L^2}+\frac{5A_0^2}{4}\beta+o(\beta);
\end{align}

(ii) in the strongly repulsive interaction regime, i.e. $\beta\gg1$, and $d=2$,
\begin{align}
\label{chm:neumann_1st_strong_degenerate}
E_1(\beta)=\frac{\beta}{2L^2}+\frac{\pi}{2L^2}\ln(\beta)+o(\ln(\beta)),
\
\mu_1(\beta)=\frac{\beta}{L^2}+\frac{\pi}{2L^2}\ln(\beta)+o(\ln(\beta)).
\end{align}
\end{lemma}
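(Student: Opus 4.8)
The plan is to treat the two regimes separately, mirroring the degenerate-case analyses already carried out for the Dirichlet box (Lemmas \ref{box:ground_weak_degenerate2} and \ref{box:ground_strong_degenerate}) and for the harmonic trap (Lemmas \ref{har:ground_weak_degenerate2} and \ref{har:ground_strong_degenerate}), but now using the Neumann eigenfunctions together with the explicit Neumann ground state from Lemma \ref{neumann:ground}. I present the computations for $d=2$ with $L_1=L_2=L$ (so $A_0=1/L$); the $3$D case with $L_1=L_2=L>L_3$ follows by the identical ansatz because the relevant eigenspace remains the two-dimensional span of the two degenerate cosine modes and the quartic integrals below pick up no extra factor of $d$.

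For part (i), the weak regime, I would first identify the degenerate eigenspace at $\beta=0$. For homogeneous Neumann BC on $\Omega_0$ the first excited states are built from cosines, $\varphi_1(\bx)=\sqrt{2}A_0\cos(\pi x_1/L)$ and $\varphi_2(\bx)=\sqrt{2}A_0\cos(\pi x_2/L)$, with eigenvalue $\pi^2/(2L^2)$ and $W_1=\mathrm{span}\{\varphi_1,\varphi_2\}$. I would then take the ansatz $\varphi_{a,b}=a\varphi_1+b\varphi_2$ with $|a|^2+|b|^2=1$ and minimize $E(\varphi_{a,b})$ from (\ref{def:E}). The kinetic part stays fixed at $\pi^2/(2L^2)$, while the quartic term reduces, using $\int_\Omega\varphi_1^4=\int_\Omega\varphi_2^4=\tfrac{3}{2}A_0^2$, $\int_\Omega\varphi_1^2\varphi_2^2=A_0^2$, and the vanishing of the odd cross integrals $\int_\Omega\varphi_1^3\varphi_2=\int_\Omega\varphi_1\varphi_2^3=0$, to a function of $|a|^2|b|^2$ and $\mathrm{Re}(a\bar b)$ alone. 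Minimizing shows the optimum is the vortex combination $a=\pm i b$ (e.g. $a=1/\sqrt2$, $b=i/\sqrt2$), which gives $\min\int_\Omega|\varphi_{a,b}|^4=\tfrac{5}{4}A_0^2$. Substituting this minimizer into (\ref{def:E}) and (\ref{def:mu}) produces the stated coefficients $\tfrac{5A_0^2}{8}\beta$ and $\tfrac{5A_0^2}{4}\beta$.

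For part (ii), the strong regime with $d=2$, I would assume, exactly as in Lemma \ref{box:ground_strong_degenerate}, that no band crossing occurs, so the first excited state remains the vortex branch for all $\beta>0$ and can be resolved by matched asymptotics when $\beta\gg1$. In the outer region the Neumann box has no boundary layers, since the ground state is the constant $A_0$ by Lemma \ref{neumann:ground}, so the modulus is the flat Thomas-Fermi plateau $\sqrt{\mu_1/\beta}$; in the inner region near the center $(L/2,L/2)$ it is a winding-number-one vortex $\sqrt{\mu_1/\beta}\,f(r)e^{i\theta}$ solving (\ref{eq:vortex}) with core profile (\ref{sol:vortex_approx}). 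This yields $\rho_1^\beta\approx(\mu_1/\beta)f_a^2(r)$. Imposing (\ref{norm}) produces the logarithm: since $\int_\Omega f_a^2=L^2-\int_\Omega(1+2\mu_1r^2)^{-1}\,d\bx$ and the core integral behaves like $\tfrac{\pi}{2\mu_1}\ln\mu_1+O(1)$, solving $\tfrac{\mu_1}{\beta}\int_\Omega f_a^2=1$ gives $\mu_1=\beta/L^2+\tfrac{\pi}{2L^2}\ln\beta+o(\ln\beta)$. Feeding $\rho_1^\beta$ into (\ref{def:mu}) and tracking leading and first subleading terms, the two $\ln\beta$ contributions in $\tfrac{\beta}{2}\int_\Omega\rho_1^2$ cancel, leaving $\tfrac{\beta}{2}\int_\Omega\rho_1^2=\tfrac{\beta}{2L^2}+o(\ln\beta)$; then $E_1=\mu_1-\tfrac{\beta}{2}\int_\Omega\rho_1^2$ inherits the logarithm from $\mu_1$ and gives the stated $E_1=\beta/(2L^2)+\tfrac{\pi}{2L^2}\ln\beta+o(\ln\beta)$.

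The main obstacle is the strong-interaction case. The cleanest difficulty is the no-band-crossing hypothesis, which is unproven (as acknowledged for the Dirichlet and harmonic analogues) and is what certifies that the vortex branch is truly the first excited state rather than merely a low-lying vortex state. Beyond that, the delicate bookkeeping lies in the logarithmic integral over the square $\Omega$ rather than a disk: one must justify replacing the domain by an effective disk to extract the factor $\tfrac{\pi}{2\mu_1}\ln\mu_1$ with the correct constant, control the $O(1)$ remainder, and then propagate this $O(1/\mu_1)$-level accuracy through both the normalization and the quartic energy so that the $\ln\beta$ terms cancel exactly in $E_1$. The weak-interaction case (i), by contrast, is entirely routine once the cosine eigenbasis and the vortex minimizer have been identified.
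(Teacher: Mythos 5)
Your proposal is correct and follows essentially the same route as the paper, which omits this proof and simply refers to the arguments of Lemmas \ref{box:ground_weak_degenerate2} and \ref{box:ground_strong_degenerate}: in the weak regime, minimization of $E(\varphi_{a,b})$ over the degenerate cosine eigenspace yielding the vortex combination $a=\pm ib$ with $\min\int_\Omega|\varphi_{a,b}|^4=\tfrac{5}{4}A_0^2$, and in the strong regime, a matched asymptotic vortex ansatz (here with a flat outer plateau, since the Neumann ground state is constant) whose normalization produces the $\tfrac{\pi}{2L^2}\ln\beta$ term. Your constants check out against \eqref{eng:neumann_1st_weak_degenerate}--\eqref{chm:neumann_1st_strong_degenerate}, and the caveats you flag (the no-band-crossing assumption and the $O(1)$ control in the logarithmic core integrals) are exactly the ones implicit in the paper's own treatment of the Dirichlet and harmonic degenerate cases.
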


\begin{proof}
The proof is similar to that for Lemmas \ref{box:ground_weak_degenerate2}\&\ref{box:ground_strong_degenerate} in the box potential case and thus it is omitted here for brevity \cite{Ruan}.
\end{proof}

Lemmas \ref{neumann:nondegenerate}\&\ref{neumann:1st_degenerate} implies the following proposition about the fundamental gaps.

\begin{proposition}
[For GPE on a bounded domain with homogeneous Neumann BC]\label{asym:neumann}
Assume $\Omega=\Omega_0$ and $V(\mathbf{x})\equiv 0$, we have

(i) if $d=1$ or $L_1>L_2$ when $d\ge2$, i.e. non-degenerate case,
\begin{align}
\delta_E(\beta)=
\begin{cases}
\frac{\pi^2}{2L_1^2}+\frac{A_0^2}{4}\beta+o(\beta),\\
\frac{4A_0}{3L_1}\beta^{1/2}+\frac{2}{L_1^2}+o(1),\\
\end{cases}
\delta_{\mu}(\beta)=
\begin{cases}
\frac{\pi^2}{2L_1^2}+\frac{A_0^2}{2}\beta+o(\beta),&0\le\beta\ll1,\\
\frac{2A_0}{L_1}\beta^{1/2}+\frac{2}{L_1^2}+o(1),&\beta\gg1;\\
\end{cases}
\end{align}

(ii) if $L_1=L_2:=L$, i.e. degenerate case, with $0\le \beta\ll1$ and $d\ge2$,
\be
\delta_E(\beta)=\frac{\pi^2}{2L^2}+\frac{\beta}{8L^2}+o(\beta),\quad
\delta_{\mu}(\beta)=\frac{\pi^2}{2L^2}+\frac{\beta}{4L^2}+o(\beta).
\ee

For the degenerate case with $\beta\gg1$ and $d=2$,
\be
\delta_E(\beta)=\frac{\pi}{2L^2}\ln(\beta)+o(\ln(\beta)),\
\delta_{\mu}(\beta)=\frac{\pi}{2L^2}\ln(\beta)+o(\ln(\beta)).
\ee
\end{proposition}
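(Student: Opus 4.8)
The plan is to obtain this proposition as an immediate corollary of Lemmas \ref{neumann:ground}, \ref{neumann:nondegenerate} and \ref{neumann:1st_degenerate}, by subtracting the ground-state quantities from those of the first excited state in each regime. Recalling the definitions in \eqref{gapp21}, we have $\delta_E(\beta) = E_1(\beta) - E_g(\beta)$ and $\delta_\mu(\beta) = \mu_1(\beta) - \mu_g(\beta)$, so the whole argument reduces to assembling the relevant expansions and carrying out the cancellations.

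First I would specialize the ground state. Since $\Omega = \Omega_0$, the normalization constant in \eqref{eng:neumann_ground} reduces to $\tilde A_0 = (\int_{\Omega_0} 1\,d\bx)^{-1/2} = (\prod_{j=1}^d L_j)^{-1/2} = A_0$, so Lemma \ref{neumann:ground} gives $E_g(\beta) = \frac{A_0^2}{2}\beta$ and $\mu_g(\beta) = A_0^2\beta$ for every $\beta \ge 0$. I note that these two expressions are exact, with no remainder term, because the constant state $\tilde A_0$ simultaneously minimizes the gradient energy and (via the Cauchy--Schwarz step used in \eqref{proof:const_int}--\eqref{engper789}) the quartic energy, hence is the exact Neumann ground state when $V \equiv 0$.

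Next, for the non-degenerate case I would subtract these ground-state values from \eqref{eng:neumann_1st_weak} in the weak regime and from \eqref{chm:neumann_1st_strong} in the strong regime. In the weak regime the $\beta$-linear coefficients combine as $\frac{3A_0^2}{4} - \frac{A_0^2}{2} = \frac{A_0^2}{4}$ for the energy and $\frac{3A_0^2}{2} - A_0^2 = \frac{A_0^2}{2}$ for the chemical potential, while the $\frac{\pi^2}{2L_1^2}$ offset survives; in the strong regime the leading terms $\frac{A_0^2}{2}\beta$ and $A_0^2\beta$ cancel exactly against $E_g$ and $\mu_g$, leaving the $\beta^{1/2}$ and $O(1)$ contributions of \eqref{chm:neumann_1st_strong} untouched. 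This produces part (i). For the degenerate case I would repeat the subtraction using \eqref{eng:neumann_1st_weak_degenerate} (weak) and \eqref{chm:neumann_1st_strong_degenerate} (strong, $d=2$): in the weak regime the coefficients become $\frac{5A_0^2}{8} - \frac{A_0^2}{2} = \frac{A_0^2}{8}$ and $\frac{5A_0^2}{4} - A_0^2 = \frac{A_0^2}{4}$, and invoking $L_1 = L_2 = L$ so that $A_0^2 = 1/L^2$ rewrites them as $\frac{1}{8L^2}$ and $\frac{1}{4L^2}$; in the strong $d=2$ regime $E_g = \frac{\beta}{2L^2}$ and $\mu_g = \frac{\beta}{L^2}$ cancel the leading linear-in-$\beta$ terms of \eqref{chm:neumann_1st_strong_degenerate}, leaving only the $\frac{\pi}{2L^2}\ln\beta$ term and lower order.

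Since every ingredient is already furnished by the three lemmas, there is essentially no analytic obstacle here; the proposition is purely arithmetic bookkeeping. The only points requiring care are the exact cancellation of the leading $\beta$-terms in the strong-interaction regimes and the two identifications $\tilde A_0 = A_0$ and $A_0^2 = 1/L^2$ (the latter in the degenerate $d=2$ setting), which are precisely what convert the $A_0$-dependent coefficients of the lemmas into the clean $L$-dependent constants displayed in the statement.
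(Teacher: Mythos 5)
Your proposal is correct and matches the paper's own route exactly: the paper offers no separate proof, simply asserting that Proposition \ref{asym:neumann} follows from Lemmas \ref{neumann:ground}, \ref{neumann:nondegenerate} and \ref{neumann:1st_degenerate} via precisely the subtractions and the identification $\tilde A_0 = A_0$ that you carry out. Your closing caveat is also well placed: the rewriting of $\frac{A_0^2}{8}\beta$ and $\frac{A_0^2}{4}\beta$ as $\frac{\beta}{8L^2}$ and $\frac{\beta}{4L^2}$ requires $A_0^2 = 1/L^2$, which holds in the degenerate case only for $d=2$ (or a cube in $d=3$), a point the paper glosses over when stating part (ii) for all $d\ge2$.
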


The above asymptotic results have been verified numerically \cite{Ruan},
which are omitted here to avoid this paper to be too long.
In addition, our numerical results suggest that both $\delta_E(\beta)$ and $\delta_\mu(\beta)$
are increasing functions for $\beta\ge0$ \cite{Ruan}.

Based on the above asymptotic results and numerical results not shown here for brevity \cite{Ruan}, we speculate the following gap conjecture.

\textbf{Gap conjecture}
(For GPE on a bounded domain with homogeneous Neumann BC)
Suppose $\Omega$ is a convex bounded domain
  and the external potential $V(\mathbf{x})$ is convex,
we speculate a gap conjecture for the fundamental gaps as
\begin{equation}\label{bgap4576}
\delta^{\infty}_E:=\inf_{\beta\ge0} \delta_E(\beta)\ge\frac{\pi^2}{2D^2},
\qquad\delta^{\infty}_{\mu}:=\inf_{\beta\ge0} \delta_\mu(\beta)\ge\frac{\pi^2}{2D^2}.
\end{equation}

\section{Conclusions}\label{conclusion}
Fundamental gaps in energy and chemical potential of the Gross-Pitaevskii equation (GPE) with repulsive interaction were obtained asymptotically and computed numerically
for different trapping potentials and a gap conjecture on fundamental gaps
was formulated. In obtaining the approximation of the first excited state of GPE and the fundamental gaps, two different cases were identified in high dimensions ($d\ge2$), i.e. non-degenerate and degenerate cases which correspond to the dimensions
${\rm dim}(W_1)=1$ and ${\rm dim}(W_1)\ge2$, respectively,
with $W_1$ the eigenspace associated to the second
smallest eigenvalue of the corresponding Schr\"{o}dinger
operator $H:=-\frac{1}{2}\Delta +V(\bx)$.
Our asymptotic results were confirmed by numerical results.
Rigorous mathematical justification for the fundamental gaps
obtained asymptotically and numerically for the GPE in this paper is
on-going. Finally, we remark here that the fundamental gaps in the degenerate
case are the same as those in the nondegenerate case when one requires the solution $\phi$ of \eqref{eq:eig} to be real-valued function instead of complex-valued function.



\end{document}